\keywords{call-by-need, classical logic, duality, polarity, sequent calculus, type isomorphism}
\newcommand{\C}{\mathcal{C}}
\newcommand{\D}{\mathcal{D}}
\newcommand{\citep}{\cite}
\newcommand{\citet}{\cite}
\newcommand{\paul}[1]{}
\newcommand{\zena}[1]{}
\newcommand{\hi}[2][black!15]{{\setlength{\fboxsep}{1pt}\kern-1pt{\colorbox{#1}{$#2$}}}}
\begin{document}

\title{Compiling With Classical Connectives}
\titlecomment{A previous version of this work appeared in \cite{DownenAriola18BP}.}

\author[P.~Downen]{Paul Downen}	
\address{University of Oregon}	
\email{\{pdownen,ariola\}@cs.uoregon.edu}  

\author[Z.M.~Ariola]{Zena M. Ariola}	






\begin{abstract}
  The study of polarity in computation has revealed that an ``ideal''
  programming language combines both call-by-value and call-by-name evaluation;
  the two calling conventions are each ideal for half the types in a programming
  language.  But this binary choice leaves out call-by-need which is used in
  practice to implement lazy-by-default languages like Haskell.  We show how the
  notion of polarity can be extended beyond the value/name dichotomy to include
  call-by-need by adding a mechanism for sharing which is enough to compile a
  Haskell-like functional language with user-defined types.  The key to
  capturing sharing in this mixed-evaluation setting is to generalize the usual
  notion of polarity ``shifts:'' rather than just two shifts (between positive
  and negative) we have a family of four dual shifts.

  We expand on this idea of logical duality---``and'' is dual to ``or;'' proof
  is dual to refutation---for the purpose of compiling a variety of types.
  Based on a general notion of data and {\co}data, we show how classical
  connectives can be used to encode a wide range of built-in and user-defined
  types.  In contrast with an intuitionistic logic corresponding to pure
  functional programming, these classical connectives bring more of the pleasant
  symmetries of classical logic to the computationally-relevant, constructive
  setting.  In particular, an involutive pair of negations bridges the gulf
  between the wide-spread notions of parametric polymorphism and abstract data
  types in programming languages.  To complete the study of duality in
  compilation, we also consider the dual to call-by-need evaluation, which
  shares the computation within the control flow of a program instead of
  computation within the information flow.
\end{abstract}

\maketitle

\section{Introduction}
\label{sec:introduction}

Finding a universal intermediate language suitable for compiling and optimizing
both strict and lazy programs has been a long-sought holy grail for compiler
writers.  First, there was continuation-passing style (CPS)
\cite{Plotkin1975CBNCBVLC,Appel1992CWC}, which hard-codes the evaluation
strategy into the program itself.  In CPS, all the details of an evaluation
strategy can be understood by just looking at the syntax of the program.
Second, there were monadic languages \cite{Moggi1989CLM,peytonjones1997henk},
that abstract away from the concrete continuation-passing into a general monadic
sequencing operation. Besides moving away from continuations, making them an
optional rather than mandatory part of sequencing, they make it easier to
incorporate other computational effects by picking the appropriate monad for
those effects.  Third, there were adjunctive languages
\cite{Levy2001PhD,Zeilberger2009PhD,MunchMaccagnoni2013PhD}, as seen in
polarized logic and call-by-push-value $\lambda$-calculus, that mix both
call-by-name and -value evaluation inside a single program. Like the monadic
approach, adjunctive languages make evaluation order explicit within the terms
and types of a program, and can easily accommodate effects.  However, adjunctive
languages also enable more reasoning principles, by keeping the advantages of
inductive call-by-value data types, as seen in their denotational semantics.
For example, the denotation of a list is just a list of values, not a list of
values interspersed with computations that might diverge or cause side effects.

Each of these developments have focused only on call-by-value and -name
evaluation, but there are other evaluation strategies out there.  For example,
to efficiently implement laziness, the Glasgow Haskell Compiler (GHC) uses a
core intermediate language which is call-by-need
\cite{AMOFW1995CBNLC,AriolaFelleisen1997CBNLC} instead of call-by-name: the
computation of named expressions is shared throughout the lifetime of their
result, so that they need not be re-evaluated again.  This may be seen as merely
an optimization of call-by-name, but it is one that has a profound impact on the
other optimizations the compiler can do.  For example, full extensionality of
functions (\ie the $\eta$ law) does not apply in general, due to issues
involving divergence and evaluation order.  Furthermore, call-by-need is not
just a mere optimization but a full-fledged language choice when effects are
introduced \cite{AHS2011CCBND,MM2019ECBPV}: call-by-need and -name are
observationally different.  This difference may not matter for pure functional
programs, but even there, effects \emph{become} important during compilation.
For example, it is beneficial to use join points \cite{MDAPJ2017CWC}, which is a
limited form of jump or \emph{goto} statement, to optimize pure functional
programs.  So it seems like the quest for a common intermediate language for
strict and lazy languages that can accommodate effects is still ongoing.

To explore the space of intermediate languages, we use a dual sequent calculus
framework which provides a machine-like language suitable for representing
functional programs \cite{DMAPJ2016SCCIL}.  We begin by reviewing this general
framework; specifically:
\begin{itemize}
\item (\Cref{sec:intro-sequent}) We introduce the syntax and reduction theory of
  $\lmtm$---a term assignment for Gentzen's sequent calculus LK
  \cite{Gentzen1935UULS1}.  We discuss the loss of confluence and
  extensionality, and how to solve both issues by imposing a global
  \emph{evaluation strategy}.
\item (\Cref{sec:intro-polarity}) As an alternative to a global evaluation
  strategy, we discuss a local solution to restore both confluence and
  extensionality using \emph{polarity}.
\item (\Cref{sec:CBVCBNintoL}) We show how polarity \emph{subsumes} a global
  evaluation strategy through embeddings of call-by-name and call-by-value
  calculi into the polarized system, and discuss how a different formulation of
  the polarity shifting connectives can impact the directness and efficiency of
  the compiled code.
\item (\Cref{sec:intro-need}) To finish the review, we illustrate how sharing
  can be captured in the sequent calculus through \emph{classical call-by-need}
  evaluation and its dual.
\end{itemize}
We then proceed with the goal of integrating these different evaluation
strategies together into one cohesive, core calculus. We consider the following
questions involving evaluation order and types:
\begin{itemize}
\item How can all four evaluation strategies (call-by-value, -name, -need, and
  -{\co}need) be combined in the same calculus?  What \emph{shifts} are needed
  to convert between strategies?
\item What core connectives are needed to serve as a compile target?  Compiling
  data types is routine, but what is needed to compile \emph{{\co}data types}
  \cite{Hagino1987TLCWCTC}?
\item What is the general notion of user-defined types
  that can encompass the connectives of both intuitionistic and classical logic?
\item How can user-defined data and {\co}data types
  be compiled into the connectives of the core language, while making sure that
  the compiled program reflects the full semantics of the source program?
\item How can we apply the global compilation
  technique in a more local manner instead, as an encoding used for
  optimization?  What is the appropriate notion of correctness to ensure that
  the encoding has exactly the same semantics as the original program, even
  though the types have changed and may be abstracted and hidden from certain
  parts of the program?  How do we ensure that localized applications of the
  encoding is robust, even in the presence of computational effects?
\end{itemize}
This paper answers each of these questions with the following contributions:
\begin{itemize}
\item (\Cref{sec:polarizingCBNeed}) We introduce a family of four shifts,
  centered around the fundamental call-by-value and call-by-name evaluation
  strategies, which are capable of correctly compiling the sharing expressed by
  call-by-(\co)need.
\item (\Cref{sec:dual-core}) We give a dual core calculus, System $\DUAL$, that
  has a fixed, finite set of core connectives and can mix all of the above four
  evaluation strategies within one program.
\item (\cref{sec:dual-ext}) We then extend that core calculus with user-defined
  data and {\co}data types that is fully dual to one another.
\item (\cref{sec:dual-compile}) Our notion of user-defined types allows for both
  parametric polymorphism (in the case of {\co}data types) and a form of
  abstract data types (\ie existentially quantified data types), and we show how
  those defined types can be compiled into the core connectives.
\item (\cref{sec:encoding-type-iso}) Finally, we show that this encoding is
  correct using \emph{type isomorphisms}, which serve as the dual purpose of
  establishing a correspondence between types and their encodings, as well as
  illustrating the strong isomorphisms of the core $\DUAL$ connectives that
  still hold despite the possibility of side effects.
\end{itemize}

\section{Computation in the Classical Sequent Calculus:
  \texorpdfstring{$\lmtm$}{lambda-mu-tilde-mu}}
\label{sec:intro-sequent}

In the realm of logic, Gentzen's sequent calculus \cite{Gentzen1935UULS1}
provides a clear window to view the symmetry and duality of classical deduction.
And through the Curry-Howard correspondence \cite{Wadler2015PAT}, the sequent
calculus can be seen as a programming language for bringing out many symmetries
that are otherwise hidden in computation \cite{DownenAriola2018CCLSC}.  For
example, there are dual calling conventions like call-by-value versus
call-by-name \cite{CurienHerbelin2000DC,Wadler2003CBVDCBN} and dual programming
constructs like functions versus structures
\cite{Zeilberger2008OUD,MunchMaccagnoni2009FCR}.

One of the ways this computational duality is achieved is by recognizing the two
diametrically opposed forces at the heart of every program: the production and
the consumption of information.  These two forces are formally given a name in
the syntax of the sequent calculus: \emph{terms} (denoted by $v$) produce an
output whereas \emph{{\co}terms} (denoted by $e$) consume an input.  For
example, a boolean value and a function (written as $\fn x v$ in the
$\lambda$-calculus) are both terms since they produce a result, whereas an
if-then-else construct or a call stack (written as $\app v e$) are both
{\co}terms since they are methods for using an input.

From this viewpoint, computation is modeled as the interaction between a
producer and a consumer in the composition $\cut{v}{e}$ called a \emph{command}
(and denoted by $c$).  Operationally, a command represents the current state of
an abstract machine, where the term $v$ acts as program code and the {\co}term
$e$ acts as a continuation.  In this way, the sequent calculus serves as a
high-level language for abstract machines \cite{Ariola2009SCAM}.  However, this language immediately
reveals the tension between producers and consumers in computation.

\subsection{The fundamental dilemma of computation}
\label{sec:dilemma}

\begin{figure}
\centering
\begin{align*}
  c &::= \cut{v}{e}
  \\
  v &::= x \Alt \outp \alpha c \Alt \fnc x \alpha c \Alt \inl v \Alt \inr v
  \\
  e &::= \alpha \Alt \inp x c \Alt \app v e \Alt \casesum{x}{c_1}{y}{c_2}
\end{align*}
\begin{align*}
  \rewriterule{\betamu}
  {
    \cut{\outp\alpha c}{e} &\red c\subst{\alpha}{e}
  }
  &
  \rewriterule{\betar[\to]}
  {
    \cut{\fnc x \alpha c}{\app v e} &\red c\subs{\asub{x}{v},\asub{\alpha}{e}}
  }
  \\
  \rewriterule{\betatmu}
  {
    \cut{v}{\inp x c} &\red c\subst{x}{v}
  }
  &
  \rewriterule{\betar[\oplus]}
  {
    \cut{\inj{i}v}{\casesum{x_1}{c_1}{x_2}{c_2}} &\red c_i\subst{x_i}{v}
  }
\end{align*}
\caption{The $\lmtm$-calculus with sums.}
\label{fig:basic-lmtm}
\end{figure}

To see why there is a conflict between producers and consumers, consider the
small sequent-based language $\lmtm$ in \cref{fig:basic-lmtm}.  There are
\emph{variables} $x$ representing an unknown term and symmetrically
\emph{{\co}variables} representing an unknown {\co}term.  These two identifiers
can be bound by $\tmu$- and $\mu$-abstractions, respectively.  The \emph{input
  abstraction} $\inp x c$ binds its input to the variable $x$ before running the
command $c$, and is analogous to the context $\Let x = \hole \In c$ where
$\hole$ denotes the hole in which the input is placed.  Dually, the \emph{output
  abstraction} $\outp \alpha c$ binds its partner {\co}term to $\alpha$ before
running $c$.  This can be viewed operationally as ``capturing'' the current
continuation as $\alpha$ analogous to a control operator or the $\mu$s of
Parigot's $\lambda\mu$-calculus \cite{Parigot1992LMC}.

We also have terms and {\co}terms for interacting with specific types of
information.  The \emph{call stack} $\app v e$ is a {\co}term describing a
function call with the argument $v$ such that the returned result will be
consumed by $e$.  Call stacks match with a \emph{function abstraction}
$\fnc x \alpha c$ which binds the argument and return continuation to $x$ and
$\alpha$, respectively, before running the command $c$.  Values of a sum type
can be introduced via one of the two \emph{injections} $\inl v$ or $\inr v$
denoting which side of the sum was chosen.  Injections match with a \emph{case
  abstraction} $\casesum{x}{c_1}{y}{c_2}$ which checks the tag of the injection
to determine which of $c_1$ or $c_2$ to run.

The operational behavior of this small sequent calculus can be given in terms of
substitution (similar to the $\lambda$-calculus) as shown in
\cref{fig:basic-lmtm}.  For now consider just the $\betamu$ and $\betatmu$
reduction rules.  These two rules are perfectly symmetric, but already they pose
a serious problem: they are completely non-deterministic!  To see why, consider
the following critical pair where both $\betamu$ and $\betatmu$ can fire:
\begin{align*}
  c_0\subst{\alpha}{\inp x c_1}
  \unred[\betamu]
  \cut{\outp\alpha c_0}{\inp x c_1}
  &\red[\betatmu]
  c_1\subst{x}{\outp \alpha c_0}
\end{align*}
There is no reason the two possible reducts must be related.  In fact, in the
special case where $\alpha$ and $x$ are not used, like in the command
$\cut{\outp\delta\cut{x}{\alpha}}{\inp x \cut{y}{\beta}}$, then we can step to
two arbitrarily different results:
\begin{align*}
  \cut{x}{\alpha}
  \unred[\betamu]
  \cut{\outp\delta \cut{x}{\alpha}}{\inp z \cut{y}{\beta}}
  &\red[\betatmu]
  \cut{y}{\beta}
\end{align*}
So this calculus is not just non-deterministic, but also non-confluent (in other
words, the starting command above reduces to both $\cut{x}{\alpha}$ and
$\cut{y}{\beta}$, but there is no common $c'$ such that
$\cut{x}{\alpha} \reds c' \unreds \cut{y}{\beta}$).

\subsection{Global confluence solution: Evaluation strategy}

The loss of confluence is a large departure from similar models of computation
like the $\lambda$-calculus, and may not be desirable.  In Curien and Herbelin's
$\lmtm$-calculus \cite{CurienHerbelin2000DC}, determinism was restored by
imposing a discipline onto reduction which prioritized one side over the other:
\begin{quote}
  Call-by-value consists in giving priority to the $\mu$ redexes,
   while call-by-name gives priority
  to the $\tmu$ redexes.
\end{quote}
The difference between call-by-value and -name reduction can be seen in
$\lambda$-calculus terms like $f~(g~1)$: in call-by-value the call $g~1$ should
be evaluated first before passing the result to $f$, but in call-by-name the
call to $f$ should be evaluated first by passing $g~1$ unevaluated as its
argument.  This is analogous to the sequent calculus command
$\cut{\outp\beta\cut{g}{\app 1 \beta}}{\inp x \cut{f}{\app x \alpha}}$: in
call-by-value the $\mu$ should be given priority which causes $g$ to be called
first, whereas in call-by-name the $\tmu$ should be given priority which causes
$f$ to be called first.  Choosing a priority between $\mu$ and $\tmu$
effectively restricts the notion of binding and substitution for variables and
{\co}variables.  Prioritizing $\betamu$ reductions means that $\mu$-abstractions
are not substitutable, whereas prioritizing $\betatmu$ means that
$\tmu$-abstractions are not substitutable.

\begin{figure}
\centering

Sub-syntax and call-by-value reduction rules of $\lmtmQ$:
\begin{align*}
  c &::= \cut{v}{e}
  \\
  v &::= V \Alt \outp \alpha c
  &
  V &::= x
    \Alt \fnc x \alpha c
    \Alt \inl{V}
    \Alt \inr{V}
  \\
  e &::= \alpha \Alt \inp x c
    \Alt \app{V}{e}
    \Alt \casesum{x}{c_1}{y}{c_2}
\end{align*}
\begin{align*}
  \rewriterule{\betamu[\CBV]}
  {
    \cut{\outp\alpha c}{e} &\red c\subst{\alpha}{e}
  }
  &
  \rewriterule{\betar[\to][\CBV]}
  {
    \cut{\fnc x \alpha c}{\app{V}{e}}
    &\red
    c\subs{\asub{x}{V},\asub{\alpha}{e}}
  }
  \\
  \rewriterule{\betatmu[\CBV]}
  {
    \cut{V}{\inp x c} &\red c\subst{x}{V}
  }
  &
  \rewriterule{\betar[\oplus][\CBV]}
  {
    \cut{\inj{i}{V}}{\casesum{x_1}{c_1}{x_2}{c_2}}
    &\red
    c_i\subs{\asub{x_i}{V}}
  }
\end{align*}

Sub-syntax and call-by-name reduction rules of $\lmtmT$:
\begin{align*}
  c &::= \cut{v}{e}
  \\
  v &::= x
    \Alt \outp \alpha c
    \Alt \fnc x \alpha c
    \Alt \inl{v} \Alt \inr{v}
  \\
  e &::= E \Alt \inp x c
  &
  E &::= \alpha
    \Alt \app{v}{E}
    \Alt \casesum{x}{c_1}{y}{c_2}
\end{align*}
\begin{align*}
  \rewriterule{\betamu[\CBN]}
  {
    \cut{\outp\alpha c}{E} &\red c\subst{\alpha}{E}
  }
  &
  \rewriterule{\betar[\to][\CBN]}
  {
    \cut{\fnc x \alpha c}{\app{v}{E}}
    &\red
    c\subs{\asub{x}{v},\asub{\alpha}{E}}
  }
  \\
  \rewriterule{\betatmu[\CBN]}
  {
    \cut{v}{\inp x c} &\red c\subst{x}{v}
  }
  &
  \rewriterule{\betar[\oplus][\CBN]}
  {
    \cut{\inj{i}{v}}{\casesum{x_1}{c_1}{x_2}{c_2}}
    &\red
    c_i\subs{\asub{x_i}{v}}
  }
\end{align*}

\caption{The call-by-value $\lmtmQ$ and call-by-name $\lmtmT$ sub-calculi.}
\label{fig:lmtmQ}
\label{fig:lmtmT}
\end{figure}

In call-by-value, variables stand for a subset of terms called \emph{values}
(denoted by $V$), corresponding to the requirement that arguments are evaluated
before being bound to the parameter of a function.  The call-by-value evaluation
strategy is captured by the restriction of the $\lmtm$-calculus called $\lmtmQ$
\cite{CurienHerbelin2000DC}, which introduces a syntax for values as shown in
\cref{fig:lmtmQ}.  Notice how, in each reduction rule, variables are only ever
replaced with values instead of general terms.  In a command like
$\cut{\outp\alpha c_0}{\inp x c_1}$, only the $\betamu[\CBV]$ rule applies
because $\outp\alpha c_0$ is not a value.  To support this restriction on
reduction, $\lmtmQ$ uses a sub-syntax of $\lmtm$: only values may be pushed
onto a call stack ($\app V e$) and only values may be injected into a sum
($\inj{i} V$).  This restriction ensures that the $\betar[\to][\CBV]$ and
$\betar[\oplus][\CBV]$ rules do not get stuck; for example,
$\cut{\fnc x \alpha c}{(\app{\outp\beta c')}{e}}$ is outside the $\lmtmQ$
sub-syntax and is not a $\betar[\to][\CBV]$ redex.

Call-by-name evaluation in the sequent calculus follows a dual restriction to
call-by-value: {\co}variables stand for a subset of {\co}terms referred to as
\emph{{\co}values} (denoted by $E$).  This corresponds to the intuition that
terms are only ever evaluated (via $\betamu$ reduction) when their results are
needed.  The call-by-name evaluation strategy is captured by the restriction of
the $\lmtm$-calculus called $\lmtmT$ \cite{CurienHerbelin2000DC} with an
explicit syntax of {\co}values as shown in \cref{fig:lmtmT}.  Here, the
reduction rules only ever replace {\co}variables with {\co}values, which
analogously forces only the $\betatmu[\CBN]$ rule to apply in the command
$\cut{\outp\alpha c_0}{\inp x c_1}$.  Likewise, $\lmtmT$ uses a different
sub-syntax---the tail of all call stacks must be a {\co}value ($\app v E$)---to
prevent the $\betar[\to][\CBN]$ rule from getting stuck; for example,
$\cut{\fnc x \alpha c}{\app v {(\inp x c')}}$ is outside the $\lmtmT$
sub-syntax and is not a $\betar[\to][\CBN]$ redex.

\subsection{The impact of evaluation strategy on extensionality}

Similar to the $\lambda$-calculus, the above $\beta$ rules explain how to
evaluate commands in the sequent calculus (according to either call-by-value or
-name) to get the result of a program.  But to reason about programs, the
$\lambda$-calculus goes beyond just $\beta$.  The $\eta$ law $\fn x f ~ x = f$
captures a notion of extensionality for functions: two functions are the same if
they give the same output for every input.  This $\eta$ law can be symmetrically
rephrased as follows for both function and sum types in the sequent calculus:
\begin{align*}
  \rewriterule{\etar[\to]}
  {
    \fnc x \alpha \cut{v}{\app x \alpha}
    &\red v
    &
    (x &\notin \FV(v))
  }
  \\
  \rewriterule{\etar[\oplus]}
  {
    \casesum{x}{\cut{\inl x}{e}}{y}{\cut{\inr y}{e}}
    &\red
    e
    &
    (\alpha &\notin \FV(e))
  }
\end{align*}
These two rules state that matching on an input or output and then rebuilding it
exactly as it was is the same thing as doing nothing.

However, it is not always safe to just apply these two $\eta$ laws as stated
above.  In fact, when used incorrectly, they can re-introduce the
non-determinism that we were seeking to avoid!  For example, we have an
analogous critical pair in the call-by-value $\lmtmQ$-calculus using the
$\etar[\to]$ rule to derive two unrelated results (where $x$, $\alpha$, and
$\delta$ are all unused):
\begin{align*}
  \cut{\fnc x \alpha \cut{\outp\delta c_0}{\app x \alpha}}{\inp x c_1}
  &\red[{\betatmu[\CBV]}]
  c_1
  \\
  \cut{\fnc x \alpha \cut{\outp\delta c_0}{\app x \alpha}}{\inp x c_1}
  &\red[{\etar[\to]}]
  \cut{\outp\delta c_0}{\inp x c_1}
  \red[{\betamu[\CBV]}]
  c_0
\end{align*}
Therefore, the full $\etar[\to]$ extensionality rule for functions is not sound
in call-by-value.  This is not just a problem with functions or call-by-value
reduction; a similar critical pair can be constructed in the call-by-name
$\lmtmT$ using the $\etar[\oplus]$ rule (where $\delta$ and $z$ are unused):
\begin{align*}
  \cut
  {\outp\delta c_0}
  {\casesum{x}{\cut{\inl x}{\inp z c_1}}{y}{\cut{\inr y}{\inp z c_1}}}
  &\red[{\betamu[\CBN]}]
  c_0
  \\
  \cut
  {\outp\delta c_0}
  {\casesum{x}{\cut{\inl x}{\inp z c_1}}{y}{\cut{\inr y}{\inp z c_1}}}
  &\red[{\etar[\oplus]}]
  \cut
  {\outp\delta c_0}
  {\inp z c_1}
  \red[{\betatmu[\CBN]}]
  c_1
\end{align*}
So the soundness of extensional reasoning depends on the evaluation strategy of
the language.%
\footnote{Analogous counter-examples appear in the $\lambda$-calculus using a
  non-terminating term (denoted by $\Omega$).  The instance
  $\fn x \Omega ~ x = \Omega$ of the $\eta$ law does not hold in the
  call-by-value $\lambda$-calculus.  Similarly, the extensionality of sum types
  in the call-by-value $\lambda$-calculus is captured by
  \begin{math}
    \Case v \Of \{\inl x \gives f ~ (\inl x) \mid \inr y \gives f ~ (\inr y)\}
    =
    f ~ v
    ,
  \end{math}
  but this equality does not hold in the call-by-name $\lambda$-calculus
  when $v$ is $\Omega$ and $f$ discards its input.}

\section{Polarization: \texorpdfstring{$\lmtmP$}{System L}}
\label{sec:intro-polarity}

Evaluation strategy can have an impact on the reasoning principles of certain
types: the full $\eta$ law for functions is only sound with call-by-name
evaluation, whereas the full $\eta$ law for sums is only sound with
call-by-value evaluation.  So does that mean one or the other is doomed to be
hamstrung by our inevitable choice?  No!  Instead of picking just one evaluation
strategy for an entire program, we can individually pick and choose which one to
apply based on the type of information we are dealing with.  That way, we can
always arrange that the full $\eta$ laws apply for every type and maximize our
ability for extensional reasoning.

\subsection{Local confluence solution: polarity of a command}

\begin{figure}
\centering

Sub-syntax with hereditary weak-head normal terms ($W$) and forcing {\co}terms
($F$):
\begin{align*}
  c &::= \cut{v}[\strat{S}]{e}
  &
  \strat{S} &::= \CBV \mid \CBN
  \\
  v &::= W \Alt \outp \alpha c
  &
  W &::= x
    \Alt \fnc x \alpha c
    \Alt \inl{W}
    \Alt \inr{W}
  \\
  e &::= F \Alt \inp x c
  &
  F &::= \alpha
    \Alt \app{W}{F}
    \Alt \casesum{x}{c_1}{y}{c_2}
\end{align*}

Definitions of values and {\co}values in call-by-value ($V_\CBV, E_\CBV$) and
call-by-name ($V_\CBN, E_\CBN$):
\begin{align*}
  V_\CBV &::= W & E_\CBV &::= e
  &
  V_\CBN &::= v & E_\CBN &::= F
\end{align*}

Hybrid call-by-value and call-by-name reduction rules:
\begin{align*}
  \rewriterule{\betamu[\strat{S}]}
  {
    \cut{\outp\alpha c}[\strat{S}]{E_{\strat{S}}}
    &\red
    c\subst{\alpha}{E_{\strat{S}}}
  }
  &
  \rewriterule{\betar[\to]}
  {
    \cut{\fnc x \alpha c}[\CBN]{\app{W}{F}}
    &\red
    c\subs{\asub{x}{W},\asub{\alpha}{F}}
  }
  \\
  \rewriterule{\betatmu[\strat{S}]}
  {
    \cut{V_{\strat{S}}}[\strat{S}]{\inp x c} &\red c\subst{x}{V_{\strat{S}}}
  }
  &
  \rewriterule{\betar[\oplus]}
  {
    \cut{\inj{i}{W}}[\CBV]{\casesum{x_1}{c_1}{x_2}{c_2}}
    &\red
    c_i\subs{\asub{x_i}{W}}
  }
\end{align*}

\caption{The polarized $\lmtm$ sub-calculus: $\lmtmP$.}
\label{fig:lmtmP}
\end{figure}

Since we will have multiple options for evaluation strategy within the same
program, we have to decide which one to use at each reduction.  This
decision---call-by-value versus call-by-name---might be different at each step
but should be statically determined before evaluation, so it must be made
apparent in the program itself.  And since commands are the only syntactic
expressions which can reduce, we only need to know what happens in $\cut{v}{e}$:
should the interaction between $v$ and $e$ be evaluated according to the
call-by-value or call-by-name priority?  We can make this decision manifest in
the syntax of programs by annotating each command with the sign $\strat{S}$
denoting its polarity (positive or negative) as $\cut{v}[\strat{S}]{e}$.  This
way, $\strat{S}$ will tell us whether to use the call-by-value priority (for
positive $\strat{S} = \CBV$) or the call-by-name priority (for negative
$\strat{S} = \CBN$).  We can formalize this idea of a polarized sequent
calculus, called $\lmtmP$, with the grammar in \cref{fig:lmtmP}.  Note that, as
far as function and sum types are concerned, this sub-syntax lies in the
intersection of both $\lmtmT$ and $\lmtmQ$.  In particular, we have the notion
of a \emph{weak-head normal term} $W$ similar to the call-by-value
$\lmtmQ$-calculus' values, and a \emph{forcing {\co}term} $F$ similar to the
call-by-name $\lmtmT$ calculus' {\co}values.  Furthermore, all of the
restrictions imposed by $\lmtmT$ and $\lmtmQ$ are satisfied; injections
$\Inj{i}$ can only be applied to weak-head normal terms $W$, and call stacks can
only contain a $W$ and an $F$.  These restrictions can be summarized by the
following rule of thumb: constructors and call stacks only apply to terms that
immediately return values and {\co}terms that immediately force their output.

Now, consider the semantics for reducing commands of the above polarized
$\lmtmP$.  Recall that previously we characterized the two different priorities
between $\mu$ and $\tmu$ via the notion of substitutability: which values can
replace variables and which {\co}values can replace {\co}variables.  The two
different priorities for call-by-value ($\CBV$) and call-by-name ($\CBN$) can
now coexist in the same language by the definition of positive and negative
values and {\co}values in \cref{fig:lmtmP}.  With this definition of polarized
(\co)values, we can now give generic $\betamu$ and $\betatmu$ reduction rules
that perform the correct operation for any given sign $\strat{S}$ of a command.
The other rules for reducing function calls and case analysis are similar as
before.

The polarization of a command offers a local solution to the confluence problem
discussed in the previous section.  For example, the outermost command in
$\cut{\outp\delta \cut{x}{\alpha}}{\inp z \cut{y}{\beta}}$ can either be
positive, as in $\cut{\outp\delta \cut{x}{\alpha}}[\CBV]{\inp z \cut{y}{\beta}}$
which allows only the call-by-value reduction to $\cut{x}{\alpha}$, or negative,
as in $\cut{\outp\delta \cut{x}{\alpha}}[\CBN]{\inp z \cut{y}{\beta}}$ which
allows only the call-by-name reduction to $\cut{y}{\beta}$.  But it cannot be
both.  This forces each command to decide which of the two priorities to use,
resolving the critical pair in one direction or the other like so:
\begin{align*}
  \cut{x}{\alpha}
  \unred[{\betamu[\CBV]}]
  \cut{\outp\delta \cut{x}{\alpha}}[\CBV]{\inp z \cut{y}{\beta}}
  &\not\red[{\betatmu[\CBV]}]
  \\
  \not\unred[{\betamu[\CBN]}]
  \cut{\outp\delta \cut{x}{\alpha}}[\CBN]{\inp z \cut{y}{\beta}}
  &\red[{\betatmu[\CBN]}]
  \cut{y}{\beta}
\end{align*}

However, we must take care that these polarity decisions are made consistently
throughout the program.  For example, we could have the following critical pair,
which relies on a misalignment between the original occurrence of
$\outp\delta c_0$ in a $\CBN$ command and its second reference through the bound
variable $x$ in a $\CBV$ command:
\begin{align*}
  c_1
  \unred[{\betatmu[\CBN]}]
  \cut{\outp\delta c_0}[\CBN]{\inp x c_1\subst{y}{x}}
  \unred[{\betamu[\CBV]}]
  \cut{\outp\delta c_0}[\CBN]{\inp x \cut{x}[\CBV]{\inp y c_1}}
  &\red[{\betatmu[\CBN]}]
  \cut{\outp\delta c_0}[\CBV]{\inp y c_1}
  \red[{\betamu[\CBV]}]
  c_0
\end{align*}
where again, $\delta$, $x$, and $y$ do not occur free in $c_0$ and $c_1$.  So to
restore confluence, we need to ensure that misaligned signs are ruled out. But
before presenting the full solution to this issue, we examine how extensionality
interacts with polarity.

\subsection{The impact of polarity on extensionality}

The goal of combining both call-by-value and call-by-name reduction, as above,
is so that we could safely validate the full $\eta$ axioms for both the function
and sum types.  We can now explicitly enforce the requirement that each axiom
only applies to terms or {\co}terms following a specific evaluation strategy
with $\CBV$ and $\CBN$ sign annotations like so:
\begin{align*}
  \rewriterule{\etar[\to][\CBN]}
  {
    \fnc x \alpha \cut{v}[\CBN]{\app x \alpha} &\red v
  }
  \\
  \rewriterule{\etar[\oplus][\CBV]}
  {
    \casesum{x}{\cut{\inl x}[\CBV]{e}}{y}{\cut{\inr y}[\CBV]{e}}
    &\red
    e
  }
\end{align*}
Notice how, so long as the signs in the command align---meaning that each term
and {\co}term consistently appears in only positive or only negative
commands---we prevent the possibility of non-confluence.  For example, the
following command is confluent due to this alignment, since only the
call-by-name $\betatmu[\CBN]$ rule applies both before and after
$\eta$-reduction:
\begin{align*}
  c_1
  \unred[{\betatmu[\CBN]}]
  \cut
  {\fnc x \alpha \cut{\outp\delta c_0}[\CBN]{\app x \alpha}}
  [\CBN]
  {\inp x c_1}
  \red[{\etar[\to][\CBN]}]
  \cut{\outp\delta c_0}[\CBN]{\inp x c_1}
  \red[{\betatmu[\CBN]}]
  c_1
\end{align*}
where again we assume that $x$, $\alpha$, and $\delta$ are not used.  In
contrast, the counter-example to confluence with $\etar[\to][\CBN]$ has
mismatching sign annotations like so:
\begin{align*}
  c_1
  \unred[{\betatmu[\CBV]}]
  \cut
  {\fnc x \alpha \cut{\outp\delta c_0}[\CBN]{\app x \alpha}}
  [\CBV]
  {\inp x c_1}
  \red[{\etar[\to][\CBN]}]
  \cut{\outp\delta c_0}[\CBV]{\inp x c_1}
  \red[{\betamu[\CBV]}]
  c_0
\end{align*}
The misalignment between signs is due to the fact that a function abstraction is
a negative term, but it is used in a positive command.

\begin{figure}
\centering

Determining the polarity of a type:
\begin{align*}
  A, B, C &::= X \Alt \overline{X} \Alt A \to B \Alt A \oplus B
\end{align*}
\begin{gather*}
  \axiom{X : \CBV}
  \qquad
  \axiom{\overline{X} : \CBN}
  \qquad
  \infer
  {A \oplus B : \CBV}
  {A : \CBV & B : \CBV}
  \qquad
  \infer
  {A \to B : \CBN}
  {A : \CBV & B : \CBN}
\end{gather*}

Inference rules that are generic for any type:
\begin{gather*}
  \infer[\CutRule]
  {\cut{v}[\strat{S}]{e} : (\Gamma \entails \Delta)}
  {
    \Gamma \entails v : A \act \Delta
    &
    A : \strat{S}
    &
    \Gamma \act e : A \entails \Delta
  }
  \\[1ex]
  \axiom[\VR]{\Gamma, x:A \entails x : A \stoup \Delta}
  \qqqquad
  \axiom[\VL]{\Gamma \stoup \alpha : A \entails \alpha:A, \Delta}
  \\[1ex]
  \infer[\AR]
  {\Gamma \entails \outp \alpha c : A \act \Delta}
  {c : (\Gamma \entails \alpha:A, \Delta)}
  \qqqquad
  \infer[\AL]
  {\Gamma \act \inp x c : A \entails \Delta}
  {c : (\Gamma, x:A \entails \Delta)}
  \\[1ex]
  \infer[\FR]
  {\Gamma \entails W : A \act \Delta}
  {\Gamma \entails W : A \stoup \Delta}
  \qqqquad
  \infer[\FL]
  {\Gamma \act F : A \entails \Delta}
  {\Gamma \stoup F : A \entails \Delta}
\end{gather*}

Inference rules for specific types:
\begin{gather*}
  \infer[\toR]
  {\Gamma \entails \fnc x \alpha c : A \to B \stoup \Delta}
  {c : (\Gamma, x:A \entails \alpha:B, \Delta)}
  \qqqquad
  \infer[\toL]
  {\Gamma \stoup \app W F : A \to B \entails \Delta}
  {
    \Gamma \entails W : A \stoup \Delta
    &
    \Gamma \stoup F : B \entails \Delta
  }
  \\[1ex]
  \infer[{\oplusR[1]}]
  {\Gamma \entails \inl W : A \oplus B \stoup \Delta}
  {\Gamma \entails W : A \stoup \Delta}
  \qqqquad
  \infer[{\oplusR[2]}]
  {\Gamma \entails \inr W : A \oplus B \stoup \Delta}
  {\Gamma \entails W : B \stoup \Delta}
  \\[1ex]
  \infer[\oplusL]
  {\Gamma \stoup \casesum{x}{c_1}{y}{c_2} : A \oplus B \entails \Delta}
  {
    c_1 : (\Gamma, x:A \entails \Delta)
    &
    c_2 : (\Gamma, y:B \entails \Delta)
  }
\end{gather*}

\caption{Type system for $\lmtmP$.}
\label{fig:lmtmP-types}
\end{figure}

\subsection{Polarity in the type system}
One way to ensure that the signs in a program are consistently aligned is to
link polarity with types.  In other words, we can distinguish the two different
kinds of types according to their polarity: positive types (of kind $\CBV$)
following call-by-value evaluation and negative types (of kind $\CBN$) following
call-by-name evaluation.  For example, consider the basic grammar of types which
includes functions ($A \to B$) and sum ($A \oplus B$) types given in
\cref{fig:lmtmP-types}.  For the base cases, we also have atomic type variables
of positive ($X$) and negative ($\overline{X}$) kinds.  The distinction between
positive and negative types is given by the inference rules for checking
$A:\CBV$ versus $A:\CBN$.  Since the full $\eta$ law for sum types requires
call-by-value evaluation, $A \oplus B$ is positive, and dually since the full
$\eta$ law for function types requires call-by-name evaluation, $A \to B$ is
negative.  Also note that the polarity of a type is hereditary.  The types $A$
and $B$ inside $A \oplus B$ must also be positive.  Interestingly, there is some
added subtlety to functions, the return type $B$ of $A \to B$ must also be
negative, but the input type $A$ is instead positive.  The reason that the
polarity of function inputs are reversed is due to the reversal of information
flow caused by a function: outputs have the same polarity as the connective,
whereas inputs have the opposite polarity.

The type system for $\lmtmP$ which ensures confluence is given in
\cref{fig:lmtmP-types}.  Since there are several different syntactic categories
in the language, we have the following different typing judgments:
\begin{itemize}
\item $c : (\Gamma \entails \Delta)$ says that the command $c$ is well-typed,
  assuming the typing assignments for variables in the environment $\Gamma$ and
  {\co}variables in the environment $\Delta$.%
  \footnote{In each of these judgments, the environment
    $\Gamma = x_1:A_1,\dots,x_n:A_n$ is a set of variable-type pairs and
    $\Delta = \alpha_1:A_1,\dots,\alpha_n:A_n$ is a set of {\co}variable-type
    pairs, such that all (\co)variables are distinct in both.}
\item $\Gamma \entails v : A \act \Delta$ says that the term $v$ eventually
  produces an output of type $A$.
\item $\Gamma \entails W : A \stoup \Delta$ says that $W$ is already a value of
  type $A$.
\item $\Gamma \act e : A \entails \Delta$ says that the {\co}term $e$ eventually
  consumes an input of type $A$.
\item $\Gamma \stoup F : A \entails \Delta$ says that $F$ is already a
  {\co}value of type $A$.
\end{itemize}
The $\CutRule$ rule forms a command between a term and {\co}term of the same
type $A$, while also ensuring that the sign $\strat{S}$ in the command matches
the kind of $A$.  The $\VR$ and $\VL$ rules refer to a (\co)variable in the
environment, and the $\AR$ and $\AL$ rules bind a (\co)variable with a $\mu$- or
$\tmu$-abstraction, respectively.  And the $\FR$ and $\FL$ rules correspond to
the inclusion of $W$ in terms and $F$ in {\co}terms.  As a result, all
well-typed commands are confluent, and all well-typed expressions (commands,
terms, and {\co}terms) are confluent.  For example, the troublesome command
above with misaligned signs is not typable in this system, as shown by the
following derivation that \emph{fails} on one of the highlighted premises
$A:\CBN$ or $A:\CBV$, because there is no type $A$ that is both positive and
negative:
\begin{small}
\begin{gather*}
  \infer[\CutRule]
  {
    \typecmd{\Gamma}{\Delta}
    {\cut{\outp\delta c_0}[\CBN]{\inp x \cut{x}[\CBV]{\inp y c_1}}}
  }
  {
    {
      \infer[\AR]
      {\typetm{\Gamma}{\Delta}{\outp\delta c_0}{A}}
      {\typecmd{\Gamma}{\delta:A,\Delta}{c_0}}
    }
    &
    \hi{A : \CBN}
    &
    {
      \infer[\AL]
      {\typecotm{\Gamma}{\Delta}{\inp x \cut{x}[\CBV]{\inp y c_1}}{A}}
      {
        \infer[\CutRule]
        {\typecmd{\Gamma,x:A}{\Delta}{\cut{x}[\CBV]{\inp y c_1}}}
        {
          \infer[\FR]
          {\typetm{\Gamma,x:A}{\Delta}{x}{A}}
          {
            \axiom[\VR]{\typetmfoc{\Gamma,x:A}{\Delta}{x}{A}}
          }
          &
          \hi{A : \CBV}
          &
          \infer[\AL]
          {\typecotm{\Gamma,x:A}{\Delta}{\inp y c_1}{A}}
          {\typecmd{\Gamma,x:A,y:A}{\Delta}{c_1}}
        }
      }
    }
  }
\end{gather*}
\end{small}%

\begin{intermezzo}
\label{rm:focus}

We used different judgments for typing general terms versus weak-head normal
terms ($\Gamma \entails v : A \act \Delta$ versus
$\Gamma \entails W : A \stoup \Delta$) and general {\co}terms versus forcing
{\co}terms.  One motivation for doing so is to capture the notion of sub-syntax
in the polarized $\lmtmP$ inside the type system itself: the fact that a call
stack must consist of a $W$ and an $F$ can be read off from the judgments in the
$\toL$ inference rule, even if we erased the terms and {\co}terms.  It turns out
that the impact of these syntactic restrictions on typing corresponds to the
notion of \emph{focusing} in logic \cite{Andreoli1992LPFPLL}, which was
originally developed to aid proof search.  The \emph{focused} judgments
$\Gamma \entails W : A \stoup \Delta$ and $\Gamma \stoup F : A \entails \Delta$
are more restrictive than their counterpart $\Gamma \entails v : A \act \Delta$
and $\Gamma \act e : A \entails \Delta$.  For example, there is no way to derive
$\Gamma \entails \outp\alpha c : A \stoup \Delta$, even if
$\Gamma \entails \outp\alpha c : A \act \Delta$ is derivable.  This limits the
number of derivations that can be formed, in the same way that the various
sub-syntaxes limit the number of programs that can be written.
\end{intermezzo}

\subsection{Restoring expressiveness with polarity shifts}

Unfortunately, it turns out that just distinguishing the two kinds of types, as
in the type system above, severely weakens the system.  The identity function
$A \to A$ is no longer well-formed!  If $A$ is positive, then it cannot be the
return type of a function, and if $A$ is negative, it cannot be the input type.
This limitation is clearly untenable, and so we need \emph{polarity shifts}
which explicitly coerce between the two kinds of types: the down-shift $\ToPos$
converts a negative type to a positive type, and the up-shift $\ToNeg$ converts
a positive type in a negative type.  For example, we could write the identity
function as either $\ToPos A \to A$, when $A$ is negative, or $A \to \ToNeg A$,
when $A$ is positive.  With these polarity shifts, the expressiveness of the
system is restored.

\begin{figure}
\centering

Extending the syntax of types and programs with shifts:
\begin{align*}
  A, B, C &::= \dots \Alt \ToPos A \Alt \ToNeg B
  &
  W &::= \dots \Alt \wrap v \Alt \cocaseunwrap\alpha c
  &
  F &::= \dots \Alt \casewrap x c \Alt \unwrap e
\end{align*}

Reduction rules for shifts:
\begin{align*}
  \rewriterule{\betar[\ToPos]}
  {
    \cut{\wrap{v}}[\CBV]{\casewrap{x}{c}}
    &\red
    c\subst{x}{v}
  }
  &
  \rewriterule{\betar[\ToNeg]}
  {
    \cut{\cocaseunwrap{\alpha}{c}}[\CBN]{\unwrap{e}}
    &\red
    c\subst{\alpha}{e}
  }
\end{align*}

Typing rules for shifts:
\begin{gather*}
  \infer
  {\ToPos A : \CBV}
  {A : \CBN}
  \qqqquad
  \infer[\ToPosR]
  {\Gamma \entails \wrap v : \ToPos A \stoup \Delta}
  {\Gamma \entails v : A \act \Delta}
  \qqqquad
  \infer[\ToPosL]
  {\Gamma \stoup \casewrap x c : \ToPos A \entails \Delta}
  {c : (\Gamma, x:A \entails \Delta)}
  \\[1ex]
  \infer
  {\ToNeg A : \CBN}
  {A : \CBV}
  \qqqquad
  \infer[\ToNegR]
  {\Gamma \entails \cocaseunwrap \alpha c : \ToNeg A \stoup \Delta}
  {c : (\Gamma \entails \alpha:A, \Delta)}
  \qqqquad
  \infer[\ToNegL]
  {\Gamma \stoup \unwrap e : \ToNeg A \entails \Delta}
  {\Gamma \act e : A \entails \Delta}
\end{gather*}

\caption{Polarity shifts in $\lmtmP$.}
\label{fig:lmtmP-shifts}
\end{figure}

$\lmtmP$ can be extended with the shifts given in \cref{fig:lmtmP-shifts}.  The
terms and {\co}terms for the polarity shifts are similar to those for function
and sum types.  The up-shift $\ToNeg A$ is analogous to a nullary function
$1 \to A$ (where $1$ is a unit type) with the exception that $A$ is positive,
rather than negative.  As such, it contains a $\lambda$-abstraction of the form
$\cocaseunwrap{\alpha}{c}$ (with no argument parameter $x$) and a call stack of
the form $\unwrap{e}$ (with no restriction on $e$ because all {\co}terms force
their input in call-by-value).  The down-shift $\ToPos A$ is analogous to a
degenerate sum type $0 \oplus A$ (where $0$ is an empty type) with the exception
that $A$ is negative, rather than positive.  It contains a single injection
$\wrap{v}$ (with no restriction on $v$ because all terms are already values in
call-by-name) and a pattern-matching abstraction of the form $\casewrap{x}{c}$
for unwrapping the injection.

\section{Compiling Call-by-Name and Call-by-Value to
  \texorpdfstring{$\lmtmP$}{System L}}
\label{sec:CBVCBNintoL}
\begin{figure}
\centering

Call-by-value polarizing compilation of $\lmtmQ$ into $\lmtmP$.
\begin{gather*}
\begin{aligned}
  X^\CBV &\defeq X
  &\qquad
  (A \to B)^\CBV &\defeq \ToPos (A^\CBV \to (\ToNeg B^\CBV))
  &\qquad
  (A \oplus B)^\CBV &\defeq A^\CBV \oplus B^\CBV
\end{aligned}
\\[1ex]
\cut{v}{e}^\CBV \defeq \cut{v^\CBV}[\CBV]{e^\CBV}
\\
\begin{aligned}
  x^\CBV &\defeq x
  &\qquad
  \alpha^\CBV &\defeq \alpha
  \\
  (\outp \alpha c)^\CBV &\defeq \outp \alpha (c^\CBV)
  &\qquad
  (\inp x c)^\CBV &\defeq \inp x (c^\CBV)
  \\
  (\fnc{x}{\alpha}c)^\CBV
  &\defeq
  \wrap{\left(
      \fnc{x}{\beta}\cut{\cocaseunwrap{\alpha}{c^\CBV}}[\CBN]{\beta}
    \right)}
  &\qquad
  (\app{V}{e})^\CBV
  &\defeq
  \casewrap{x}{\cut{x}[\CBN]{\app{V^\CBV}{[\unwrap{e^\CBV}]}}}
  \\
  (\inj{i}{V})^\CBV
  &\defeq
  \inj{i}{(V^\CBV)}
  &
  \caseof{\recv{\inj{i}{x_i}}{c_i}}^\CBV
  &\defeq
  \caseof{\recv{\inj{i}{x_i}}{c_i^\CBV}}
\end{aligned}
\end{gather*}

Call-by-name polarizing compilation of $\lmtmT$ into $\lmtmP$.
\begin{gather*}
\begin{aligned}
  X^\CBN &\defeq \overline{X}
  &\qquad
  (A \to B)^\CBN &\defeq (\ToPos A^\CBN) \to B^\CBN
  &\qquad
  (A \oplus B)^\CBN &\defeq \ToNeg ((\ToPos A^\CBN) \oplus (\ToPos B^\CBN))
\end{aligned}
\\[1ex]
\cut{v}{e}^\CBN \defeq \cut{v^\CBN}[\CBN]{e^\CBN}
\\
\begin{aligned}
  x^\CBN &\defeq x
  &\qquad
  \alpha^\CBN &\defeq \alpha
  \\
  (\outp \alpha c)^\CBN &\defeq \outp \alpha (c^\CBN)
  &\qquad
  (\inp x c)^\CBN &\defeq \inp x (c^\CBN)
  \\
  (\fnc{x}{\alpha}c)^\CBN
  &\defeq
  \fnc{y}{\alpha}\cut{y}[\CBV]{\casewrap{x}{c^\CBN}}
  &
  (\app{v}{E})^\CBN
  &\defeq
  \app{(\wrap{v^\CBN})}{E^\CBN}
  \\
  (\inj{i}{v})^\CBN
  &\defeq
  \cocaseunwrap{\beta}{\cut{\inj{i}{(\wrap{v^\CBN})}}[\CBV]{\beta}}
  &
  \caseof{\recv{\inj{i}{x_i}}{c_i}}^\CBN
  &\defeq
  \unwrap
  {\left[
    \caseof{\recv{\inj{i}{y_i}}{\cut{y_i}[\CBV]{\casewrap{x_i}{c_i^\CBN}}}}
  \right]}
\end{aligned}
\end{gather*}

\caption{Compiling $\lmtmQ$ and $\lmtmT$ into $\lmtmP$.}
\label{fig:lmtmQ-encoding}
\label{fig:lmtmT-encoding}
\end{figure}

With the polarity shifts between positive and negative types, we can express
every program that could have been written in the unpolarized languages.  But
now since \emph{both} call-by-value and -name evaluation is denoted by different
types, the types themselves signify the calling convention.  Both compilations
of $\lmtmQ$ and $\lmtmT$ into $\lmtmP$ are shown in \cref{fig:lmtmQ-encoding}.
In the call-by-value compilation of $\lmtmQ$, every type is translated to a
positive type, in other words $A^\CBV : \CBV$.  This leaves the sum connective
$\oplus$ since it is already purely positive, but the function type requires two
shifts: one $\ToNeg$ to convert the positive return type of the function to a
negative type (satisfying the preconditions of the polarized function arrow) and
another $\ToPos$ to ensure the overall function type itself is positive instead
of negative.  In contrast, the call-by-name compilation of $\lmtmT$ translates
every type to a negative type, in other words $A^\CBN : \CBN$.  As a result, the
call-by-name compilation of a function requires only one shift to convert the
input type of the function to be positive, but the sum connective now requires
heavy shifting: two $\ToPos$ shifts to convert both sides of the sum to be
positive (satisfying the preconditions of the polarized sum) and a $\ToNeg$
shift to make the overall sum type itself negative.
  

At a basic level, these two compilations make sense from the perspective of
typability (corresponding to provability in logic)---by inspection, all of the
types line up with their newly-assigned polarities.  But programs are meant to
be run, so we care about more than just typability.  At a deeper level, the
compilations are \emph{sound} with respect to equality of terms: if two terms
are equal, then their compiled forms are also equal.  We have not yet formally
defined equality, so we will return to this question later in
\cref{sec:encoding-type-iso}.

\subsection{More direct compilation with four shifts}

However, even though these compilations are correct, some of the cases are much
more direct than others.  In particular, observe how the negative compilation of
functions can be understood as just expanding a nested {\co}pattern match like
so:
\begin{displaymath}
  (\fnc{x}{\alpha}c)^\CBN
  =
  \fn{[\app{\wrap{x}}{\alpha}]}c^\CBN
\end{displaymath}
which matches the macro-expanded call stack
\begin{math}
  (\app{v}{E})^\CBN = \app{(\wrap{v^\CBN})}{E^\CBN}
  .
\end{math}
However, the positive compilation of functions is not so nice; it wraps a
$\lambda$-abstraction in a $\ToPos$ shift, which means that a $\lmtmQ$ call
stack of the form $\app{V}{e}$ is hidden inside a pattern-matching abstraction
in order to extract the contents of the shifted function.  Similarly, the
positive compilation of sum types just distributes over constructors and pattern
matching, since sum types are already purely positive, but the negative
compilation of a sum injection $(\inj{i}{v})^\CBN$ is hidden inside a
$\lambda$-abstraction for the same reason.

For now, this mismatch between construction and pattern matching is tolerable,
but when considering other compilations (\cref{sec:intro-need}) it will become
unacceptable.  The issue in both cases is with the outermost shift: the
constructor or destructor is applied to the wrong side (term versus {\co}term)
which forces us to wrap a $\lambda$-abstraction instead of a call stack or sum
injection.  Fortunately, there is an easy way to remedy the situation: use a
different formulation of the two shifts.

\begin{figure}
\centering

Extending the syntax of types and programs with shifts:
\begin{align*}
  A, B, C &::= \dots \Alt \FromPos A \Alt \FromNeg B
  &
  W &::= \dots \Alt \delay W \Alt \cocaseforce\alpha c
  &
  F &::= \dots \Alt \casedelay x c \Alt \force F
\end{align*}

Reduction rules for shifts:
\begin{align*}
  \rewriterule{\betar[\FromNeg]}
  {
    \cut{\cocaseforce{\alpha}{c}}[\CBV]{\force{F}}
    &\red
    c\subst{\alpha}{F}
  }
  &
  \rewriterule{\betar[\FromPos]}
  {
    \cut{\delay{W}}[\CBN]{\casedelay{x}{c}} &\red c\subst{x}{W}
  }
\end{align*}

Typing rules for shifts:
\begin{gather*}
  \infer
  {\FromNeg A : \CBV}
  {A : \CBN}
  \qqqquad
  \infer[\FromNegR]
  {\Gamma \entails \cocaseforce \alpha c : \FromNeg A \stoup \Delta}
  {c : (\Gamma \entails \alpha:A, \Delta)}
  \qqqquad
  \infer[\FromNegL]
  {\Gamma \stoup \force F : \FromNeg A \entails \Delta}
  {\Gamma \stoup F : A \entails \Delta}
  \\[1ex]
  \infer
  {\FromPos A : \CBN}
  {A : \CBV}
  \qqqquad
  \infer[\FromPosR]
  {\Gamma \entails \delay W : \FromPos A \stoup \Delta}
  {\Gamma \entails W : A \stoup \Delta}
  \qqqquad
  \infer[\FromPosL]
  {\Gamma \stoup \casedelay x c : \FromPos A \entails \Delta}
  {c : (\Gamma, x:A \entails \Delta)}
\end{gather*}

\caption{Alternative polarity shifts in $\lmtmP$.}
\label{fig:lmtmP-other-shifts}
\end{figure}

Notice how---as a side effect of the fact that $\wrap{v}$ can contain any term
$v$ and $\unwrap{e}$ can contain any {\co}term $e$---the typing rules $\ToPosR$
and $\ToNegL$ \emph{lose} focus (see \cref{rm:focus}) since their premise is the
unfocused judgment $\Gamma \entails v : A \act \Delta$ and
$\Gamma \act e : A \entails \Delta$, respectively.  These are the first rules
with this property: all the other typing rules with focused conclusions have
focused premises when possible.  The only exceptions to this generalization are
rules like $\toR$ and $\oplusL$, because commands do not have any notion of
focus like terms and {\co}terms do.  It turns out, there is an alternative way
to formulate the shifts so that focus is \emph{gained} rather than \emph{lost}.
In other words, we could have defined polarity shifts that follow the rule of
thumb where constructors and destructors only apply to $W$ terms and $F$
{\co}terms.

These alternative shifts can be added to the calculus, as shown in
\cref{fig:lmtmP-other-shifts}, where the up-shift $\FromPos$ has the same role
of converting positive to negative, and the down-shift $\FromNeg$ still converts
from negative to positive.  The constructor $\Delay$ only applies to a $W$, and
$\Force$ only applies to a $F$.  Instead, the change in polarity is captured by
the fact that $\cocaseforce\alpha c$ is a positive value distinct from the
non-value $\outp\alpha c$, and $\casedelay x c$ is a negative {\co}value which
forces its input unlike the non-strict $\inp x c$.  Note that it is important
for the conclusions of the $\FromNegR$ and $\FromPosL$ rule be in focus (\ie use
the stoup separator $\stoup$ rather than $\act$).  Otherwise, appearances of the
thunk-like abstraction $\cocaseforce\alpha c$ in an injection
($\inj{i}{(\cocaseforce\alpha c)}$) or call stack
($\app{(\cocaseforce\alpha c)}{F}$) would not type-check.  But the premises of
these rules are commands, like in the $\toR$ and $\oplusL$ rules, which cannot
be in focus.  From this standpoint, the thick shifts ($\FromNeg$ and $\FromPos$)
have different focusing properties from the thin shifts ($\ToPos$ and $\ToNeg$).

The compilation translations in \cref{fig:lmtmQ-encoding} can now be cleaned up
with the following alternative compilation using both styles of shifts, whose
impact on terms and {\co}terms is completely captured as a macro-expansion of
patterns and {\co}patterns like so:
\begin{small}
\begin{align*}
  (A \to B)^\CBV
  &=
  \FromNeg (A^\CBV \to (\ToNeg B^\CBV))
  &
  (\app{V}{e})^\CBV
  &=
  \force{[\app{V^\CBV}{[\unwrap{e^\CBV}]}]}
  &
  (\fn{[\app x \alpha]} c)^\CBV
  &=
  \cocaseforce{[\app{x}{[\unwrap\alpha]}]}{c^\CBV}
  \\
  (A \oplus B)^\CBN
  &=
  \FromPos ((\ToPos A^\CBN) \oplus (\ToPos B^\CBN))
  &
  (\inj{i}{v})^\CBN
  &=
  \delay{(\inj{i}{(\wrap{v^\CBN})})}
  &
  \caseof{\recv{\inj{i}{x_i}}{c_i}}^\CBN
  &=
  \caseof{\recv{\delay{(\inl{(\wrap{x_i})})}}{c_i^\CBN}}
\end{align*}
\end{small}%
Note that the direction of the shifts are the same as before, the only thing
that changes is that the outer arrows are thick ($\FromNeg$ or $\FromPos$),
whereas the inner arrows are thin ($\ToPos$ or $\ToNeg$).  While this is just an
aesthetic change for now, the impact of the two different styles of shifts in
compilation becomes crucial once we go beyond just call-by-value and
call-by-name evaluation.

\section{Classical Call-by-Need and its Dual}
\label{sec:intro-need}

Polarity neatly integrates the best of both call-by-value and call-by-name
languages.  However, there is more to computation than just those two evaluation
strategies.  Call-by-need evaluation is a different way to integrate aspects of
both call-by-name and -value: like call-by-name results are only computed when
they are needed, but like call-by-value they are never evaluated more than once.
For example, let $I = \fn x x$ be the identity function, so that $f$ will never
be evaluated in the expression $\Let f = I~I \In 5$ because it is not needed to
return the result $5$.  In contrast, $f$ will be evaluated exactly once in
\begin{math}
  \Let f = I~I \In I~f~f~5
\end{math}
like so:
\begin{align*}
  \Let f = I~I \In \underline{I ~ f} ~ f ~ 5
  &\red
  \Let f = \underline{I~I} \In f ~ f ~ 5
  \\
  &\red
  \underline{\Let f = I \In f ~ f ~ 5}
  \\
  &\red
  \underline{I ~ I} ~ 5
  \red
  \underline{I ~ 5}
  \red
  5  
\end{align*}
where the next reduction to take place at each step is underlined for clarity.
Notice how, in order to simultaneously delay and share the computation of $f$,
reduction occurs \emph{underneath} $\Let$ bindings.  First, the inner redex
$I~f$ is performed, after which $f$ is needed to continue evaluating the
expression $f~f~5$ which forces the evaluation of $f$.  The ability to reduce
inside of $\Let$s is the key to the call-by-need $\lambda$-calculus
\cite{AMOFW1995CBNLC,AriolaFelleisen1997CBNLC}.

\subsection{Sharing work in the sequent calculus}

If delayed $\Let$ binding is the key to the call-by-need $\lambda$-calculus,
what is the key to a call-by-need sequent calculus?  Since $\tmu$-abstractions
are the analogue to $\Let$s, a call-by-need priority between $\betamu$ and
$\betatmu$ reductions relies on delayed $\tmu$-bindings \cite{AHS2011CCBND}.
Rather than forcing one substitution or the other in the command
$\cut{\outp\beta c'}{\inp x c}$ like call-by-value and call-by-name do,
call-by-need reduces the command $c$ underneath the $\tmu$-binding of $x$.  In
general, this reduction will proceed like so:
\begin{align*}
  \cut{\outp\beta c'}{\inp x \underline{c}}
  &\reds
  \cut
  {\outp\beta c'}
  {
    \inp x
    \cut{v_1}{\inp{y_1} \dots \cut{v_n}{\inp{y_n} \cut{x}{E}}}
  }
\end{align*}
where $E$ is some {\co}value which forces the evaluation of the bound variable
$x$.  Note how there may be many more delayed $\tmu$-bindings allocated during
this evaluation step, which might be referenced inside $E$.  At this point, no
more progress can be made under the $\tmu$-binding of $x$, so that the whole
$\tmu$-abstraction is done.  We must now refocus our attention to the term bound
to $x$, which is modeled by finally now performing a $\betamu$ on the command:
\begin{align*}
  \cut
  {\outp\beta c'}
  {
    \inp x
    \cut{v_1}{\inp{y_1} \dots \cut{v_n}{\inp{y_n} \cut{x}{E}}}
  }
  &\red[\betamu]
  c'
  \subst
  {\beta}
  {
    \inp x
    \cut{v_1}{\inp{y_1} \dots \cut{v_n}{\inp{y_n} \cut{x}{\alpha}}}
  }
\end{align*}
This step allows for the term $\outp\beta c'$ to be evaluated by running the
command $c'$ with a substitution for $\beta$.  As $c'$ runs, a value may be
``returned'' to its consumer $\beta$ like so
\begin{align*}
  \cut
  {V}
  {
    \inp x
    \cut{v_1}{\inp{y_1} \dots \cut{v_n}{\inp{y_n} \cut{x}{\alpha}}}
  }
  &\red[\betatmu]
  \cut{v_1\subst{x}{V}}{\inp{y_1} \dots \cut{v_n\subst{x}{V}}{\inp{y_n}\cut{V}{\alpha}}}
\end{align*}
Since $\beta$ may be used many times, several values may be ``returned'' to the
consumer of $\outp\beta c'$ along $\beta$, but $c'$ is run at most once: this is
the essence of classical call-by-need evaluation.

With this intuition, we can simulate the same example above from the
call-by-need $\lambda$-calculus using the more machine-like sequent calculus,
where the identity function is written as $\fnc x \alpha \cut{x}{\alpha}$.  This
example illustrates the back-and-forth reductions caused by evaluating
underneath $\tmu$-bindings.
\begin{align*}
  \cut
  {\outp\beta\cut{I}{\app{I}{\beta}}}
  {
    \inp f
    \underline
    {
      \cut
      {I}
      {\app{f}{\app{f}{\app{5}{\alpha}}}}
    }
  }
  &\red[{\betar[\to]}]
  \underline
  {
    \cut
    {\outp\beta\cut{I}{\app{I}{\beta}}}
    {
      \inp f
      \cut
      {f}
      {\app{f}{\app{5}{\alpha}}}
    }
  }
  \\
  &\red[{\betamu}]
  \underline
  {
    \cut
    {I}
    {\app{I}{\inp f \cut{f}{\app{f}{\app{5}{\alpha}}}}}
  }
  \\
  &\red[{\betar[\to]}]
  \underline
  {
    \cut
    {I}
    {\inp f \cut{f}{\app{f}{\app{5}{\alpha}}}}
  }
  \\
  &\red[{\betatmu}]
  \underline{\cut{I}{\app{I}{\app{5}{\alpha}}}}
  \red[{\betar[\to]}]
  \underline{\cut{I}{\app{5}{\alpha}}}
  \red[{\betar[\to]}]
  \underline{\cut{5}{\alpha}}
\end{align*}
Notice how the expression $\outp\beta\cut{I}{\app{I}{\beta}}$ is delayed in the
first step but never copied, even though the bound variable $f$ is used multiple
times.  Also notice how every reduction step, besides the first one, occurs at
the top of the command.

\subsection{Classical call-by-need}

Since the sequent calculus corresponds to Gentzen's system LK of classical
logic, defining call-by-need evaluation for the sequent calculus gives a notion
of ``classical call-by-need,'' meaning call-by-need evaluation with first-class
control effects \cite{Griffin1990FATNC}.  Even though call-by-need is a more
intricate evaluation strategy than call-by-value and -name, it can be formulated
using the same technique: managing the priority between producers and consumers
through a notion of substitutability.  The only question is, which terms are
substitutable values, and dually which {\co}terms are {\co}values?

Substitutability means that values can be copied and deleted, and on this front
both call-by-value and call-by-need agree.  General $\mu$-abstractions, which
represents arbitrary computation, cannot be copied in either evaluation
strategies.  Instead, $\lambda$-abstractions---which are first-class
values---can be copied, and so too can injections of the form $\inj{i}{V}$ where
$V$ is another value.  The fact that $\inj{i}{v}$ is not a value in call-by-need
corresponds to the fact that the arguments to constructors are shared, not
computed more than once.  In contrast, the notion of {\co}values in call-by-need
corresponds to {\co}terms which ``force'' their input.  This includes all the
call-by-name forms of {\co}values, but this alone is not enough.

In the previous example of sharing in the sequent calculus, notice how we were
forced to substitute $\inp f \cut{f}{\app{f}{\app{5}{\alpha}}}$ in the step
\begin{align*}
  \cut
  {\outp\beta\cut{I}{\app{I}{\beta}}}
  {
    \inp f
    \cut
    {f}
    {\app{f}{\app{5}{\alpha}}}
  }
  &\red[{\betamu}]
  \cut
  {I}
  {\app{I}{\inp f \cut{f}{\app{f}{\app{5}{\alpha}}}}}
\end{align*}
If $\tmu$-abstractions could never be substituted like in call-by-name, then
this step would not be allowed and the computation would get stuck.  But we can
see that this specific instance is okay, because the input (named $f$ by the
$\tmu$-abstraction) is immediately needed in its body.  In general,
$\tmu$-abstractions of the form
\begin{math}
  \inp x \cut{v_1}{\inp {y_1} \dots \cut{v_n}{\inp {y_n} \cut{x}{E}}}
\end{math}
(where $x$ is different than $y_1 \dots y_n$) forces its input, because the
bound $x$ is immediately consumed by another forcing {\co}value $E$.

\begin{figure}
\centering

Sub-syntax for call-by-need evaluation:
\begin{align*}
  c &::= \cut{v}{e}
  &
  H &::= \hole \Alt \cut{v}{\inp x H}
  \\
  v &::= V \Alt \outp \alpha c
    &
  V &::= x \Alt \fnc x \alpha c \Alt \inl V \Alt \inr V
  \\
  e &::= E \Alt \inp x c
    &
  E &::= \alpha \Alt \app V E \Alt \casesum{x}{c_1}{y}{c_2}
    \Alt \inp x H[\cut{x}{E}]
\end{align*}

Call-by-need reduction rules:
\begin{align*}
  \rewriterule{\betamu[\CBNeed]}
  {
    \cut{\outp\alpha c}{E} &\red c\subst{\alpha}{E}
  }
  &
  \rewriterule{\betatmu[\CBNeed]}
  {
    \cut{V}{\inp x c} &\red c\subst{x}{V}
  }
  \\
  \rewriterule{\betar[\to][\CBNeed]}
  {
    \cut{\fnc x \alpha c}{\app{V}{E}}
    &\red
    c\subs{\asub{x}{V},\asub{\alpha}{E}}
  }
  &
  \rewriterule{\betar[\oplus][\CBNeed]}
  {
    \cut{\inj{i}{V}}{\casesum{x_1}{c_1}{x_2}{c_2}}
    &\red
    c_i\subs{\asub{x_i}{V}}
  }
\end{align*}

\caption{A call-by-need $\lmtm$ sub-calculus.}
\label{fig:lmtm-need}
\end{figure}

To make sure that computation does not get stuck, $\tmu$-abstractions of this
form must be considered substitutable {\co}values, which gives us the sub-syntax
for call-by-need $\lmtm$ shown in \cref{fig:lmtm-need}.  Intuitively, the set of
contexts $H$ represent a heap of variables bound to thunks (potentially
unevaluated terms).  Note that there is a side condition for {\co}values of the
form $\inp x H[\cut{x}{E}]$: $H$ must not contain a binding for $x$ which is in
scope over its hole $\hole$.  With this sub-syntax, we can use the reduction
rules given in \cref{fig:lmtm-need} for performing call-by-need evaluation.
Notice how the general form of the reduction rules are the same as in the other
call-by-name, call-by-value, and polarized calculi; the only real difference is
the definition of $V$ and $E$.

\subsection{The dual of call-by-need evaluation}

\begin{figure}
\centering

Sub-syntax for call-by-{\co}need evaluation:
\begin{align*}
  c &::= \cut{v}{e}
  &
  H &::= \hole \Alt \cut{\outp \alpha H}{e}
  \\
  v &::= V \Alt \outp \alpha c
    &
  V &::= x \Alt \fnc x \alpha c \Alt \inl V \Alt \inr V
    \Alt \outp \alpha H[\cut{V}{\alpha}]
  \\
  e &::= E \Alt \inp x c
    &
  E &::= \alpha \Alt \app V E \Alt \casesum{x}{c_1}{y}{c_2}
\end{align*}

Call-by-{\co}need reduction rules:
\begin{align*}
  \rewriterule{\betamu[\CoNeed]}
  {
    \cut{\outp\alpha c}{E} &\red c\subst{\alpha}{E}
  }
  &
  \rewriterule{\betatmu[\CoNeed]}
  {
    \cut{V}{\inp x c} &\red c\subst{x}{V}
  }
  \\
  \rewriterule{\betar[\to][\CoNeed]}
  {
    \cut{\fnc x \alpha c}{\app{V}{E}}
    &\red
    c\subs{\asub{x}{V},\asub{\alpha}{E}}
  }
  &
  \rewriterule{\betar[\oplus][\CoNeed]}
  {
    \cut{\inj{i}{V}}{\casesum{x_1}{c_1}{x_2}{c_2}}
    &\red
    c_i\subs{\asub{x_i}{V}}
  }
\end{align*}

\caption{The dual of call-by-need in the $\lmtm$ calculus.}
\label{fig:lmtm-coneed}
\end{figure}

One of the strengths of the sequent calculus is the way it presents the duality
between call-by-value and call-by-name evaluation
\cite{CurienHerbelin2000DC,Wadler2003CBVDCBN}.  Since producers are dual to
consumers, the dual can be found by just swapping terms with {\co}terms and
flipping the two sides of a command.  This simple syntactic transformation makes
it straightforward to discover the dual to classical call-by-need evaluation
\cite{AHS2011CCBND}, here called call-by-{\co}need.  We only need to exchange
the roles of values and {\co}values from the definition of call-by-need above,
and to dualize heaps so that {\co}variables are bound to delayed demands (a
consumer which might not need its input yet), as shown in
\cref{fig:lmtm-coneed}.  As with call-by-need, the reduction rules share the
same form, but with a new definition for $V$ and $E$.

Call-by-need evaluation can be seen as a more efficient approach to laziness
compared with call-by-name, where the computation of named results are
remembered and shared.  Once control effects are involved, call-by-{\co}need
evaluation can be seen as a more efficient approach to strictness compared with
call-by-value, where the computation of labeled call sites are remembered and
shared.  For example, consider the following use of the control operator
$\Callcc$ in the $\lambda$-calculus:
\begin{align*}
  \Let x = \Callcc(\fn k v) \In \Let n = \mathit{fib}~30 \In n + n
\end{align*}
where the call to $\mathit{fib}~30$ represents an expensive computation.  In
call-by-value, the invocation of $\Callcc$ is the first step, copying the
context $\Let x = \hole \In \Let n = \mathit{fib}~30 \In n+n$, and every time
$k$ is called $\mathit{fib}~30$ will be recomputed.  However, in
call-by-{\co}need, $\mathit{fib}~30$ is only computed once---the first time
$k$ is called with a value---after which the context is replaced with the
normalized $\Let x = \hole \In 1664080$ that can immediately return its result.

\section{Compiling Call-by-need and its Dual}
\label{sec:polarizingCBNeed}

Previously in \cref{sec:intro-polarity}, we were able to encode the
call-by-value and -name $\lmtm$-calculi into $\lmtmP$.  How could we accomplish
a similar thing for call-by-need and its dual?  First of all, the main key to
polarizing compilation is to use polarity shifts to mediate between the baked-in
evaluation strategies of the basic connectives ($\to$ and $\oplus$) and the
intended evaluation strategy of the source language.  This means that $\lmtmP$
on its own is not enough: before we can encode a language with a different
evaluation strategy like call-by-need, we will need to have enough shifts to
mediate between call-by-need and the call-by-value and -name strategies of the
basic connectives.

But compilation should do more than just get the compiled program to type check;
it should also exactly reflect the semantics of the source program in the target
calculus.  And on this point, we need to take great care with how the shifts
interact with the semantics of evaluation strategies besides call-by-value and
-name.  Consider the first polarizing compilation we gave in
\cref{fig:lmtmT-encoding}.  Recall how the translation of a sum injection
$(\inj{i}{V})^\CBN$ ends up burying the sum inside of a $\lambda$-abstraction.
This style of compilation is not acceptable for a call-by-need language, because
it is sensitive to \emph{sharing}: which results of sub-computations are
computed only once and shared from then on.

In a call-by-need language, all named terms are shared, and moreover, the
contents of a constructor like $\Inj{i}$ are \emph{also} shared.  In contrast,
the body of a $\lambda$-abstraction is never shared in a call-by-need language,
and is re-computed every time the function is called.  This dichotomy does not
fit with the compilation given in \cref{fig:lmtmT-encoding} for
$(\inj{i}{V})^\CBN$ because the extra $\lambda$-abstraction surrounds the
contents of the $\Inj{i}$ constructor and breaks sharing.  And for the dual
reason, the call-by-value compilation of $(\app{V}{E})^\CBV$ in
\cref{fig:lmtmQ-encoding} does not work for call-by-{\co}need, because the extra
$\rlam$-abstraction buries the function application destructor and breaks
sharing of $V$ and $E$.

However, in \cref{sec:CBVCBNintoL} we discussed how to get rid of these extra
$\lambda$-abstraction barriers in the compiled code.  The key was to use a
different style of shift ($\FromNeg$ and $\FromPos$) to convert the outside of
the connective compared with the shifts ($\ToPos$ and $\ToNeg$) used to wrap the
sub-formulas of the connective.  This alternate compilation avoids burying
constructors and destructors inside of abstractions, and will therefore have the
correct sharing semantics for an evaluation strategy like call-by-need.

So to compile another evaluation strategy into a language like $\lmtmP$, we will
have to extend $\lmtmP$ with the two different styles of shift to mediate
between that evaluation strategy and call-by-value and -name.  We can use the
optimized compilation in \cref{sec:CBVCBNintoL} to guide our generalization.  We
need to use the thin shifts ($\ToPos$ and $\ToNeg$) to convert a
call-by-(\co)need type to either call-by-value or call-by-name, so these shifts
should be generalized on the kinds of types they can accept as inputs.  And
since we need to use the thick shifts ($\FromNeg$ and $\FromPos$) to convert
from a call-by-value or -name type built by one of $\to$ or $\oplus$ to a
call-by-(\co)need type, those shifts will should be generalized on the kinds of
types they can give as results.  In the end, we must extend the set of signs
$\strat{S}$ to include symbols denoting call-by-need ($\CBNeed$) and
call-by-{\co}need ($\CoNeed$), and generalize both styles of the shift
connectives as in \cref{fig:generalized-shifts}, where the unannotated shifts
from before would now be written more explicitly as $\ToPos[\CBN]$,
$\ToNeg[\CBV]$, $\FromNeg[\CBV]$, and $\FromPos[\CBN]$.

\begin{figure}
\centering
\begin{gather*}
  \strat{S} ::= \CBV \Alt \CBN \Alt \CBNeed \Alt \CoNeed
  \qquad
  A,B,C
  ::= X_{\strat{S}} \Alt A \to B \Alt A \oplus B
  \Alt \ToPos[\strat{S}]A \Alt \FromPos[\strat{S}]A
  \Alt \ToNeg[\strat{S}]A \Alt \FromNeg[\strat{S}]A
  \\[1ex]
  \axiom{X_{\strat{S}} : \strat{S}}
  \qqqquad
  \infer
  {\ToPos[\strat{S}]A : \CBV}
  {A : \strat{S}}
  \qqqquad
  \infer
  {\FromPos[\strat{S}]A : \strat{S}}
  {A : \CBV}
  \qqqquad
  \infer
  {\ToNeg[\strat{S}]A : \CBN}
  {A : \strat{S}}
  \qqqquad
  \infer
  {\FromNeg[\strat{S}]A : \strat{S}}
  {A : \CBN}
\end{gather*}
\caption{Generalized shifts to and from call-by-(\co)need.}
\label{fig:generalized-shifts}
\end{figure}

\begin{figure}
\centering

Generic polarizing compilation for
$\strat{S} \in {\CBNeed, \CoNeed, \CBV, \CBN}$.
\begin{gather*}
\begin{aligned}
  X^{\strat{S}}
  &\defeq
  X_{\strat{S}}
  &\quad
  (A \to B)^{\strat{S}}
  &\defeq
  \FromNeg[\strat{S}]
  ((\ToPos[\strat{S}] A^{\strat{S}}) \to (\ToNeg[\strat{S}] B^{\strat{S}}))
  &\quad
  (A \oplus B)^{\strat{S}}
  &\defeq
  \FromPos[\strat{S}]
  ((\ToPos[\strat{S}] A^{\strat{S}}) \oplus (\ToPos[\strat{S}] B^{\strat{S}}))
\end{aligned}
\\[1ex]
\cut{v}{e}^{\strat{S}} \defeq \cut{v^{\strat{S}}}[\strat{S}]{e^{\strat{S}}}
\\
\begin{aligned}
  x^{\strat{S}} &\defeq x
  &\qquad
  \alpha^{\strat{S}} &\defeq x
  \\
  (\outp \alpha c)^{\strat{S}}
  &\defeq
  \outp \alpha (c^{\strat{S}})
  &\qquad
  (\inp x c)^{\strat{S}}
  &\defeq
  \inp x (c^{\strat{S}})
  \\
  (\fnc{x}{\alpha}c)^{\strat{S}}
  &\defeq
  \fn
  {
    [
    \force[\strat{S}]
    {
      [\app{[\wrap[\strat{S}]{x}]}{[\unwrap[\strat{S}]{\alpha}]}]
    }
    ]
  }
  c^{\strat{S}}
  &\qquad
  (\app{V}{E})^{\strat{S}}
  &\defeq
  \force[\strat{S}]
  {
    [\app{[\wrap[\strat{S}]{V^{\strat{S}}}]}{[\unwrap[\strat{S}]{E^{\strat{S}}}]}]
  }
  \\
  (\inj{i}{V})^{\strat{S}}
  &\defeq
  \delay[\strat{S}]
  {
    (\inj{i}{(\wrap[\strat{S}]{V^{\strat{S}}})})
  }
  &\qquad
  (\caseof{\recv{\inj{i}{x_i}}{c_i}})^{\strat{S}}
  &\defeq
  \caseof
  {
    \recv
    {\delay[\strat{S}]{(\inj{i}{(\wrap[\strat{S}]{x_i})})}}
    {c_i^{\strat{S}}}
  }
\end{aligned}
\end{gather*}

\caption{Generic compilation of call-by-need and call-by-{\co}need using the
  four shifts.}
\label{fig:generic-polarization}
\end{figure}

The generalized shifts let us give a generic polarizing compilation shown in
\cref{fig:generic-polarization} for the $\lmtm$-calculi following \emph{any} of
these four evaluation strategies.  For clarity and terseness, this generic
compilation is written using nested (\co)patterns.  But the nested (\co)patterns
can be mechanically flattened out into the simpler forms by matching one step at
a time, analogous to the desugaring of pattern matching in a functional
language, like so:
\begin{align*}
  (\fnc{x}{\alpha}c)^{\strat{S}}
  &\defeq
  \cocaseforce[\strat{S}]{\gamma}
  {
    \cut
    {
      \fnc{y}{\beta}
      \cut
      {y}
      [\CBV]
      {
        \casewrap[\strat{S}]{x}
        {
          \cut
          {\cocaseunwrap[\strat{S}]{\alpha}{c^{\strat{S}}}}
          [\CBN]
          {\beta}
        }
      }
    }
    [\CBN]
    {\gamma}
  }
  \\
  (\caseof{\recv{\inj{i}{x_i}}{c_i}})^{\strat{S}}
  &\defeq
  \casedelay[\strat{S}]{z}
  {
    \cut
    {z}
    [\CBV]
    {
      \caseof
      {
        \recv
        {\inj{i}{y_i}}
        {
          \cut
          {y_i}
          [\CBV]
          {\casewrap[\strat{S}]{x_i}{c_i^{\strat{S}}}}
        }
      }
    }
  }
\end{align*}

Note that in this generic compilation for an evaluation strategy denoted by
$\strat{S}$, type variables $X$ are just translated to type variables of the
appropriate sign, $X_{\strat{S}}$.  Furthermore, every binary connective is
expanded into a polarized connective interspersed with three shifts: two shifts
(some combination of $\ToPos[\strat{S}]$ and $\ToNeg[\strat{S}]$) to convert the
two sub-formulas, and one shift (either $\FromPos[\strat{S}]$ or
$\FromNeg[\strat{S}]$) to convert the final type.  In this sense, the
compilation of $\oplus$ is exactly like the optimized call-by-name compilation
given in \cref{sec:CBVCBNintoL}, except generalized to an arbitrary $\strat{S}$,
because it already had the maximal number of shifts.  In contrast, the
compilation of $\to$ has one more shift than the optimized call-by-value
compilation given in \cref{sec:CBVCBNintoL} because the type of the function
argument is not known to be positive already.

\section{The Dual Core Calculus: System \texorpdfstring{$\DUAL$}{D}}
\label{sec:dual-core}

So far, we have seen two independent methods of integrating call-by-value and
call-by-name features into a classical calculus: polarity which directly mixes
the two, and call-by-(\co)need evaluation which shares delayed computations.  We
then illustrated how we might reconcile these two, so that they can be used in
tandem, by generalizing beyond a binary polarity to one which mixes sharing with
the polarized connectives.  Doing so allows us to embed each of call-by-value,
-name, -need, and -{\co}need into a common core language.  However, there
remains some issues which still need to be resolved.  How do we combine the
semantics of all four evaluation strategies into a cohesive calculus?  How do we
extend our results beyond the simple functions-and-sums testing ground to a wide
variety of other types that commonly appear in programming languages?

We resolve these issues with the design of a core calculus, called
$\DUAL$,\footnote{So named because it is a fully $\D$ual calculus with many
  $\D$isciplines.} which in addition to subsuming all of call-by-value, -name,
-need, and -{\co}need evaluation it presents the following features:
\begin{itemize}
\item $\DUAL$ only includes a fixed, finite set of basic types. As discussed in
  more detail later in \cref{sec:dual-compile,sec:encoding-type-iso}, these
  basic types are expressive enough to represent a wide range of types that a
  programmer might define and use in a programming language.
\item Duality and symmetry plays a central role to our framework, and so we
  prefer a fully-dual presentation whenever possible.  As such, $\DUAL$ avoids
  the antisymmetric function type in favor of the symmetric \emph{par} type
  $A \parr B$---inspired by Girard's linear logic \cite{Girard19871LL}---which
  is exactly dual to a tuple type.  Unlike function types, where $A \to B$ is
  generally quite different from $B \to A$, par types have the pleasant property
  that $A \parr B$ is isomorphic to $B \parr A$.
\item Modern programming languages with static types typically have some
  mechanism for type abstraction, so $\DUAL$ includes two dual quantifiers which
  have a solid grounding in logic.  The universal quantifier $\forall$ models
  parametric polymorphism \cite{Girard1972PhD,Reynolds1974TTTS,GTL1989PT}, and
  the existential quantifier $\exists$ models modules and abstract data types
  \cite{Pierce2002TPL,Harper2012PFPL}.  Although these quantifiers are standard
  extensions, they impose some serious constraints when reasoning about the
  correctness of the encodings, as discussed later in
  \cref{sec:encoding-type-iso}.
\item Since $\DUAL$ is based on the classical sequent calculus, it already has a
  natural notion of control effect due to the duality between information flow
  and control flow.  Therefore, our semantics and theory is robust in the face
  of effects like exceptions and recursion.
\end{itemize}

\subsection{Syntax}
\label{sec:dual-core-syntax}

The syntax of the $\DUAL$ calculus, given in
\cref{fig:dual-core-syntax}, is an extension of $\lmtmP$ that was
presented in \cref{sec:intro-polarity}. Notice how commands
$\cut{v}[\ann{A}{\strat{S}}]{e}$ are not just marked with a  sign
$\strat{s}$ but also with the type $A$ of both $v$ and $e$.  The only purpose of
annotating the type $A$ in the command is to aid in type checking, and it can
otherwise be ignored.  As such, we will just use the shorthand
$\cut{v}[\strat{S}]{e}$ when $A$ is clear from context or irrelevant to the
particular example.  As additional shorthand, for clarity we will write binary
connectives as infix (\eg $A \oplus B$ instead of ${\oplus}~A~B$) and the
quantifiers with their binding (\eg $\forall X{:}\strat{S}. A$ instead of
$\forall_{\strat{S}} (\fn{X{:}\strat{S}} A)$).

Recall from \cref{sec:intro-sequent,sec:intro-polarity,sec:intro-need} how
substitutability was an essential notion for understanding the
semantics of programs.  Substitutability also plays a central role of our core
calculus, which has a built-in notion of \emph{discipline}, denoted by ($\strat{S}$). 
A discipline  is used to stipulate which terms ($V_{\strat{S}}$) might
replace variables and which {\co}terms ($E_{\strat{S}}$) might replace
{\co}variables.  This is enough to distinguish the different semantics of the
four evaluation strategies, which are marked by a different discipline symbol:
call-by-value ($\CBV$), call-by-name ($\CBN$), call-by-need ($\CBNeed$), and
call-by-{\co}need ($\CoNeed$).

The four different disciplines are then defined through the respective
$V_{\strat{S}}$ and $E_{\strat{S}}$, which are essentially the same as the
previous $\lmtm$-calculi from \cref{sec:intro-polarity,sec:intro-need}.  The
only difference is that a heap $H$ can contain now any mixture of both delayed
call-by-need ($\CBNeed$) variable bindings, and delayed call-by-{\co}need
($\CoNeed$) {\co}variable bindings.

 The $\DUAL$ core calculus has
several additional basic types than was discussed in
\cref{sec:intro-sequent,sec:intro-polarity,sec:intro-need}, which can be
understood in the following groups.

%

\begin{figure}

Commands $c$, terms $v$, {\co}terms $e$, weak-head normal terms $W$, and forcing
{\co}terms $F$:
\begin{gather*}
\begin{aligned}
  c &::= \cut{v}[\ann{A}{\strat{S}}]{e}
  &\qqqquad
  v &::= W \Alt \outp{\alpha} c
  &\qqqquad
  e &::= F \Alt \inp{x} c
\end{aligned}
\\
\begin{alignedat}{12}
  W
  &::= x &
  &\Alt \cocaseof{\many[i]{\send{q_i}{c_i} \mid}} &
  &\Alt \inl W &
  &\Alt \inr W &
  &\Alt (W, W') &
  &\Alt () &
  &\Alt \cont F &
  &\Alt \pack A W &
  &\Alt \wrap[\strat{S}] V_{\strat{S}} &
  &\Alt \delay[\strat{S}] W &
  \\
  F
  &::= \alpha &
  &\Alt \caseof{\many[i]{\recv{p_i}{c_i} \mid}} &
  &\Alt \fst F &
  &\Alt \snd F &
  &\Alt [F, F'] &
  &\Alt [] &
  &\Alt \throw W &
  &\Alt \spec A F &
  &\Alt \unwrap[\strat{S}] E_{\strat{S}} &
  &\Alt \force[\strat{S}] F &
\end{alignedat}
\end{gather*}

Patterns $p$ and {\co}patterns $q$:
\begin{gather*}
\begin{alignedat}{12}
  p
  &::= \inl x &
  &\Alt \inr x &
  &\Alt (x, y) &
  &\Alt () &
  &\Alt \cont \alpha &
  &\Alt \pack X y &
  &\Alt \wrap[\strat{S}] x &
  &\Alt \delay[\strat{S}] x &
  \\
  q
  &::= \fst \alpha &
  &\Alt \snd \alpha &
  &\Alt [\alpha, \beta] &
  &\Alt [] &
  &\Alt \throw x &
  &\Alt \spec X \alpha &
  &\Alt \unwrap[\strat{S}] \alpha &
  &\Alt \force[\strat{S}] \alpha &
\end{alignedat}
\end{gather*}

Discipline-specific values $V_{\strat{S}}$, {\co}values $E_{\strat{S}}$, and
heap contexts $H$:
\begin{gather*}
\begin{aligned}
  V_{\CBN} &::= v
  &\quad
  V_{\CBV} &::= W
  &\quad
  V_{\CBNeed} &::= W
  &\quad
  V_{\CoNeed} &::= W \Alt \outp\alpha H[\cut{V_{\CoNeed}}[\ann{A}{\CoNeed}]{\alpha}]
  \\
  E_{\CBN} &::= F
  &\quad
  E_{\CBV} &::= e
  &\quad
  E_{\CBNeed} &::= F \Alt \inp x H[\cut{x}[\ann{A}{\CBNeed}]{E_{\CBNeed}}]
  &\quad
  E_{\CoNeed} &::= F
\end{aligned}
\\
\begin{aligned}
  H
  &::= \hole
  \Alt \cut{v}[\ann{A}{\CBNeed}]{\inp x H}
  \Alt \cut{\outp\alpha H}[\ann{A}{\CoNeed}]{e}
\end{aligned}
\end{gather*}

Types $A$, kinds $k$, disciplines $\strat{S}$, and connectives $\mk{F}$:
\begin{gather*}
\begin{aligned}
  A,B,C
  &::= X
  \Alt \mk{F}
  \Alt \fn{X{:}k}A
  \Alt A ~ B
  &\qquad
  k, l
  &::= \strat{S}
  \Alt k \to l
  &\qquad
  \strat{R}, \strat{S}, \strat{T}
  &::= \CBV
  \Alt \CBN
  \Alt \CoNeed
  \Alt \CBNeed
  \\
  \mk{F}, \mk{G}
  &::=
  \rlap
  {$
    \with
    \Alt \parr
    \Alt \top
    \Alt \bot
    \Alt \oplus
    \Alt \otimes
    \Alt 0
    \Alt 1
    \Alt \involn
    \Alt \involp
    \Alt \forall_{\strat{S}}
    \Alt \exists_{\strat{S}}
    \Alt \ToPos[\strat{S}]
    \Alt \ToNeg[\strat{S}]
    \Alt \FromPos[\strat{S}]
    \Alt \FromNeg[\strat{S}]
    $}
\end{aligned}
\end{gather*}

\caption{Syntax of System $\DUAL$: a core, fully dual, multi-discipline sequent
  calculus.}
\label{fig:dual-core-syntax}
\end{figure}


\subsubsection*{Positive data types}

All positive data types hereditarily follow a call-by-value strategy.  They have
values which are built from other positive values with constructors and
{\co}values which consume their input by matching on the possible constructions
they might be given of the general form
$\caseof{\recv{p_1}{c_1}\mid\dots\mid\recv{p_n}{c_n}}$ (we will drop the braces
when there is only one branch).  In addition to the sum type $A \oplus B$ from
\cref{sec:intro-sequent}, we also have the tuple type $A \otimes B$ which
contains pair values of the form $(W_1, W_2)$ that can be matched with a
{\co}value of the form $\sn{(x,y)}c$ that unpacks the two components of the
pair.  We also include the nullary versions of sum and tuple types.  The nullary
tuple is the unit type $1$, which contains a unit value $()$ and the matching
{\co}value $\sn{()}c$ that just waits to receive a $()$ before running $c$.  The
nullary sum is the empty type $0$, which contains no constructed values, so that
the matching {\co}value $\caseof{}$ has no branches, because there are no
constructions it might receive.

\subsubsection*{Negative data types}

All negative {\co}data types hereditarily follow a call-by-name strategy.  They
have {\co}values which are built from other negative {\co}values with
destructors and values which respond to their observer by matching on the
possible destruction they might encounter of the general form
$\cocaseof{\send{q_1}{c_1}\mid\dots\mid\send{q_n}{c_n}}$.  Dual to the sum type,
we have the product type $A \with B$ which contains projection {\co}values of
the form $\proj{i}{F}$ which is matched by a value of the form
$\cocaseof{\send{\fst\alpha}{c_1}\mid\send{\snd\beta}{c_2}}$ that will run the
command $c_1$ if asked for its first component or $c_2$ if asked for its second
component.  Dual to the tuple type, we have the par type $A \parr B$ which
contains the compound {\co}value $[F_1, F_2]$ that pairs up two other consumers
and can be matched by a value of the form $\fn{[\alpha,\beta]}c$ that unpacks
the pair of consumers before running $c$.  As before, we include the nullary
versions of product and par types.  The nullary product type is the trivial type
$\top$, which contains no possible destructors, so that the matching value
$\cocaseof{}$ has no branches since it will never need to respond to any
request.  The nullary par type is the absurd type $\bot$, which contains the
terminal destructor $[]$ with the matching value $\fn{[]}c$ that just waits for
the empty signal $[]$ before running $c$.

\subsubsection*{Involutive negation}

The two negation types can be thought of as two dual way for representing
first-class continuations in a programming language.  One way to formulate a
continuation is by capturing the context as a first class value.  This
corresponds to the data type $\involp A$ which packages up a {\co}value $F$ as
the value $\cont F$, which can be later unpacked by pattern-matching in
$\sn{\cont\alpha}c$.  Another way to formulate continuations is through
functions that never return.  This corresponds to the {\co}data type $\involn A$
which has values of the form $\fn{\throw x}c$, which is analogous to a function
abstraction that does not bind a return pointer, and {\co}values of the form
$\throw W$, which is analogous to throwing a value to a continuation without
waiting for a returned result.  This analogy reverses the usual order of ideas
by making continuations primitive, so that function types are instead derived
like so:
\begin{align*}
  A \to B &= (\involn A) \parr B
  &
  \app{W}{F} &= [\throw W, F]
  &
  \fnc x \alpha c &= \fn{[\throw x, \alpha]}c
  =
  \fn{[\beta,\alpha]}\cut{\fn{\throw x}c}{\beta}
\end{align*}

Also note that the two negation connectives belong to different disciplines and
are, in fact, polarity inverting.  The $\involn$ form of negation converts a
positive type $A : \CBV$ to the negative $\involn A : \CBN$, and dually
$\involp$ converts a negative type $B : \CBN$ to the positive
$\involp A : \CBV$.  The inversion between positive and negative corresponds to
the inversion between input and output inherent in continuations, and is
essential to making negation involutive \cite{MunchMaccagnoni2014FASIN}:
$\involp \involn A$ is isomorphic to $A$, and likewise $\involn \involp B$ is
isomorphic to $B$.  This would not hold if we forced the negation connectives to
preserve the same polarity of the given type as in \cite{Zeilberger2009PhD}.

\subsubsection*{Type quantifiers}

Lastly, we have the two quantifiers for abstracting over types in programs.  The
sequent calculus helps to show the duality of these two forms of type
abstraction as a syntactic symmetry between producers and consumers.  The
universal quantifier $\forall$ is modeled as a negative {\co}data type analogous
to the quantifier of System F, and can quantify over any of the four kinds of
types ($\CBV$, $\CBN$, $\CBNeed$, or $\CoNeed$).  Values of type
$\forall X{:}\strat{S}. A$ abstract over types of kind $\strat{S}$ similar to a
function of the form $\fn{\spec{X}{\alpha}}{c}$, which matches against a
destructor of the form $\spec{B}{F}$.  Notice that this means the consumer
chooses a private type $B$ for its observation, which forces polymorphism on the
value it interacts with.  Dually, the existential quantifier $\exists$ is
modeled as a positive data type, with exactly the reversed roles.  Values of
type $\exists X{:}\strat{S}. A$ choose a private type used in another value
which is not visible from the outside, and the two are packaged together as
$\pack{B}{W}$.  Consumers of existential packages must then abstract over the
privately chosen type, of the form $\sn{\pack{X}{y}}{c}$ similar to pattern
matching on tuples.

\subsubsection*{Shifts}

So far, all the basic types above follow the standard polarized recipe.  Our
treatment of the shifts is where $\DUAL$ really differs from other calculi.
First, let's consider the shifts which are modeled as data types---that is, with
constructors---which coerce other types into and out of the call-by-value world.
The shift $\ToPos[\strat{S}]$ embeds an $\strat{S}$-value of type
$A : \strat{S}$ to a positive data structure $\ToPos[\strat{S}] A : \CBV$.  The
construction $\wrap[\strat{S}]{V_{\strat{S}}}$ can be thought of like a box
containing the value $V_{\strat{S}}$, where the notion of ``value'' depends on
$\strat{S}$.  For example, if $\strat{S}$ is $\CBN$, then $V_{\strat{S}}$ may be
any term, but if it is $\CBV$, then it must be some $W$.  The matching form
$\casewrap[\strat{S}]{x}{c}$ unwraps this box and binds the value to $x$ in $c$.
In contrast, the shift $\FromPos[\strat{S}]$ embeds a positive data structure
$A : \CBV$ as a value of type $\FromPos[\strat{S}] A : \strat{S}$ in any
discipline.  The construction $\delay[\strat{S}]W$ can be thought of as
returning the call-by-value answer $W$ in some other context, and
$\casedelay[\strat{S}]{x}{c}$ forces computations of type $\FromPos[\strat{S}]A$
to bind this answer to $x$ in $c$.

The shifts which are modeled as {\co}data types--that is, with destructors---are
dual to the above, and allow for coercions into and out of the call-by-name
world.  The shift $\ToNeg[\strat{S}]$ embeds an $\strat{S}$-{\co}value of type
$A : \strat{S}$, that is a substitutable consumer of $A$s, into a negative
observation of type $\ToNeg[\strat{S}] A : \CBN$.  The destructor
$\unwrap[\strat{S}]{E_{\strat{S}}}$ can be thought of as recognizing that
$E_{\strat{S}}$ is strict on its input, where the options for the {\co}value
$E_{\strat{S}}$ depend on the discipline $\strat{S}$.  For example, if
$\strat{S}$ is $\CBV$, then any $e$ is a strict {\co}value, but if $\strat{S}$
is $\CBN$, then only a forcing $F$ is strict.  The matching
$\cocaseunwrap[\strat{S}]{\alpha}{c}$ is then a negative computation which binds
$\alpha$ to this strict {\co}value.  In contrast, the shift
$\FromNeg[\strat{S}]$ embeds a negative observation of type $A : \CBN$ as a
{\co}value of type $\FromNeg[\strat{S}] A : \strat{S}$ in other discipline.  The
observation $\force[\strat{S}]F$ can be thought of as an observation for forcing
some thunk, which is represented by $\cocaseforce[\strat{S}]{\alpha}{c}$.

\begin{intermezzo}
\label{thm:type-subsyntax}

Each of the connectives in \cref{fig:dual-core-syntax} has an intended
discipline, which corresponds with an evaluation strategy.  For example, the
tuple type constructor takes two call-by-value types and returns another
call-by-value type, \ie $\otimes : \CBV \to \CBV \to \CBV$.  Calculi in this
vein sometimes embed some additional amount of well-formedness into the syntax
of types themselves, rather than as an auxiliary check after the fact,
especially when higher kinds of the form $k \to l$ are left out.  The
corresponding well-formed sub-syntax of non-higher-kinded $\DUAL$ types (where
type variables $X : \strat{S}$ are explicitly annotated as $X_{\strat{S}}$) is
given by the following grammar, where we have a different syntax of types
$A_{\strat{S}}$ for each discipline $\strat{S}$:
\begin{align*}
  A_\CBV
  &::= X_\CBV
    \Alt \FromPos[\CBV] A_\CBV
    \Alt \FromNeg[\CBV] A_\CBN
    \Alt A_\CBV \oplus B_\CBV
    \Alt A_\CBV \otimes B_\CBV
    \Alt 0
    \Alt 1
    \Alt \involp A_\CBN
    \Alt \exists X{:}\strat{S}. A_\CBV
    \Alt \ToPos[\strat{S}] A_{\strat{S}}
  \\
  A_\CBN
  &::= X_\CBN
    \Alt \FromPos[\CBN] A_\CBV
    \Alt \FromNeg[\CBN] A_\CBN
    \Alt A_\CBN \with B_\CBN
    \Alt A_\CBN \parr B_\CBN
    \Alt \top
    \Alt \bot
    \Alt \involn A_\CBV
    \Alt \forall X{:}\strat{S}. A_\CBN
    \Alt \ToNeg[\strat{S}] A_{\strat{S}}
  \\
  A_\CBNeed
  &::= X_\CBNeed
    \Alt \FromPos[\CBNeed] A_\CBV
    \Alt \FromNeg[\CBNeed] A_\CBN
  \\
  A_\CoNeed
  &::= X_\CoNeed
    \Alt \FromPos[\CoNeed] A_\CBV
    \Alt \FromNeg[\CoNeed] A_\CBN
\end{align*}
A similar exercise can be done to sub-divide expressions (terms, {\co}terms,
\etc[]) according to discipline based on which evaluation strategy they follow.
But the downside is an explosion of cases in the syntactic grammar.
\end{intermezzo}

\subsection{Type system}
\label{sec:dual-core-types}

\begin{figure}
\centering
\begin{equation*}
\begin{aligned}
  \oplus &: \CBV \to \CBV \to \CBV
  &
  \otimes &: \CBV \to \CBV \to \CBV
  &
  0 &: \CBV
  &
  1 &: \CBV
  &
  \involp &: \CBN \to \CBV
  &
  \exists_{\strat{S}} &: (\strat{S} \to \CBV) \to \CBV
  \\
  \with &: \CBN \to \CBN \to \CBN
  &
  \parr &: \CBN \to \CBN \to \CBN
  &
  \top &: \CBN
  &
  \bot &: \CBN
  &
  \involn &: \CBV \to \CBN
  &
  \forall_{\strat{S}} &: (\strat{S} \to \CBN) \to \CBN
  \\
  \ToPos[\strat{S}] &: \strat{S} \to \CBV
  &
  \ToNeg[\strat{S}] &: \strat{S} \to \CBN
  &
  \FromPos[\strat{S}] &: \CBV \to \strat{S}
  &
  \FromNeg[\strat{S}] &: \CBN \to \strat{S}
\end{aligned}
\end{equation*}
\begin{gather*}
  \axiom{\typety[\DUAL]{\Theta,X:k}{X}{k}}
  \quad
  \infer
  {\typety[\DUAL]{\Theta}{\mk{F}}{k}}
  {\mk{F}:k}
  \quad
  \infer
  {\typety[\DUAL]{\Theta}{\fn{X{:}k}A}{k \to l}}
   {\typety[\DUAL]{\Theta,X:k}{A}{l}}
  \quad
  \infer
  {\typety[\DUAL]{\Theta}{A~B}{l}}
  {\typety[\DUAL]{\Theta}{A}{k \to l} & \typety[\DUAL]{\Theta}{B}{k}}
  \\[1ex]
  \infer
  {\seqwf[\DUAL][\Theta]{\many{x:A:\strat{T}}}{\many{\beta:B:\strat{R}}}}
  {
    \many{(\typety[\DUAL]{\Theta}{A}{\strat{T}})}
    &
    \many{(\typety[\DUAL]{\Theta}{B}{\strat{R}})}
  }
\end{gather*}
\caption{Kinds of System $\DUAL$ types.}
\label{fig:dual-core-kinds}
\end{figure}

The type system for the dual core calculus $\DUAL$ is given in
\cref{fig:dual-core-kinds,fig:dual-core-generic-types,fig:dual-core-positive-types,fig:dual-core-negative-types,fig:dual-core-shift-types}.
\cref{fig:dual-core-kinds} gives the kind system for checking that types are
well-formed, which is standard for calculi based on System F with higher kinds.
More specifically, the kinding rules have judgments of the form
$\typety[\DUAL]{\Theta}{A}{k}$---where $\Theta$ assigns a kind to each free type
variable $X$ of $A$---and are equivalent to a simply-typed $\lambda$-calculus at
the type level with variables, type constructors (for connectives),
applications, and abstractions.  The subscript $\DUAL$ signifies that types are
built using only the core connectives of System $\DUAL$, as opposed to other
sets of connectives that are allowed later on in \cref{sec:dual-ext}.  We can
also check that a sequent is well-formed with
$\seqwf[\DUAL][\Theta]{\Gamma}{\Delta}$ by checking that each type in $\Gamma$
and $\Delta$ has some kind $\strat{S}$ in the environment $\Theta$.  Finally, we
will treat the standard $\alpha\beta\eta$ equality between same-kinded types as
implicit for the purpose of type checking.  Formally, if
$A_1 \eq[\alpha\beta\eta] A_2$ and $\typety[\DUAL]{\Theta}{A_i}{k}$, then $A_1$
and $A_2$ are considered the same type in the environment $\Theta$.

\begin{figure}
\centering
\newcommand{\ruleskip}{0.25ex}
\newcommand{\axiomskip}{0.25ex}
\begin{gather*}
  \infer[\CutRule]
  {\typecmd[\DUAL][\Theta]{\Gamma}{\Delta}{\cut{v}[A{:}\strat{S}]{e}}}
  {
    \typetm[\DUAL][\Theta]{\Gamma}{\Delta}{v}{A}
    &
    \typety[\DUAL]{\Theta}{A}{\strat{S}}
    &
    \typecotm[\DUAL][\Theta]{\Gamma}{\Delta}{e}{A}
  }
  \\[\axiomskip]
  \axiom[\VR]{\typetmfoc[\DUAL][\Theta]{\Gamma,x:A}{\Delta}{x}{A}}
  \qqqquad
  \axiom[\VL]{\typecotmfoc[\DUAL][\Theta]{\Gamma}{\alpha:A,\Delta}{\alpha}{A}}
  \\[\ruleskip]
  \infer[\AR]
  {\typetm[\DUAL][\Theta]{\Gamma}{\Delta}{\outp\alpha c}{A}}
  {\typecmd[\DUAL][\Theta]{\Gamma}{\alpha:A,\Delta}{c}}
  \qqqquad
  \infer[\AL]
  {\typecotm[\DUAL][\Theta]{\Gamma}{\Delta}{\inp x c}{A}}
  {\typecmd[\DUAL][\Theta]{\Gamma,x:A}{\Delta}{c}}
  \\[\ruleskip]
  \infer[\FR]
  {\typetm[\DUAL][\Theta]{\Gamma}{\Delta}{v}{A}}
  {\typetmfoc[\DUAL][\Theta]{\Gamma}{\Delta}{v}{A}}
  \qqqquad
  \infer[\FL]
  {\typecotm[\DUAL][\Theta]{\Gamma}{\Delta}{e}{A}}
  {\typecotmfoc[\DUAL][\Theta]{\Gamma}{\Delta}{e}{A}}
  \\[\ruleskip]
  \infer[\BR]
  {\typetmfoc[\DUAL][\Theta]{\Gamma}{\Delta}{V_{\strat{S}}}{A}}
  {
    \typetm[\DUAL][\Theta]{\Gamma}{\Delta}{V_{\strat{S}}}{A}
    &
    \typety[\DUAL]{\Theta}{A}{\strat{S}}
  }
  \qqqquad
  \infer[\BL]
  {\typecotmfoc[\DUAL][\Theta]{\Gamma}{\Delta}{E_{\strat{S}}}{A}}
  {
    \typety[\DUAL]{\Theta}{A}{\strat{S}}
    &
    \typecotm[\DUAL][\Theta]{\Gamma}{\Delta}{E_{\strat{S}}}{A}
  }
\end{gather*}
\caption{Typing rules of System $\DUAL$ that apply to any type.}
\label{fig:dual-core-generic-types}
\end{figure}

The generic typing rules that apply to any type are given in
\cref{fig:dual-core-generic-types}.  The main departure from the previous
polarized type system in \cref{sec:intro-polarity} is that typing judgments
using a stoup ($\stoup$) have been generalized beyond syntactic forms $W$ and
$F$ to the more semantic notions of value and {\co}value.  That is, the judgment
$\typetmfoc[\DUAL][\Theta]{\Gamma}{\Delta}{v}{A}$ is well-formed when $v$ is a
value $V_{\strat{S}}$ and $A$ is an $\strat{S}$-type such that
$\typety[\DUAL]{\Theta}{A}{\strat{S}}$ is derivable, and similarly for the dual
judgment.  By relaxing syntactic stipulations on the stoup, we can accommodate
call-by-need (having a notion of {\co}value that is bigger than $F$ and smaller
than $e$) and call-by-{\co}need (with values between $W$ and $v$).  To formalize
this idea, we add the $\BR$ and $\BL$ rules---opposite in direction to $\FR$ and
$\FL$---that \emph{only} apply to values and {\co}values as defined by the
discipline of their type.  By the analogy in \cref{rm:focus}, this gives a
systematic approach to the connection between evaluation and focusing that
applies to more strategies than just call-by-value and call-by-name.

Note that type checking is still decidable even though (\co)variables binders
are not annotated with their type.  This is possible because of the sequent
calculus' \emph{sub-formula property} \citep{Gentzen1935UULS1}---every type
appearing in a premise of an inference rule appears somewhere in the conclusion
of that rule.  The only exception to the sub-formula property is the $\CutRule$
rule that introduces arbitrary new types in the premises.  The new type
introduced by a $\CutRule$ is made manifest in the syntax of expressions as the
only necessary typing annotation, so that there is never any need to ``guess''
or infer during type checking.  The ability to avoid inference altogether is an
interesting property of the sequent calculus not shared by the
$\lambda$-calculus.  Effectively, every elimination rule of the
$\lambda$-calculus is ``hiding'' an implicit cut, which forces us to reconstruct
the type of the implicit cut during checking.  The sub-formula property also
aids in separating typing from checking that judgments are well-formed by just
checking that the type of a $\CutRule$ makes sense in the current environment:
if the conclusion of an inference rule is well-formed (\ie
$\seqwf[\DUAL][\Theta]{\Gamma}{\Delta}$), then so too are the premises.

\begin{figure}
\centering
\renewcommand{\oftype}{{\,:\,}}
\renewcommand{\ofkind}{{\,:\,}}
\renewcommand{\entails}{{\,\vdash}}
\renewcommand{\act}{{\,|\,}}
\newcommand{\ruleskip}{0.5ex}
\newcommand{\axiomskip}{0.5ex}
\newcommand{\namespace}{}
\begin{math}
\begin{gathered}
  \infer[\namespace{\oplusR[i]}]
  {\typetmfoc[\DUAL][\Theta]{\Gamma}{\Delta}{\inj{i} W}{A_1 \oplus A_2}}
  {\typetmfoc[\DUAL][\Theta]{\Gamma}{\Delta}{W}{A_i}}
  \qqqquad
  \infer[\namespace\oplusL]
  {
    \typecotmfoc[\DUAL][\Theta]{\Gamma}{\Delta}
    {\caseof{\send{\inl x}{c_1}|\send{\inr y}{c_2}}}
    {A \oplus B}
  }
  {
    \typecmd[\DUAL][\Theta]{\Gamma, x \oftype A}{\Delta}{c_1}
    &
    \typecmd[\DUAL][\Theta]{\Gamma, y \oftype B}{\Delta}{c_2}
  }
  \\[\ruleskip]
  \infer[\namespace\otimesR]
  {\typetmfoc[\DUAL][\Theta]{\Gamma}{\Delta}{(W_1,W_2)}{A \otimes B}}
  {
    \typetmfoc[\DUAL][\Theta]{\Gamma}{\Delta}{W_1}{A}
    &
    \typetmfoc[\DUAL][\Theta]{\Gamma}{\Delta}{W_2}{B}
  }
  \qqqquad
  \infer[\namespace\otimesL]
  {
    \typecotmfoc[\DUAL][\Theta]{\Gamma}{\Delta}
    {\sn{(x, y)}c}
    {A \otimes B}
  }
  {\typecmd[\DUAL][\Theta]{\Gamma, x \oftype A, y \oftype B}{\Delta}{c}}
  \\[\ruleskip]
  \text{no $\zeroR$ rules}
  \qquad
  \axiom[\namespace{\zeroL}]
  {\typecotmfoc[\DUAL][\Theta]{\Gamma}{\Delta}{\caseof{}}{0}}
  \qquad
  \axiom[\namespace\oneR]
  {\typetmfoc[\DUAL][\Theta]{\Gamma}{\Delta}{()}{1}}
  \qquad
  \infer[\namespace\oneL]
  {\typecotmfoc[\DUAL][\Theta]{\Gamma}{\Delta}{\sn{()}c}{1}}
  {\typecmd[\DUAL][\Theta]{\Gamma}{\Delta}{c}}
  \\[\ruleskip]
  \infer[\namespace\involpR]
  {\typetmfoc[\DUAL][\Theta]{\Gamma}{\Delta}{\cont F}{\involp A}}
  {\typecotmfoc[\DUAL][\Theta]{\Gamma}{\Delta}{F}{A}}
  \qqqquad
  \infer[\namespace\involpL]
  {
    \typecotmfoc[\DUAL][\Theta]{\Gamma}{\Delta}
    {\sn{\cont\alpha}c}{\involp A}
  }
  {\typecmd[\DUAL][\Theta]{\Gamma}{\alpha \oftype A, \Delta}{c}}
  \\[\ruleskip]
  \infer[\namespace\existsR]
  {\typetmfoc[\DUAL][\Theta]{\Gamma}{\Delta}{\pack B W}{\exists_{\strat{S}} A}}
  {
    \typety[\DUAL]{\Theta}{B}{\strat{S}}
    &
    \typetmfoc[\DUAL][\Theta]{\Gamma}{\Delta}{W}{A~B}
  }
  \qqqquad
  \infer[\namespace\existsL]
  {
    \typecotmfoc[\DUAL][\Theta]{\Gamma}{\Delta}
    {\sn{(\pack X y)}c}{\exists_{\strat{S}} A}
  }
  {\typecmd[\DUAL][\Theta, X{\ofkind}\strat{S}]{\Gamma, y \oftype A~X}{\Delta}{c}}
\end{gathered}
\end{math}
\caption{Typing rules of the System $\DUAL$ positive connectives.}
\label{fig:dual-core-positive-types}
\end{figure}
\begin{figure}
\centering
\renewcommand{\oftype}{{\,:\,}}
\renewcommand{\ofkind}{{\,:\,}}
\renewcommand{\entails}{{\,\vdash}}
\renewcommand{\act}{{\,|\,}}
\newcommand{\ruleskip}{0.5ex}
\newcommand{\axiomskip}{0.5ex}
\newcommand{\namespace}{}
\begin{math}
\begin{gathered}
  \infer[\namespace{\withL[i]}]
  {\typecotmfoc[\DUAL][\Theta]{\Gamma}{\Delta}{\proj{i} F}{A_1 \with A_2}}
  {\typecotmfoc[\DUAL][\Theta]{\Gamma}{\Delta}{F}{A_i}}
  \qqqquad
  \infer[\namespace\withR]
  {
    \typetmfoc[\DUAL][\Theta]{\Gamma}{\Delta}
    {\cocaseof{\send{\fst\alpha}{c_1}\mid\send{\snd\beta}{c_2}}}
    {A \with B}
  }
  {
    \typecmd[\DUAL][\Theta]{\Gamma}{\alpha \oftype A,\Delta}{c_1}
    &
    \typecmd[\DUAL][\Theta]{\Gamma}{\beta \oftype B,\Delta}{c_2}
  }
  \\[\ruleskip]
  \infer[\namespace\parrL]
  {\typecotmfoc[\DUAL][\Theta]{\Gamma}{\Delta}{[F_1,F_2]}{A \parr B}}
  {
    \typecotmfoc[\DUAL][\Theta]{\Gamma}{\Delta}{F_1}{A}
    &
    \typecotmfoc[\DUAL][\Theta]{\Gamma}{\Delta}{F_2}{B}
  }
  \qqqquad
  \infer[\namespace\parrR]
  {
    \typetmfoc[\DUAL][\Theta]{\Gamma}{\Delta}
    {\fn{[\alpha,\beta]}c}
    {A \parr B}
  }
  {\typecmd[\DUAL][\Theta]{\Gamma}{\alpha \oftype A, \beta \oftype B,\Delta}{c}}
  \\[\ruleskip]
  \axiom[\namespace{\topR}]
  {\typetmfoc[\DUAL][\Theta]{\Gamma}{\Delta}{\cocaseof{}}{\top}}
  \qquad
  \text{no $\topL$ rules}
  \qquad
  \axiom[\namespace\botL]
  {\typecotmfoc[\DUAL][\Theta]{\Gamma}{\Delta}{[]}{\bot}}
  \qquad
  \infer[\namespace\botR]
  {\typetmfoc[\DUAL][\Theta]{\Gamma}{\Delta}{\fn{[]}c}{\bot}}
  {\typecmd[\DUAL][\Theta]{\Gamma}{\Delta}{c}}
  \\[\ruleskip]
  \infer[\namespace\involnL]
  {\typecotmfoc[\DUAL][\Theta]{\Gamma}{\Delta}{\throw W}{\involn A}}
  {\typetmfoc[\DUAL][\Theta]{\Gamma}{\Delta}{W}{A}}
  \qqqquad
  \infer[\namespace\involnR]
  {
    \typetmfoc[\DUAL][\Theta]{\Gamma}{\Delta}
    {\fn{\throw x}c}{\involn A}
  }
  {\typecmd[\DUAL][\Theta]{\Gamma, x \oftype A}{\Delta}{c}}
  \\[\ruleskip]
  \infer[\namespace\forallL]
  {\typecotmfoc[\DUAL][\Theta]{\Gamma}{\Delta}{\spec B F}{\forall_{\strat{S}} A}}
  {
    \typety[\DUAL]{\Theta}{B}{\strat{S}}
    &
    \typecotmfoc[\DUAL][\Theta]{\Gamma}{\Delta}{F}{A~B}
  }
  \qqqquad
  \infer[\namespace\forallR]
  {
    \typetmfoc[\DUAL][\Theta]{\Gamma}{\Delta}
    {\fn{[\spec X \alpha]}c}{\forall_{\strat{S}} A}
  }
  {\typecmd[\DUAL][\Theta, X{\ofkind}\strat{S}]{\Gamma}{\alpha \oftype A~X,\Delta}{c}}
\end{gathered}
\end{math}
\caption{Typing rules of the System $\DUAL$ negative connectives.}
\label{fig:dual-core-negative-types}
\end{figure}

\begin{figure}
\centering
\renewcommand{\oftype}{{\,:\,}}
\renewcommand{\ofkind}{{\,:\,}}
\renewcommand{\entails}{{\,\vdash}}
\renewcommand{\act}{{\,|\,}}
\newcommand{\ruleskip}{0.5ex}
\newcommand{\axiomskip}{0.5ex}
\newcommand{\namespace}{}
\begin{math}
\begin{gathered}
  \infer[\namespace{\ToPosR}]
  {
    \typetmfoc[\DUAL][\Theta]{\Gamma}{\Delta}
    {\wrap[\strat{S}]{V_{\strat{S}}}}{\ToPos[\strat{S}] A}
  }
  {\typetmfoc[\DUAL][\Theta]{\Gamma}{\Delta}{V_{\strat{S}}}{A}}
  \qqqquad
  \infer[\namespace{\ToPosL}]
  {
    \typecotmfoc[\DUAL][\Theta]{\Gamma}{\Delta}
    {\sn{\wrap[\strat{S}] x}c}
    {\ToPos[\strat{S}] A}
  }
  {\typecmd[\DUAL][\Theta]{\Gamma, x \oftype A}{\Delta}{c}}
  \\[\ruleskip]
  \infer[\namespace{\FromNegR}]
  {
    \typetmfoc[\DUAL][\Theta]{\Gamma}{\Delta}
    {\fn{\force[\strat{S}] \alpha}c}
    {\FromNeg[\strat{S}] A}
  }
  {\typecmd[\DUAL][\Theta]{\Gamma}{\alpha \oftype A,\Delta}{c}}
  \qqqquad
  \infer[\namespace{\FromNegL}]
  {
    \typecotmfoc[\DUAL][\Theta]{\Gamma}{\Delta}
    {\force[\strat{S}] F}{\FromNeg[\strat{S}] A}
  }
  {\typecotmfoc[\DUAL][\Theta]{\Gamma}{\Delta}{F}{A}}
  \\[\ruleskip]
  \infer[\namespace{\ToNegR}]
  {
    \typetmfoc[\DUAL][\Theta]{\Gamma}{\Delta}
    {\fn{\unwrap[\strat{S}]\alpha}c}
    {\ToNeg[\strat{S}] A}
  }
  {\typecmd[\DUAL][\Theta]{\Gamma}{\alpha \oftype A,\Delta}{c}}
  \qqqquad
  \infer[\namespace{\ToNegL}]
  {
    \typecotmfoc[\DUAL][\Theta]{\Gamma}{\Delta}
    {\unwrap[\strat{S}] E_{\strat{S}}}{\ToNeg[\strat{S}] A}
  }
  {\typecotmfoc[\DUAL][\Theta]{\Gamma}{\Delta}{E_{\strat{S}}}{A}}
  \\[\ruleskip]
  \infer[\namespace{\FromPosR}]
  {
    \typetmfoc[\DUAL][\Theta]{\Gamma}{\Delta}
    {\delay[\strat{S}] W}{\FromPos[\strat{S}] A}
  }
  {\typetmfoc[\DUAL][\Theta]{\Gamma}{\Delta}{W}{A}}
  \qqqquad
  \infer[\namespace{\FromPosL}]
  {
    \typecotmfoc[\DUAL][\Theta]{\Gamma}{\Delta}
    {\sn{\delay[\strat{S}] x}c}
    {\FromPos[\strat{S}] A}
  }
  {\typecmd[\DUAL][\Theta]{\Gamma, x \oftype A}{\Delta}{c}}
\end{gathered}
\end{math}
\caption{Typing rules of the System $\DUAL$ polarity shift connectives.}
\label{fig:dual-core-shift-types}
\end{figure}

Finally, we have the typing rules for specific connectives: positive types are
given in \cref{fig:dual-core-positive-types}, negative types in
\cref{fig:dual-core-negative-types}, and the shifts in
\cref{fig:dual-core-shift-types}.  The typing rules for positive and negative
types are standard for a sequent-calculus presentation like System L
\cite{MunchMaccagnoni2013PhD}, and the typing rules for shifts are the same as
in \cref{sec:intro-polarity}.  Note how the more systematic approach to focusing
shows the difference between the two different styles of shifts, the data shifts
$\ToPos[\strat{S}]$ and $\FromPos[\strat{S}]$ both lose focus on the left
(because their premise is a command), whereas the dual {\co}data shifts
$\ToNeg[\strat{S}]$ and $\FromNeg[\strat{S}]$ both lose focus on the right.

\subsection{Equational theory}
\label{sec:dual-core-equation}

In order to reason about the behavior of expressions, we give an equational
theory of the $\DUAL$ calculus in \cref{fig:dual-core-equality}.  Notice that it
has the same $\betamu$ and $\betatmu$ rules as $\lmtmP$ from
\cref{sec:intro-polarity}: the discipline sign $\strat{S}$ in the command is
used to decide which notion of value or {\co}value is allowed.  To make sure
that this discipline is being respected during a substitution, we use the
restricted form $\subst{\anns{x}{A}{\strat{S}}}{V_{\strat{S}}}$ and
$\subst{\anns{\alpha}{A}{\strat{S}}}{E_{\strat{S}}}$ such that the syntactic sign
$\strat{S}$ is matched by the syntactic category $V_{\strat{S}}$ or
$E_{\strat{S}}$.  In other words, each discipline has its own notion of valid
substitutions, and the discipline of a bound variable tells us what values or
{\co}values might replace it.  In contrast, the type annotating a command has no
impact on the reductions it may take, and so they can be ignored.

\begin{figure}
\centering
Disciplined substitutions ($\rho$):
\begin{gather*}
  \rho
  ::=
  \many{\asub{\ann{X}{k}}{A}},~
  \many{\asub{\anns{x}{B}{\strat{T}}}{V_{\strat{T}}}},~
  \many{\asub{\anns{\alpha}{C}{\strat{R}}}{E_{\strat{R}}}}
\end{gather*}

Reduction rules:
\begin{gather*}
\begin{aligned}
  \rewriterule{\betamu}
  {\!\!
    \cut{\outp{\alpha}c}[\ann{A}{\strat{S}}]{E_{\strat{S}}}
    &\red
    \rlap{$c\subst{\anns{\alpha}{A}{\strat{S}}}{E_{\strat{S}}}$}
    &&
  }
  &
  \rewriterule{\betatmu}
  {\!\!
    \cut{V_{\strat{S}}}[\ann{A}{\strat{S}}]{\inp{x}c}
    &\red
    \rlap{$c\subst{\anns{x}{A}{\strat{S}}}{V_{\strat{S}}}$}
    &&
  }
  \\
  \rewriterule{\etamu}
  {\!\!
    \outp{\alpha}\cut{v}[\ann{A}{\strat{S}}]{\alpha}
    &\red
    v
    &&\!\!
    {\scriptstyle(\alpha \notin \FV(v))}
  }
  &
  \rewriterule{\etatmu}
  {\!\!
    \inp{x}\cut{x}[\ann{A}{\strat{S}}]{e}
    &\red
    e
    &&\!\!
    {\scriptstyle(x \notin \FV(e))}
  }
  \\
  \rewriterule{\betar[q]}
  {\!\!
    \cut
    {{\cocaseof{\many[i]{\send{q_i}{c_i}}}}}
    [\ann{A}{\strat{S}}]
    {q\subs{\rho}}
    &\red
    c_i\subs{\rho}
    &&\!\!
    {\scriptstyle(q = q_i)}
  }
  &
  \rewriterule{\betar[p]}
  {\!\!
    \cut
    {p\subs{\rho}}
    [\ann{A}{\strat{S}}]
    {{\caseof{\many[i]{\recv{p_i}{c_i}}}}}
    &\red
    c_i\subs{\rho}
    &&\!\!
    {\scriptstyle(p = p_i)}
  }
  \\
  \rewriterule{\etar[q]}
  {\!\!
    \cocaseof{\many{\send{q}{\cut{x}[\ann{A}{\strat{S}}]{q}}}}
    &\red
    x
    &&\!\!
    {\scriptstyle(x \notin \FV(q))}
  }
  &
  \rewriterule{\etar[p]}
  {\!\!
    \caseof{\many{\recv{p}{\cut{p}[\ann{A}{\strat{S}}]{\alpha}}}}
    &\red
    \alpha
    &&\!\!
    {\scriptstyle(\alpha \notin \FV(p))}
  }
\end{aligned}
\\[1ex]
\begin{aligned}
  \rewriterule{\swapr[\CBNeed]}
  {
    \cut
    {\outp\alpha \cut{v}[\ann B \CBNeed]{\inp y c}}
    [\ann A \CBNeed]
    {e}
    &\red
    \cut
    {v}
    [\ann B \CBNeed]
    {\inp y \cut{\outp\alpha c}[\ann A \CBNeed]{e}}
    &\qquad&
    {\scriptstyle(x \notin \FV(e), \alpha \notin \FV(v))}
  }
  \\
  \rewriterule{\swapr[\CoNeed]}
  {
    \cut
    {v}
    [\ann A \CoNeed]
    {\inp x \cut{\outp\beta c}[\ann B \CoNeed]{e}}
    &\red
    \cut
    {\outp\beta\cut{v}[\ann A \CoNeed]{\inp x c}}
    [\ann B \CoNeed]
    {e}
    &\qquad&
    {\scriptstyle(x \notin \FV(e), \alpha \notin \FV(v))}
  }
\end{aligned}
\end{gather*}

Typed equality relation:
\begin{gather*}
\begin{gathered}
  \infer
  {\typecmd[\DUAL][\Theta]{\Gamma}{\Delta}{c = c'}}
  {
    \typecmd[\DUAL][\Theta]{\Gamma}{\Delta}{c}
    &
    c \red c'
  }
  \qqqquad
  \infer
  {\typetm[\DUAL][\Theta]{\Gamma}{\Delta}{v = v'}{A}}
  {
    \typetm[\DUAL][\Theta]{\Gamma}{\Delta}{v}{A}
    &
    v \red v'
  }
  \qqqquad
  \infer
  {\typecotm[\DUAL][\Theta]{\Gamma}{\Delta}{e = e'}{A}}
  {
    \typetm[\DUAL][\Theta]{\Gamma}{\Delta}{e}{A}
    &
    e \red e'
  }
\end{gathered}
\end{gather*}
plus inference rules for compatibility, reflexivity, symmetry, transitivity of equality
\caption{Equational theory for System $\DUAL$.}
\label{fig:dual-core-equality}
\end{figure}

The various $\beta$ rules for specific data and {\co}data types (like sum types
and shifts) are all summarized by the general $\betar[p]$ and $\betar[q]$ rules,
which perform a pattern or {\co}pattern match.  The $\betar[p]$ rule says that
if a constructed value can be decomposed as a pattern $p$ and substitution
$\rho$, then the matching branch of the {\co}value can be taken.  For example,
we would have the following instance of the rule for tuple patterns (ignoring
the type annotation):
\begin{align*}
  \cut{(W_1,W_2)}[\CBV]{\sn{(x,y)}c}
  &=
  \cut
  {(x,y)\subs{\asub{\anns{x}{A}{\CBV}}{W_1},\asub{\anns{y}{A}{\CBV}}{W_2}}}
  [\CBV]
  {\sn{(x,y)}c}
  \red[{\betar[p]}]
  c\subs{\asub{\anns{x}{A}{\CBV}}{W_1},\asub{\anns{y}{A}{\CBV}}{W_2}}
\end{align*}
The $\betar[q]$ rule {\co}pattern matches against a destruction, like in the
following specific instance of the rule for product {\co}patterns:
\begin{align*}
  \cut
  {\cocaseof{\send{\fst\alpha}{c_1}\mid\send{\snd\beta}{c_2}}}
  [\CBN]
  {\snd F}
  &=
  \cut
  {\cocaseof{\send{\fst\alpha}{c_1}\mid\send{\snd\beta}{c_2}}}
  [\CBN]
  {(\snd\beta)\subst{\anns{\beta}{A}{\CBN}}{F}}
  \red[{\betar[q]}]
  c_2\subst{\anns{\beta}{A}{\CBN}}{F}
\end{align*}

We also have reductions for recognizing extensionality, which further brings out
the duality between data and {\co}data.  The $\etamu$ and $\etatmu$ rules are
the counterpart of $\betamu$ and $\betatmu$, and eliminate a redundant $\mu$- or
$\tmu$-abstraction that introduces a (\co)variable just to immediately use it
exactly once, before forgetting it.  Likewise, the $\etar[q]$ and $\etar[p]$
rules perform the same kind of reduction but for redundant (\co)pattern matching
instead of a plain (\co)variable.  Note that the important difference between
$\etamu$ and $\etar[q]$ is that $\etamu$ applies to any underlying term $v$,
whereas $\etar[q]$ does not.  This is because the more general case is not sound
for a {\co}data type like $\FromNeg[\CBV] A$: there can be a difference between
a term $v$ which \emph{computes} a thunk and
$\cocaseof{\send{\Force[\CBV]\alpha}{\cut{v}[\CBV]{\Force[\CBV]\alpha}}}$ which
\emph{is} a thunk.  For example, consider the term
$\outp{\alpha}c_0 : \FromNeg[\CBV]A$ which throws away its observer because
$\alpha$ does not appear in $c_0$.  This term's $\etar[q]$-expansion could give
a different result when run with the {\co}term $\inp z c_1$ which throws away
its input because $z$ does not appear in $c_1$, like so:
\begin{align*}
  \cut{\outp{\alpha}c_0}
  [\CBV]
  {\inp z c_1}
  &\sred[{\betamu[\CBV]}]
  c_0
  \\
  \cut
  {
    \cocaseof
    {
      \send
      {\Force[\CBV]\alpha'}
      {
        \cut
        {\outp{\alpha}c_0}
        [\CBV]
        {\Force[\CBV]\alpha'}
      }
    }
  }
  [\CBV]
  {\inp z c_1}
  &\sred[{\betamu[\CBV]}]
  c_1
\end{align*}
Therefore, the $\etar[q]$ rule is restricted to only apply to a variable $x$,
which semantically stands for any value of the appropriate discipline, thereby
avoiding the above counterexample.  But for negative {\co}data types, like
products, we can derive the full $\eta$ law as follows
\cite{DownenAriola2014DC}:
\begin{align*}
  v
  &\unred[{\etamu}]
  \outp\alpha \cut{v}[\ann{A{\with}B}\CBN]{\alpha}
  \\
  &\unred[{\betamu[\CBN]}]
  \outp\alpha
  \cut{v}[\ann{A{\with}B}\CBN]{\inp x \cut{x}[\CBN]{\alpha}}
  \\
  &\unred[{\etar[q]}]
  \outp\alpha
  \cut{v}
  [\ann{A{\with}B}\CBN]
  {
    \inp x
    \cut
    {
      \cocaseof
      {
        \send{\fst{\beta_1}}{\cut{x}[\CBN]{\fst{\beta_1}}}
        \mid
        \send{\snd{\beta_2}}{\cut{x}[\CBN]{\fst{\beta_2}}}
      }
    }
    [\CBN]
    {\alpha}
  }
  \\
  &\red[{\betamu[\CBN]}]
  \outp\alpha
  \cut
  {
    \cocaseof
    {
      \send{\fst{\beta_1}}{\cut{v}[\CBN]{\fst{\beta_1}}}
      \mid
      \send{\snd{\beta_2}}{\cut{v}[\CBN]{\fst{\beta_2}}}
    }
  }
  [\CBN]
  {\alpha}
  \\
  &\red[{\etamu}]
  \cocaseof
  {
    \send{\fst{\beta_1}}{\cut{v}[\CBN]{\fst{\beta_1}}}
    \mid
    \send{\snd{\beta_2}}{\cut{v}[\CBN]{\fst{\beta_2}}}
  }
\end{align*}
Dually, $\etatmu$ applies to any underlying {\co}term but $\etar[p]$ does not
because it would not be sound for a data type like $\FromPos[\CBN]A$: there can
be a difference between a {\co}term $e$ that \emph{computes} a strict
observation and
$\caseof{\recv{\Delay[\CBN]x}{\cut{\Delay[\CBN]x}[{\FromPos[\CBN]A}{:}\CBN]{e}}}$
which \emph{is} a strict observation.  For the same reason as before, the
$\etar[p]$ rule is restricted to {\co}variables, which still derives the full
$\eta$ law for positive data types, like the sum types in
\cref{sec:intro-polarity}.

Finally, we have the re-association rules $\swap$, which swap the bindings of a
variable and {\co}variable both of kind $\CBNeed$ or $\CoNeed$.  Observe how on
both the left- and right-hand sides of $\swapr[\CBNeed]$ rule, a call-by-need
evaluation order will first evaluate the {\co}term $e$, and if $e$ reduces to
{\co}value $E_\CBNeed$, then the two sides step to the same command
$\cut{v}[B{:}\CBNeed]{\inp y c\subst{\anns\alpha{}\CBNeed}{E_\CBNeed}}$ via a
$\betamu[\CBNeed]$ reduction.  Dually, in both the left- and right-hand sides of
the $\swapr[\CoNeed]$ rule, a call-by-{\co}need evaluation order will first
evaluate $v$, and if $v$ reduces to a value then both sides step to the same
command via a $\betatmu[\CoNeed]$ reduction.  Note that instances of these rules
for a pair of call-by-value ($\CBV$) or call-by-name ($\CBN$) bindings are also
sound, and can be derived from the $\betamu[\CBV]$ and $\betatmu[\CBN]$ rules,
respectively, like so:
\begin{align*}
  \cut{\outp\alpha\cut{v}[\CBV]{\inp y c}}[\CBV]{e}
  &\red[{\betamu[\CBV]}]
  \cut{v}[\CBV]{\inp y c\subst{\anns{\alpha}{}{\CBV}}{e}}
  \unred[{\betamu[\CBV]}]
  \cut{v}[\CBV]{\inp y \cut{\outp\alpha c}[\CBV]{e}}
  \\
  \cut{v}[\CBN]{\inp x \cut{\outp\beta c}[\CBN]{e}}
  &\red[{\beta\CBN}]
  \cut{\outp\beta c\subst{\anns{x}{}{\CBV}}{v}}[\CBN]{e}
  \unred[{\beta\CBN}]
  \cut{\outp\beta \cut{v}[\CBN]{\inp x c}}[\CBN]{e}
\end{align*}
However, trying to reassociate arbitrary variable and {\co}variable bindings in
this way is not sound \cite{MunchMaccagnoni2013PhD}; in particular,
reassociating a call-by-value {\co}term and a call-by-name term binding can
break confluence, as in (where $\delta$ and $z$ are unused):
\begin{align*}
  c_0
  \unred[{\betamu[\CBV]}]
  \cut{\outp\delta c_0}[\CBV]{\inp x \cut{\outp\alpha c}[\CBN]{\inp z c_1}}
  &\neq_\swap
  \cut{\outp\alpha \cut{\outp\delta c_0}[\CBV]{\inp x c}}[\CBN]{\inp z c_1}
  \red[{\betatmu[\CBN]}]
  c_1
\end{align*}

\subsection{Operational semantics}
\label{sec:dual-core-operation}

To give an operational semantics to System $\DUAL$, we only need
to pick a standard reduction from the rules of \cref{fig:dual-core-equality}.
The goal is to define a standard reduction relation on commands, $c \sred c'$,
with the following properties:
\begin{property}[Standard reduction]
\label{thm:sred-properties}
\

\begin{lemmaenum}
\item \emph{Determinism}: If $c_1 \unsred c \sred c_2$ then $c_1$ and $c_2$ must
  be the same command.
\item \emph{Compatibility with heaps}: If $c \sred c'$ then $H[c] \sred H[c']$
  for any heap context $H$.
\item \emph{Closure under substitution}: If $c \sred c'$ then
  $c\subs{\rho} \sred c'\subs{\rho}$ for any well-disciplined substitution
  $\rho$ matching the free (\co)variables of $c$.
\end{lemmaenum}
\end{property}
As a first cut, we can just keep the essential reduction rules for computation
(the $\beta$ rules) and exclude the additional reductions that are only used for
reasoning about program equivalence (the $\eta$ and $\swap$ rules).  Allowing
reduction only on the top-level of a command is not enough, due to the delayed
bindings caused by call-by-(\co)need evaluation (as in \cref{sec:intro-need}),
so we should allow reduction inside of heaps.

However, this immediately causes a problem with determinism.  First of all,
reduction inside heaps opens up the choice between substituting an unneeded
binding early or not.  For example, we could have two possible reductions in the
command
\begin{align*}
  \cut{V_\CBNeed}[\CBNeed]{\inp x \cut{(W_1,W_2)}[\CBV]{\sn{(y_1,y_2)}c}}
  &\red[{\betatmu[\CBNeed]}]
  \cut{(W_1,W_2)}[\CBV]{\sn{(x,y)}c}\subst{\anns{x}{}{\CBNeed}}{V_\CBNeed}
  \\
  \cut{V_\CBNeed}[\CBNeed]{\inp x \cut{(W_1,W_2)}[\CBV]{\sn{(y_1,y_2)}c}}
  &\red[{\betar[p]}]
  \cut
  {V_\CBNeed}
  [\CBNeed]
  {\inp x c\subs{\asub{\anns{y_1}{}{\CBV}}{W_1},\asub{\anns{y_2}{}{\CBV}}{W_2}}}
\end{align*}
To avoid this ambiguity, we will only fire a $\betatmu[\CBNeed]$ reduction when
the $\tmu$-abstraction cannot take a step itself, \ie it is a {\co}value.
Secondly, forcing a chain of variable bindings causes another ambiguity for
which variable should be substituted first.  Consider the command
\begin{align*}
  \cut{x}[\CBNeed]{\inp y \cut{y}[\CBNeed]{\inp z \cut{z}[\CBNeed]{\alpha}}}
\end{align*}
where $z$ is forced by $\alpha$, which forces evaluation of $y$, and again
forces evaluation of $x$.  Because of this chain of demand, each of $\alpha$,
$\inp z \cut{z}[\CBNeed]{\alpha}$, \emph{and}
$\inp y \cut{y}[\CBNeed]{\inp z \cut{z}[\CBNeed]{\alpha}}$ are all call-by-need
{\co}values.  So which variable substitution should occur first?  If we say the
outermost one, then standard reduction is not compatible with heaps (that could
add another outer-most binding).  If we say the innermost one, then standard
reduction is not closed under substitution (that could replace $\alpha$ and add
another inner-most binding).  But at the end of the day this decision doesn't
matter; we still need to know the value of the last variable $x$ to make
meaningful progress.  So instead we rule out this scenario altogether:
$\betatmu[\CBNeed]$ reduction will not fire for variables, and
$\betamu[\CoNeed]$ reduction will not fire for {\co}variables.

\begin{figure}
\centering
Standard reduction rules:
\begin{align*}
  \rewriterule{\betamu[\NotCoNeed]}
  {
    \cut{\outp{\alpha}c}[\ann{A}{\strat{S}}]{E_{\strat{S}}}
    &\sred
    c\subst{\anns{\alpha}{A}{\strat{S}}}{E_{\strat{S}}}
    &
    (\strat{S} &\in \{\CBV,\CBN,\CBNeed\})
  }
  \\
  \rewriterule{\betatmu[\NotCBNeed]}
  {
    \cut{V_{\strat{S}}}[\ann{A}{\strat{S}}]{\inp{x}c}
    &\sred
    c\subst{\anns{x}{A}{\strat{S}}}{V_{\strat{S}}}
    &
    (\strat{S} &\in \{\CBV,\CBN,\CoNeed\})
  }
  \\
  \rewriterule{\betamuconeed}
  {
    \cut{\outp{\alpha}c}[\ann{A}{\CoNeed}]{E_\CoNeed}
    &\sred
    c\subst{\anns{\alpha}{A}{\CoNeed}}{E_\CoNeed}
    &
    (\outp{\alpha}c &\in V_\CoNeed, ~ E_\CoNeed \neq \beta)
  }
  \\
  \rewriterule{\betatmuneed}
  {
    \cut{V_\CBNeed}[\ann{A}{\CBNeed}]{\inp x c}
    &\sred
    c\subst{\anns{x}{A}{\CBNeed}}{V_\CBNeed}
    &
    (\inp x c &\in E_\CBNeed, V_\CBNeed \neq y)
  }
  \\
  \rewriterule{\betar[p]}
  {
    \cut
    {p\subs\rho}
    [A{:}\strat{S}]
    {\caseof{\many[i]{\recv{p_i}{c_i}}}}
    &\sred
    c_i\subs\rho
    &
    (p &\eq p_i)
  }
  \\
  \rewriterule{\betar[q]}
  {
    \cut
    {\cocaseof{\many[i]{\send{q_i}{c_i}}}}
    [A{:}\strat{S}]
    {q\subs\rho}
    &\sred
    c_i\subs\rho
    &
    (q &\eq q_i)
  }
\end{align*}

Reduction inside heap contexts:
\begin{gather*}
  \infer
  {H[c] \sred H[c']}
  {c \sred c'}
\end{gather*}
\caption{Operational semantics of System $\DUAL$.}
\label{fig:dual-core-operation}
\end{figure}

With these observations in mind, the operational semantics of $\DUAL$ is given
in \cref{fig:dual-core-operation}.  The standard reduction relation can reduce
any command inside of a heap context and includes the (\co)pattern-matching
rules $\betar[p]$ and $\betar[q]$.  For reducing $\mu$- and $\tmu$-abstractions,
we use a special restricted rule $\betatmuneed$ for call-by-need $\tmu$s and
$\betamuconeed$ for call-by-{\co}need $\mu$s.  In all other cases, we just use
the unaltered $\betamu$ and $\betatmu$ rules from the equational theory in
\cref{fig:dual-core-equality}.  Notice that this definition of standard
reduction satisfies each of the properties in \cref{thm:sred-properties}.  It is
compatible with heaps by definition.  And both determinism and closure under
substitution follow by induction on the syntax of commands (using the fact that
$c$ has no standard reduction if $\inp x c$ is a $\CBNeed$-{\co}value, or dually
if $\outp \alpha c$ is a $\CoNeed$-value).

So now that we know how to run commands, how do we know when they're done?
Since we will need to be able to run \emph{open} commands which may have free
variables and {\co}variables due to (\co)call-by-need evaluation, the idea of
\emph{needed (\co)variables} will become the cornerstone of a command's status
by expressing when the value of a variable or {\co}variable needs to be supplied
for computation to continue.  The key intuition about needed (\co)variables, as
defined above, is that
\begin{enumerate*}[(1)]
\item they are only needed in commands that can no longer take a standard
  reduction step, and
\item they are always found in the eye of a heap context $H$ which does not bind
  them.
\end{enumerate*}
\begin{definition}[Need]
  \label{def:neededvar}
  The set of \emph{needed} (\co)variables of a command, written $\neededvar(c)$,
  is the smallest set such that:
  \begin{lemmaenum}
  \item $x \in \neededvar(c)$ if and only if $c \not\sred$ and
    $c = H[\cut{x}[\ann{A}{\strat{S}}]{E_{\strat{S}}}]$ such that $x$ is not
    bound by $H$.
  \item $\alpha \in \neededvar(c)$ if and only if $c \not\sred$ and
    $c = H[\cut{V_{\strat{S}}}[\ann{A}{\strat{S}}]{\alpha}]$ such that $\alpha$
    is not bound by $H$.
  \end{lemmaenum}
\end{definition}
The status of a command is then defined by its ability to take another step, or
else which (\co)variables it needs to continue stepping.  This lets us formulate
a notion of type safety that is both perfectly symmetric (with respect to free
inputs and outputs) and open enough to capture call-by-{\co}need evaluation.  In
particular, a command is finished when supplying the (\co)value for a
(\co)variable might spur it on to continue computation, but is stuck when no
such substitution can ever restart computation.  For example, the command
$\cut{\inl{x}}[\CBV]{\cocaseof{}}$ is forever stuck.
\begin{definition}[Status]
  \label{def:end-states}
  A command is \emph{finished} when $\neededvar(c)$ is non-empty, and
  \emph{stuck} when $c \not\sred$ and $\neededvar(c)$ is empty.
\end{definition}
\begin{restatable}[Type safety]{theorem}{thmtypesafety}
\label{thm:type-safety}
\begin{lemmaenum}
\item \emph{Progress}: If $\typecmd[\DUAL][\Theta]{\Gamma}{\Delta}{c}$, then $c$
  is not stuck.
\item \emph{Preservation}: If $\typecmd[\DUAL][\Theta]{\Gamma}{\Delta}{c}$ and
  $c \sred c'$ then $\typecmd[\DUAL][\Theta]{\Gamma}{\Delta}{c'}$.
\end{lemmaenum}
\end{restatable}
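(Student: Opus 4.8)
The plan is to prove both parts by the standard syntactic method, keeping in mind that the generalized focusing rules $\BR$ and $\BL$, together with the heap-based operational semantics, make the supporting lemmas and the progress argument more delicate than in a conventional calculus. The two workhorses will be a type-preserving substitution lemma for disciplined substitutions $\rho$ and a family of canonical-forms (inversion) lemmas characterizing the focused judgments $\typetmfoc[\DUAL][\Theta]{\Gamma}{\Delta}{v}{A}$ and $\typecotmfoc[\DUAL][\Theta]{\Gamma}{\Delta}{e}{A}$ by the shape of $A$. Note that the closure-under-substitution property recorded for standard reduction concerns the reduction relation alone, so a separate typing lemma is still required here.

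For preservation I would first establish the substitution lemma: if $\rho$ is well-disciplined for the free (co)variables of a well-typed expression---mapping each variable of discipline $\strat{S}$ to some $V_{\strat{S}}$ of matching type, and dually for covariables---then applying $\subs{\rho}$ preserves typing of commands, terms, and {\co}terms. The one subtlety is the interaction with $\BR$ and $\BL$: substituting a value into a variable that is used \emph{in focus} forces re-deriving the focused judgment for that value, which is exactly what $\BR$ supplies for a $V_{\strat{S}}$ at an $\strat{S}$-type. I would also prove a heap-typing lemma: $H[c]$ is well-typed precisely when its hole can be filled by a well-typed command in the context that $H$ extends. Preservation then follows by cases on the standard reduction rule---the $\betamu$ and $\betatmu$ steps, including the restricted $\betamuconeed$ and $\betatmuneed$ variants, discharge via the substitution lemma; the pattern rules $\betar[p]$ and $\betar[q]$ combine inversion on the constructed (co)value with substitution of the decomposing $\rho$; and the heap-congruence rule uses the heap-typing lemma and the induction hypothesis.

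For progress I would argue the contrapositive: a well-typed command $c$ with $c \not\sred$ must have $\neededvar(c) \neq \emptyset$. The plan is to decompose $c = H[\cut{v}[\ann{A}{\strat{S}}]{e}]$ so that the eye $\cut{v}[\ann{A}{\strat{S}}]{e}$ occupies the active position exposed by peeling off the bindings of $H$, and then to show by canonical forms that this eye is blocked on a free (co)variable of the right discipline. If $v$ were a $\mu$-abstraction and $e$ an $E_{\strat{S}}$-{\co}value respecting the need side conditions, a $\betamu$ step would fire, contradicting $c \not\sred$; dually for a $\tmu$-abstraction facing a value. If both $v$ and $e$ are focused, inversion on $A$ forces a matching constructor/destructor pair so that $\betar[p]$ or $\betar[q]$ applies---again a contradiction---unless the active cut is $\cut{x}[\ann{A}{\strat{S}}]{E_{\strat{S}}}$ with $x$ unbound by $H$, giving $x \in \neededvar(c)$, or dually $\cut{V_{\strat{S}}}[\ann{A}{\strat{S}}]{\alpha}$, giving $\alpha \in \neededvar(c)$. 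The side conditions $V_\CBNeed \neq y$ on $\betatmuneed$ and $E_\CoNeed \neq \beta$ on $\betamuconeed$ are exactly what make these residual variable and covariable cases stick rather than step, so they align precisely with \cref{def:neededvar} and \cref{def:end-states}.

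I expect the main obstacle to be the progress argument under call-by-need and call-by-{\co}need, where the suppressed $\betatmuneed$ and $\betamuconeed$ steps interact with the heap contexts. Concretely, the difficulty is to show that whenever a $\tmu$-abstraction $\inp x c$ is itself a $\CBNeed$-{\co}value---so that $\betatmu$ is deliberately withheld---the suspended demand always resurfaces as a needed variable deeper inside the heap, rather than leaving the command genuinely stuck. Pinning down this heap decomposition precisely, and verifying that the eye it exposes is always of the blocked-on-(co)variable form, is where the care lies; once that is in hand, matching the eye against the canonical-forms lemmas closes each case.
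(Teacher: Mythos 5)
Your preservation argument is essentially the paper's own: it proves a typing substitution closure lemma for disciplined substitutions (whose substitution typing rules indeed require the substituted (co)values to be derivable \emph{in focus}, which is exactly your $\BR$/$\BL$ point) and then argues by cases on the reduction rule with induction on the surrounding heap context. No issue there.

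The gap is in progress. The paper does not argue the contrapositive via a decomposition $c = H[\text{eye}]$; it inducts on the typing derivation, splits on the discipline $\strat{S}$ of the cut, and for $\strat{S} = \CBNeed$ splits on whether the coterm is a covalue: if $e = \inp x c'$ is \emph{not} a covalue, the induction hypothesis is applied to $c'$ (this is how heap descent is handled, concluding $x \notin \neededvar(c')$ so that $\neededvar(c) = \neededvar(c') \neq \emptyset$); if $e = E_\CBNeed$ \emph{is} a covalue, then either $\betamu$, $\betar[p]$, $\betar[q]$, or $\betatmuneed$ fires, or $v$ is a variable and is thereby needed. Your case analysis breaks down precisely where these two cases meet: the innermost cut exposed by peeling $H$ can be $\cut{x}[\ann{A}{\CBNeed}]{E_\CBNeed}$ with $x$ \emph{bound} by $H$ --- for instance $c = \cut{y}[\ann{A}{\CBNeed}]{\inp x \cut{x}[\ann{A}{\CBNeed}]{\alpha}}$ --- and then your stated conclusion (``$x$ unbound by $H$, giving $x \in \neededvar(c)$'') simply does not apply to that eye. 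What actually happens is that the demand on $x$ makes the binder $\inp x \cut{x}[\ann{A}{\CBNeed}]{\alpha}$ a covalue, re-activating the cut at the binding site one level \emph{outward}, where either $\betatmuneed$ or $\betamu$ fires or the term there is again a variable, and the analysis must iterate outward along this chain of demand (e.g.\ by induction on the depth of $H$). You flag this as ``where the care lies'' but do not supply the argument, and your description of the mechanism is inverted: you write that $\betatmu$ is ``deliberately withheld'' when $\inp x c$ is a $\CBNeed$-covalue, whereas by \cref{fig:dual-core-operation} the rule $\betatmuneed$ fires \emph{exactly when} $\inp x c$ is a covalue (and the term is a non-variable value); it is withheld when the coterm is not yet a covalue (which is when evaluation descends into its body) or when the term side is a variable (which is when that variable becomes needed). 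Likewise the suspended demand propagates outward toward the binding, not ``deeper inside the heap.'' Either carry out the outward-walk induction explicitly, or adopt the paper's induction on the typing derivation, where the covalue/non-covalue dichotomy does this bookkeeping for free.
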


\section{Enriching the Core: System \texorpdfstring{$\XDUAL$}{CD}}
\label{sec:dual-ext}

Our first stated goal of the dual core calculus $\DUAL$ is to be able to encode
a wide range of types that a programmer might define and use in a programming
language.  Evaluating if this goal has been met is more subtle than just
checking that enough well-typed programs can be written.  For example, it is
well-known that unary function types ($A \to B$) can encode binary and
higher-arity functions through currying, like $A \to (B \to C)$.  However, in
the call-by-value $\lambda$-calculus, the curried function type
$A \to (B \to C)$ is not the same thing as the binary function type
$(A \otimes B) \to C$: there is no function of type $(A \otimes B) \to C$ that
corresponds to the function $\fn x \Omega : A \to (B \to C)$ (where $\Omega$ is
a term that loops forever) that diverges after being partially applied to one
argument.  Similarly, nesting sum types like $A \oplus (B \oplus C)$ as a way to
encode a ternary choice has the analogous issue in the call-by-name
$\lambda$-calculus: there is no term of a ternary choice type corresponding to
$\inr{\Omega}$.  Instead, we should ask for more than just well-typed programs:
the encodings should be between types that are truly isomorphic to one another.

So how can we properly evaluate what class of types are representable by the
$\DUAL$ calculus?  We will define an extension of $\DUAL$, called
$\XDUAL$,\footnote{So named because it is a calculus of both $\D$ata and
  $\C$o$\D$ata.} that is a higher-level language that is rich with many types,
meeting the following objectives:
\begin{itemize}
\item Formalizing a fully-dual mechanism for programmatically declaring new data
  and {\co}data types.  To ensure generality, there will be no restrictions on
  the number of constructors or destructors, inputs or outputs, and disciplines
  involved.  Declared (\co)data types will also be able to introduce publically
  and privately quantified types, to give a general form of type abstraction.
\item Removing the restrictions on constructor and destructor applications.  In
  the extended $\DUAL$ calculus, they can be applied to any terms and
  {\co}terms, not just values and {\co}values.
\end{itemize}
On the one hand, the extension will clearly subsume $\DUAL$ (that is, it is a
conservative extension), so that nothing is lost.  But on the other hand, we
will show that the extended language $\XDUAL$ can still be faithfully encoded
back into the $\DUAL$ core calculus via macro expansions, so that nothing of
importance is gained.  This way, the $\XDUAL$ calculus reveals the expressive
power of the $\DUAL$ core calculus.

\subsection{Extending \texorpdfstring{$\DUAL$}{D} with fully dual (\co)data types}

The syntax for the $\XDUAL$ calculus is given in \cref{fig:dual-ext-syntax}.
Notice how, without the restrictions of the core calculus from
\cref{fig:dual-core-syntax}, the syntax is much simpler.  Instead of the
specific patterns and {\co}patterns, we just have a general class of
constructors (denoted by $\mk{K}$) and destructors (denoted by $\mk{O}$) names
to choose from, which can be applied to any number of types, terms, and
{\co}terms.  This unifies all the different cases into just one case for a data
construction and one case for a {\co}data destruction.  Likewise at the
type-level, instead of the list of specific connectives, we just have a class of
connective names (denoted by $\mk{F}$ or $\mk{G}$).  These connective names are
given meaning by a data or {\co}data declaration, which must be non-recursive.%
\footnote{The requirement that {\co}data declarations be non-recursive is so
  that we will be able to fully encode them via macro expansions, which only
  terminates for non-recursive declarations.}

\begin{figure}
\centering

Commands $c$, terms $v$, {\co}terms $e$, patterns $p$, and {\co}patterns $q$
\begin{gather*}
\begin{aligned}
  c &::= \cut{v}[A{:}\strat{S}]{e}
  \\
  v
  &::= x
  \Alt \outp\alpha c
  \Alt \mk{K}\many{B}\many{e}\many{v}
  \Alt \cocaseof{\many{\send{q}{c} \mid}}
  &\qqqquad
  p
  &::= \mk{K} \many{X} \many{\alpha} \many{y}
  \\
  e
  &::= \alpha
  \Alt \inp x c
  \Alt \mk{O}\many{B}\many{v}\many{e}
  \Alt \caseof{\many{\recv{p}{c} \mid}}
  &\qqqquad
  q
  &::= \mk{O} \many{X} \many{y} \many{\alpha}
\end{aligned}
\end{gather*}

Types $A$, kinds $k$, disciplines $\strat{S}$, and (\co)data declarations
$\<decl>$
\begin{gather*}
\begin{aligned}
  A,B,C
  &::= X
  \Alt \mk{F}
  \Alt \fn{X{:}k}A
  \Alt A ~ B
  &\qquad
  k, l
  &::= \strat{S}
  \Alt k \to l
  &\qquad
  \strat{R}, \strat{S}, \strat{T}
  &::= \CBV
  \Alt \CBN
  \Alt \CBNeed
  \Alt \CoNeed
\end{aligned}
\\
\begin{aligned}
  \<decl>
  &::=
  \begin{inlinedata}{\mk{F}\many{(X{:}k)} : \strat{S}}
    \many{
      \smalltypecon[\many{Y{:}\strat{S}'}]
      {\many{A{:}\strat{T}}}{\many{B{:}\strat{R}}}
      {\mk{K}}{\mk{F}\many{X}}{\strat{S}}
    }
  \end{inlinedata}
  \\
  &\phantom{:=}
  \Alt
  \begin{inlinecodata}{\mk{G}\many{(X{:}k)} : \strat{S}}
    \many
    {
      \smalltypeobs[\many{Y{:}\strat{S}'}]
      {\many{A{:}\strat{T}}}{\many{B{:}\strat{R}}}
      {\mk{O}}{\mk{G}\many{X}}{\strat{S}}
    }
  \end{inlinecodata}
\end{aligned}
\end{gather*}
\caption{Syntax of System $\XDUAL$: generalizing System $\DUAL$ with fully dual
  (\co)data.}
\label{fig:dual-ext-syntax}
\end{figure}

In general, a data type declaration is similar to an algebraic data type (ADT)
declaration from a functional language, and has the form:
\begin{align*}
  \begin{data}{\mk{F} \many{(\ann{X}{k})} : \strat{S}}
    &\smalltypecon[\many{Y_1:\strat{S}_1}]
    {\many{A_1:\strat{T}_1}}{\many{B_1:\strat{R}_1}}
    {\mk{K}_1}{\mk{F}\many{X}}{\strat{S}}
    \\
    &\dots
    \\
    &\smalltypecon[\many{Y_n:\strat{S}_n}]
    {\many{A_n:\strat{T}_n}}{\many{B_n:\strat{R}_n}}
    {\mk{K}_n}{\mk{F}\many{X}}{\strat{S}}
  \end{data}
\end{align*}
This declaration says that the connective $\mk{F}$ is parameterized by several
types (named $X$, each of kind $k$, respectively) and returns a type of the
discipline $\strat{S}$.  There are $n$ different constructors
($\mk{K}_1\dots\mk{K}_n$) for building values of $\mk{F}$ types.  The way to
read the signature of each constructor is that ``inputs'' are given on the left
of the turnstyle ($\entails$) and outputs are given on the right of the
turnstyle.  The main way this data declaration differs from a functional ADT is
the availability of several different output types besides the main return type
of the constructor.  For example, the first constructor $\mk{K}_1$ is
parameterized by several terms of types $\many{A_1:\strat{T}_1}$ (found on the
left) and several {\co}terms of types $\many{B_1:\strat{R}_1}$ (found on the
right), respectively, and will return a construction of type $\mk{F}\many{X}$
(the main output on the right between $\entails$ and $\act$).  The constructor
$\mk{K}_1$ is also parameterized by several types $\many{Y_1}$ of kind
$\many{\strat{S}_1}\,$.  These types may be referred to in the types of the term
and {\co}term arguments, but are hidden (\ie existentially quantified) to the
consumer of the construction.

{\Co}data type declarations are exactly symmetric to data type declarations, and
have a very similar form:
\begin{align*}
  \begin{codata}{\mk{G} \many{(\ann{X}{k})} : \strat{S}}
    &\smalltypeobs[\many{Y_1:\strat{S}_1}]
    {\many{A_1:\strat{T}_1}}{\many{B_1:\strat{R}_1}}
    {\mk{O}_1}{\mk{G}\many{X}}{\strat{S}}
    \\
    &\dots
    \\
    &\smalltypeobs[\many{Y_n:\strat{S}_n}]
    {\many{A_n:\strat{T}_n}}{\many{B_n:\strat{R}_n}}
    {\mk{O}_n}{\mk{G}\many{X}}{\strat{S}}
  \end{codata}
\end{align*}
As before, this declaration says that $\mk{G}$ is parameterized by several types
and returns a type of the discipline $\strat{S}$.  But now, there are $n$
different destructors ($\mk{O}_1\dots\mk{O}_n$) for building observations that
use values of $\mk{G}$ types.  The flow of information is still inputs on the
left and outputs on the right, as before.  For example, the first destructor
$\mk{O}_1$ is parameterized by several terms of types $\many{A_1:\strat{T}_1}$
(found on the left) and several {\co}terms of types $\many{B_1:\strat{R}_1}$
(found on the right), respectively, and will consume a value of type
$\mk{G}\many{X}$ (the main input on the left between $\act$ and $\entails$).
The type parameters $\many{Y_1}$ to the observer $\mk{O}_1$ are still brought
into scope for (\co)term parameters of $\mk{O}_1$, but are now hidden (\ie
universally quantified) to the value being consumed.

\begin{figure}
\centering
\newcommand{\declskip}{1ex}
Simple (\co)data types
\\
\begin{math}
\begin{aligned}
  &
  \begin{data}[t]{(X{:}\CBV) \oplus (Y{:}\CBV) : \CBV}
    \typecon{X{:}\CBV}{}{\Inl}{X \oplus Y}{\CBV}
    \\
    \typecon{Y{:}\CBV}{}{\Inr}{X \oplus Y}{\CBV}
  \end{data}
  &\qqqquad
  \begin{codata}[t]{(X{:}\CBN) \with (Y{:}\CBN) : \CBN}
    \typeobs{}{X{:}\CBN}{\Fst}{X \with Y}{\CBN}
    \\
    \typeobs{}{Y{:}\CBN}{\Snd}{X \with Y}{\CBN}
  \end{codata}
  \\[\declskip]
  &
  \begin{inlinedata}{0 : \CBV}
  \end{inlinedata}
  &\qqqquad
  \begin{inlinecodata}{\top : \CBN}
  \end{inlinecodata}
  \\[\declskip]&
  \begin{data}[t]{(X{:}\CBV) \otimes (Y{:}\CBV) : \CBV}
    \typecon{X{:}\CBV,Y{:}\CBV}{}{(\blank,\blank)}{X \otimes Y}{\CBV}
  \end{data}
  &\qqqquad
  \begin{codata}[t]{(X{:}\CBN) \parr (Y{:}\CBN) : \CBN}
    \typeobs{}{X{:}\CBN,Y{:}\CBN}{[\blank,\blank]}{X \parr Y}{\CBN}
  \end{codata}
  \\[\declskip]&
  \begin{inlinedata}{1 : \CBV}
    \typecon{}{}{()}{1}{\CBV}
  \end{inlinedata}
  &\qqqquad
  \begin{inlinecodata}{\bot : \CBV}
    \typeobs{}{}{[]}{\bot}{\CBN}
  \end{inlinecodata}
  \\[\declskip]&
  \begin{data}{\involp(X{:}\CBN) : \CBV}
    \typecon{}{X{:}\CBN}{\Cont}{\involp X}{\CBV}
  \end{data}
  &\qqqquad
  \begin{codata}{\involn(X{:}\CBV) : \CBN}
    \typeobs{X{:}\CBV}{}{\Throw}{\involn X}{\CBN}
  \end{codata}
\end{aligned}
\end{math}
\\[\declskip]
Quantifier (\co)data types
\\
\begin{math}
\begin{aligned}
  &
  \begin{data}{\exists_{\strat{S}} (X{:}\strat{S} {\to} \CBV) : \CBV}
    \smalltypefuncon[Y{:}\strat{S}]{X~Y{:}\CBV}
    {\pack{\blank}{\blank}}{\exists_{\strat{S}} X}{\CBV}
  \end{data}
  &\qqqquad
  \begin{codata}{\forall_{\strat{S}} (X{:}\strat{S} {\to} \CBN) : \CBN}
    \smalltypeobs[Y{:}\strat{S}]{}{X~Y{:}\CBN}
    {\spec{\blank}{\blank}}{\forall_{\strat{S}} X}{\CBN}
  \end{codata}
\end{aligned}
\end{math}
\\[\declskip]
Polarity shift (\co)data types
\\
\begin{math}
\begin{aligned}
  &
  \begin{data}{\ToPos[\strat{S}](X{:}\strat{S}) : \CBV}
    \typefuncon{X{:}\strat{S}}{\Wrap[\strat{S}]}{\ToPos[\strat{S}]X}{\CBV}
  \end{data}
  &\qqqquad
  \begin{data}{\FromPos[\strat{S}](X{:}\CBV) : \strat{S}}
    \typefuncon{X{:}\CBV}{\Delay[\strat{S}]}{\FromPos[\strat{S}]X}{\strat{S}}
  \end{data}
  \\[\declskip]&
  \begin{codata}{\ToNeg[\strat{S}](X{:}\strat{S}) : \CBN}
    \typeobs{}{X{:}\strat{S}}{\Unwrap[\strat{S}]}{\ToNeg[\strat{S}]X}{\CBN}
  \end{codata}
  &\qqqquad
  \begin{codata}{\FromNeg[\strat{S}]}
    \typeobs{}{X{:}\CBN}{\Force[\strat{S}]}{\FromNeg[\strat{S}]X}{\strat{S}}
  \end{codata}
\end{aligned}
\end{math}
\caption{(\Co)Data declarations of the dual System $\DUAL$ connectives.}
\label{fig:dual-basis}
\end{figure}

Several examples of data and {\co}data type declarations are given in
\cref{fig:dual-basis}.  More specifically, these declarations give definitions
for each of the connectives in the $\DUAL$ core calculus from
\cref{sec:dual-core}, which illustrates that the $\XDUAL$ calculus
completely subsumes the core calculus.  As another example of an exotic
declaration, we can declare function types as {\co}data, as well the dual to the
function type (known as subtraction or difference, and often written as $A-B$):
\begin{displaymath}
  \begin{codata}{(X{:}\CBV) \to (Y{:}\CBN) : \CBN}
    \typeobs{X{:}\CBV}{Y{:}\CBN}{\app\blank\blank}{X \to Y}{\CBN}
  \end{codata}
  \qqqquad
  \begin{data}{(X{:}\CBV) \coto (Y{:}\CBN) : \CBV}
    \typecon{X{:}\CBV}{Y{:}\CBN}{\yield\blank\blank}{X \coto Y}{\CBV}
  \end{data}
\end{displaymath}

\subsection{Type system and semantics}

In order to finish our extension to $\DUAL$, we also need to extends its type
system and semantics.  The type system for $\XDUAL$ keeps the rules from
\cref{fig:dual-core-kinds,fig:dual-core-generic-types}, and generalizes the form
of each judgment to be parameterized by a set of declarations (denoted by
$\conns{G}$) that replaces the $\DUAL$ annotation.  There are also
connective-specific typing rules for each declaration in $\conns{G}$, given in
\cref{fig:dual-declared-types}.
\begin{figure}
Given that $\conns{G}$ contains the data declaration
\begin{displaymath}
\begin{gathered}
  \begin{inlinedata}{\mk{F}\many{(\ann{X}{k})} : \strat{S}}
    \many[i]
    {
      \smalltypecon[{\many[j]{\ann{Y_{ij}}{\strat{S}'_{ij}}}}]
      {\many[j]{A_{ij}:\strat{T}_{ij}}}{\many[j]{B_{ij}:\strat{R}_{ij}}}
      {\mk{K}_i}{\mk{F}\many{X}}{\strat{S}}
    }
  \end{inlinedata}
\end{gathered}
\end{displaymath}
we have the rules:
\begin{small}
\begin{displaymath}
\begin{gathered}
  \infer[{\mk{F}}R_i]
  {
    \typetm[\conns{G}][\Theta]{\Gamma}{\Delta}
    {\mk{K}_i\many[j]{C'_j}~\many[j]{v_j}~\many[j]{e_j}}
    {\mk{F}\many{C}}
  }
  {
    \many[j]{(\typety[\conns{G}]{\Theta}{C'_j}{\strat{S}'_{ij}})}
    &
    \many[j]
    {(\typecotm[\conns{G}][\Theta]{\Gamma}{\Delta}{e_j}{B_{ij}\subs{\rho}})}
    &
    \many[j]
    {(\typetm[\conns{G}][\Theta]{\Gamma}{\Delta}{v_j}{A_{ij}\subs{\rho}})}
    &
    \rho = \many{\asub{X}{C}},\many[j]{\asub{Y_{ij}}{C'_{j}}}
  }
  \\[-0.5ex]
  \infer[{\mk{F}}L]
  {
    \typecotm[\conns{G}][\Theta]{\Gamma}{\Delta}
    {
      \caseof
      {
        \many[i]
        {
          \recv
          {(\mk{K}_i\many[j]{Y_{ij}}\many[j]{\alpha_{ij}}\many[j]{x_{ij}})}
          {c_i}
        }
      }
    }
    {\mk{F}\many{C}}
  }
  {
    \many[i]
    {
      \typecmd[\conns{G}][\Theta,{\many[j]{Y_{ij}:\strat{S}'_{ij}}}]
      {\Gamma,\many[j]{x_{ij}:A_{ij}\subs{\rho}}}
      {\many[j]{\alpha_{ij}:B_{ij}\subs{\rho}},\Delta}
      {c_i}
    }
    &
    \rho = \many{\asub{X}{C}}
  }
\end{gathered}
\end{displaymath}
\end{small}%
Given that $\conns{G}$ contains the {\co}data declaration
\begin{displaymath}
  \begin{inlinecodata}{\mk{G}\many{(\ann{X}{k})} : \strat{S}}
    \many[i]
    {
      \smalltypeobs[{\many[j]{\ann{Y_{ij}}{\strat{S}'_{ij}}}}]
      {\many[j]{A_{ij}:\strat{T}_{ij}}}{B_i:\strat{R}_i}
      {\mk{O}_i}{\mk{G}\many{X}}{\strat{S}}
    }
  \end{inlinecodata}
\end{displaymath}
we have the rules:
\begin{small}
\begin{displaymath}
\begin{gathered}
  \infer[{\mk{G}}L_i]
  {
    \typecotm[\conns{G}][\Theta]{\Gamma}{\Delta}
    {\mk{O}_i\many[j]{C'_j}~\many[j]{v_j}~\many[j]{e_j}}
    {\mk{G}\many{C}}
  }
  {
    \many[j]{(\typety[\conns{G}]{\Theta}{C'_j}{\strat{S}'_{ij}})}
    &
    \many[j]
    {(\typetm[\conns{G}][\Theta]{\Gamma}{\Delta}{v_j}{A_{ij}\subs{\rho}})}
    &
    \many[j]
    {(\typecotm[\conns{G}][\Theta]{\Gamma}{\Delta}{e_j}{B_{ij}\subs{\rho}})}
    &
    \rho = \many{\asub{X}{C}},\many[j]{\asub{Y_{ij}}{C'_{j}}}
  }
  \\[-0.5ex]
  \infer[{\mk{G}}R]
  {
    \typetm[\conns{G}][\Theta]{\Gamma}{\Delta}
    {
      \cocaseof
      {
        \many[i]
        {
          \send
          {[\mk{O}_i\many[j]{Y_{ij}}\many[j]{x_{ij}}\many[j]{\alpha_{ij}}]}
          {c_i}
        }
      }
    }
    {\mk{F}\many{C}}
  }
  {
    \many[i]
    {
      \typecmd[\conns{G}][\Theta,{\many[j]{Y_{ij}:\strat{S}'_{ij}}}]
      {\Gamma,\many[j]{x_{ij}:A_{ij}\subs{\rho}}}
      {\many[j]{\alpha_{ij}:B_{ij}\subs{\rho}},\Delta}
      {c_i}
    }
    &
    \rho = \many{\asub{X}{C}}
  }
\end{gathered}
\end{displaymath}
\end{small}
\caption{System $\XDUAL$ typing rules for user-defined classical data and
  {\co}data types.}
\label{fig:dual-declared-types}
\end{figure}
Again, the special instances of these inference rule templates for the
declarations in \cref{fig:dual-basis} give exactly the right and left rules in
\cref{fig:dual-core-positive-types,fig:dual-core-negative-types,fig:dual-core-shift-types};
fully dual (\co)data declarations subsume the type system for the core $\DUAL$
calculus.

Notice that the equational theory and operational semantics in
\cref{fig:dual-core-equality,fig:dual-core-operation} automatically extend to
accommodate user-defined data and {\co}data types just by considering the
extended notions of patterns $p$ and {\co}patterns $q$ as given in
\cref{fig:dual-ext-syntax}.  Furthermore, the generalizations to weak-head
normal terms and forcing contexts can be stated in terms of these extended
(\co)patterns as follows:
\begin{align*}
  W &::= x \Alt p\subs{\rho} \Alt \cocaseof{\many{\send{q}{c} \mid}}
  &
  F &::= \alpha \Alt q\subs{\rho} \Alt \caseof{\many{\recv{p}{c} \mid}}
\end{align*}
The only aspect of the semantics that is left to be stated is how to handle the
occurrence of non-value or non-{\co}value parameters to constructors and
destructors.  We will handle this possibility in the same way as we originally
did in \cref{sec:intro-sequent}: define a translation into an appropriate
sub-syntax which rules out this possibility.  We can run arbitrary commands by
first translating them and then applying the operational semantics to the
well-behaved sub-syntax.  And to reason about equality, we will identify all
commands, terms, and {\co}terms with their translation in the equational theory.
But this translation will serve another purpose besides just defining the
semantics of $\XDUAL$: it is the first step of compiling all of $\XDUAL$ into
the core $\DUAL$.

\section{Compiling Classical (\Co)Data Types Into the Core
  \texorpdfstring{$\DUAL$}{D}}
\label{sec:dual-compile}

Our goal now is to demonstrate that all the new features of the extensible
$\XDUAL$ calculus can be expressed within the core $\DUAL$ calculus.  We will do
this by showing how to compile all of $\XDUAL$ into $\DUAL$ using only
macro-expansions, which shows that the two languages are equally expressive
\cite{Felleisen1991OEPPL}.  This compilation process involves two phases:
\begin{enumerate*}
\item a translation into a focused sub-syntax defined as a generalization of
  focusing described in
  \cref{sec:intro-sequent,sec:intro-polarity,sec:intro-need}, and
\item a translation of user-defined data and {\co}data types into the core
  connectives in $\DUAL$.
\end{enumerate*}
The composition of these two steps gives an embedding of $\XDUAL$ inside
$\DUAL$.

\subsection{Focusing the syntax}

In order to identify the focused sub-syntax of $\XDUAL$, we have to ensure that
constructors ($\mk{K}$) and destructors ($\mk{O}$) are only ever applied to
values and {\co}values.  The key idea is that the correct notion of value and
{\co}value for each parameter can be found from the signature of every
constructor and destructor.  That means there is a different sub-syntax for each
set of declarations, $\conns{G}$.

\begin{definition}[Focused sub-syntax]
\label{def:focused-subsyntax}
The \emph{$\conns{G}$-focused sub-syntax} of the $\XDUAL$ calculus restricts
constructions to the form
$\mk{K}\many{C}\many{E_{\strat{R}}}\many{V_{\strat{T}}}$ and destructions to the
form $\mk{O}\many{C}\many{E_{\strat{R}}}\many{V_{\strat{T}}}$ when,
respectively,
\begin{align*}
  &
  \typecon[\many{Y{:}\strat{S}}]
  {\many{A:\strat{T}}}{\many{B:\strat{R}}}
  {\mk{K}}{\mk{F}\many{X}}{\strat{S}}
  \in
  \conns{G}
  &&
  \typeobs[\many{\many{Y{:}\strat{S}}}]
  {\many{A:\strat{T}}}{\many{B:\strat{R}}}
  {\mk{O}}{\mk{G}\many{X}}{\strat{S}}
  \in
  \conns{G}
\end{align*}
\end{definition}
We can then compile any $\XDUAL$ expression into a corresponding
$\conns{G}$-focused expression through the following macro-expansion, that just
names the parameters to every constructor and destructor:
\begin{align*}
  (\mk{K}\many{C}\many{e}\many{v})^\lift
  &=
  \outp\alpha
  \cut
  {
    \outp\beta\dots
    \cut
    {v^\lift}
    {\inp y\dots \cut{\mk{K}\many{C}\many{\beta}{\many{y}}}{\alpha}}
  }
  {e^\lift}
  \\
  (\mk{O}\many{C}\many{v}\many{e})^\lift
  &=
  \inp x
  \cut
  {v^\lift}
  {
    \inp y\dots
    \cut
    {\outp\beta\dots \cut{x}{\mk{O}\many{C}\many{y}\many{\beta}}}
    {e^\lift}
  }
\end{align*}
where the type and discipline annotations on each command can be derived from
the signature of $\mk{K}$ and $\mk{O}$.

Recall that the above focusing translation is part of the definition of the
semantics for $\XDUAL$, which says how to handle terms and {\co}terms which do
not fit within the focused sub-syntax.  In the operational semantics, we can
just compile a command $c$ to $c^\lift$ prior to evaluation.  And formally
stated in the equational theory, we make the additional assertion that
$\typecmd[\conns{G}][\Theta]{\Gamma}{\Delta}{c = c^\lift}$ for any
$\typecmd[\conns{G}][\Theta]{\Gamma}{\Delta}{c}$ (and analogously for terms and
{\co}terms).

\subsection{Translating user-defined types}

\begin{figure}
\centering
\begin{gather*}
\begin{aligned}
  \den[\DUAL]{\mk{F}}[\conns{G}]
  &{\defeq}
  \many{\fn{X{:}k}}
  \FromPos[\strat{S}]
  (
  \many[i]
  {
    (
    \many[j]{\exists Y_{ij}{:}\strat{S}'_{ij}.}
    (
    \many[j]{(\involp (\ToNeg[\strat{R}_{ij}]B_{ij})) \otimes}
    (\many[j]{(\ToPos[\strat{T}_{ij}]A_{ij}) \otimes} 1)
    )
    )
    \oplus
  }
  0
  )
  \\
  \den[\DUAL]{\mk{G}}[\conns{G}]
  &{\defeq}
  \many{\fn{X{:}k}}
  \FromNeg[\strat{S}]
  (
  \many[i]
  {
    (
    \many[j]{\forall Y_{ij}{:}\strat{S}'_{ij}.}
    (
    \many[j]{(\involn (\ToPos[\strat{T}_{ij}]A_{ij})) \parr}
    (\many[j]{(\ToNeg[\strat{R}_{ij}]B_{ij}) \parr} \bot)
    )
    )
    \with
  }
  \top
  )
\end{aligned}
\\[1ex]
\begin{aligned}
  \den[\DUAL]
  {
    \mk{K}_i \many{Y} ~ \many{\alpha} ~ \many{x}
  }[\conns{G}]
  &\defeq
  \Delay[\strat{S}]
  \left(
    \Inr^i
    \left(
      \Inl
      \left(
        \many{\pack{Y}{}}
        \left(
          \many{\cont{[\unwrap[\strat{R}]{\alpha}]},}
          \left(
            \many{\wrap[\strat{T}]{x},}
            ()
          \right)
        \right)
      \right)
    \right)
  \right)
  \\
  \den[\DUAL]
  {
    \mk{O}_i \many{Y} ~ \many{x} ~ \many{\alpha}
  }[\conns{G}]
  &\defeq
  \Force[\strat{S}]
  \left[
    \Snd^i
    \left[
      \Fst
      \left[
        \many{\spec{Y}{}}
        \left[
          \many{\throw{[\wrap[\strat{T}]{x}]},}
          \left[
            \many{\unwrap[\strat{R}]{\alpha},}
            []
          \right]
        \right]
      \right]
    \right]
  \right]
  \\
  \densmall[\DUAL]{\cocaseof{\many[i]{\send{q_i}{c_i}}}}[\conns{G}]
  &\defeq
  \cocaseofsmall
  {
    \many[i]
    {
      \send
      {\den[\DUAL]{q_i}[\conns{G}]}
      {\den[\DUAL]{c_i}[\conns{G}]}
    }
  }
  \qqqquad
  \den[\DUAL]{p\subs{\rho}}[\conns{G}]
  \defeq
  \den[\DUAL]{p}[\conns{G}]\subs{\den[\DUAL]{\rho}[\conns{G}]}
  \\
  \densmall[\DUAL]{\caseof{\many[i]{\recv{p_i}{c_i}}}}[\conns{G}]
  &\defeq
  \caseofsmall
  {
    \many[i]
    {
      \recv
      {\den[\DUAL]{p_i}[\conns{G}]}
      {\den[\DUAL]{c_i}[\conns{G}]}
    }
  }
  \qqqquad
  \den[\DUAL]{q\subs{\rho}}[\conns{G}]
  \defeq
  \den[\DUAL]{q}[\conns{G}]\subs{\den[\DUAL]{\rho}[\conns{G}]}
  \\
  \densmall[\DUAL]
  {
    \many{\asub{\ann{X}k}{C}}
    \many{\asub{\anns{x}{}{\strat{T}}}{V_{\strat{T}}}}
    \many{\asub{\anns{\alpha}{}{\strat{R}}}{E_{\strat{R}}}}
  }[\conns{G}]
  &\defeq
  \many{\asub{\ann{X}k}{\den[\DUAL]{C}[\conns{G}]}}
  \many{\asub{\anns{x}{}{\strat{T}}}{\den[\DUAL]{V_{\strat{T}}}[\conns{G}]}}
  \many{\asub{\anns{\alpha}{}{\strat{R}}}{\den[\DUAL]{E_{\strat{R}}}[\conns{G}]}}
\end{aligned}
\\[1ex]
\begin{aligned}
  \Inj{i}^0(p) &= p
  &\qquad
  \Inj{i}^{n+1}(p) &= \Inj{i}(\Inj{i}^n(p))
  &\qqqquad
  \Proj{i}^0[q] &= q
  &\qquad
  \Proj{i}^{n+1}[q] &= \Proj{i}[\Proj{i}^n[q]]
\end{aligned}
\end{gather*}
\caption{Compiling System $\XDUAL$ into System $\DUAL$.}
\label{fig:compilingGeneralized}
\end{figure}

We are now prepared to give a translation of arbitrary user-defined (\co)data
types into the core $\DUAL$ connectives.  The translation is parameterized by a
global environment $\conns{G}$ so that we know the overall shape of each
connective, and is given in \cref{fig:compilingGeneralized}. The
translation of individual data and {\co}data type constructors assume that
$\conns{G}$ contains a data or {\co}data type of the forms listed in
\cref{fig:dual-declared-types}.  The translation of data types wraps every
component type in the appropriate shift ($\ToPos$ for inputs and $\ToNeg$ for
outputs) into the realm of positive types, $\involp$-negates every additional
output of each constructor, combines the components of each constructor with a
nested $\otimes$ pair, combines the choice between different constructors with a
nested $\oplus$ sum, and introduces all the private types with $\exists$
quantifiers.  Finally, the translated data type is injected into the desired
discipline with the outer-most $\FromPos$ shift to come back from the realm of
positive types.  The translation of multi-output {\co}data is exactly dual to
the data translation, making a round trip through the realm of negative types
for the purpose of using the basic {\co}data building blocks.  We can generalize
this translation to arbitrary types homomorphically in all other cases (\ie
$\den[\DUAL]{A~B}[\conns{G}]=\den[\DUAL]{A}[\conns{G}]~\den[\DUAL]{B}[\conns{G}]$,
\etc[]), iterating as needed to translate the connectives introduced by
declarations. The translation of (\co)patterns, (\co)pattern-matching objects,
and (\co)data structures follows the translation of types.

For example, with declarations of function and subtraction types
\begin{displaymath}
  \begin{codata}{(X{:}\CBV) \to (Y{:}\CBN) : \CBN}
    \typeobs{X{:}\CBV}{Y{:}\CBN}{\app\blank\blank}{X \to Y}{\CBN}
  \end{codata}
  \qqqquad
  \begin{data}{(X{:}\CBV) \coto (Y{:}\CBN) : \CBV}
    \typecon{X{:}\CBV}{Y{:}\CBN}{\yield\blank\blank}{X \coto Y}{\CBV}
  \end{data}
\end{displaymath}
in the set of declarations $\conns{G}$, we get the following translations:
\begin{align*}
  \den[\DUAL]{\to}[\conns{G}]
  &=
  \fn{X{:}\CBV}\fn{Y{:}\CBN}\FromNeg[\CBN]((\involn(\ToPos[\CBV] X) \parr (\ToNeg[\CBN]Y \parr \bot)) \with \top)
  \\
  \den[\DUAL]{\coto}[\conns{G}]
  &=
  \fn{X{:}\CBV}\fn{Y{:}\CBN}\FromPos[\CBV]((\involp(\ToNeg[\CBN]Y) \otimes (\ToPos[\CBV]X \otimes 1)) \oplus 0)
\end{align*}
But these are rather heavy translation for such simple connectives.  Instead, we
can sometimes simplify the results of translation based on the idea of
\emph{type isomorphisms}, which we will explore next in
\cref{sec:encoding-type-iso}.  For these specific types, we have the following
isomorphic translations:
\begin{align*}
  \den[\DUAL]{A \to B}[\conns{G}]
  &\iso
    (\involn \den[\DUAL]{A}[\conns{G}]) \parr \den[\DUAL]{B}[\conns{G}]
  &
  \den[\DUAL]{A \coto B}
  &\iso
    (\involp \den[\DUAL]{B}[\conns{G}]) \otimes \den[\DUAL]{A}[\conns{G}]
\end{align*}

\subsection{Correctness of compilation}

The only thing remaining is to ensure that this compilation of values and
{\co}values preserves the semantics of the original program that used custom
data and {\co}data types: the observable result of running a program (that is,
the set of needed variables and {\co}variables) is exactly the same before or
after compilation, and all the equalities of $\XDUAL$ are preserved in $\DUAL$.
\begin{theorem}[Operational Correspondence]
  \label{thm:operational-correspondence}
  For non-cyclic $\conns{G}$, command $c$ in $\XDUAL$, and non-empty set of
  (\co)variables $R$, the following statements are equivalent:
  \begin{itemize}
  \item $c \sreds c'$ for some finished $c'$ in $\XDUAL$ such that
    $\neededvar(c') = R$, and
  \item $\den[\DUAL]{c}[\conns{G}] \sreds c'$ for some finished $c'$ in $\DUAL$
    such that $\neededvar(c') = R$.
  \end{itemize}
\end{theorem}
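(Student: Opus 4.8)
The plan is to decompose the compilation into its two phases---the focusing translation $(\cdot)^\lift$ and the type translation $\den[\DUAL]{\cdot}[\conns{G}]$---prove an operational simulation for each, and then compose them. The heart of the argument is a lock-step simulation of standard reduction \emph{up to administrative steps}: a single source reduction will be matched by a nonempty, deterministic block of target reductions, consisting of one principal step plus a bounded number of administrative $\mu$/$\tmu$- and (co)pattern-reductions introduced by macro-expansion. Because both $\XDUAL$ (on its focused sub-syntax) and $\DUAL$ enjoy determinism of standard reduction (\cref{thm:sred-properties}), a faithful forward simulation plus nonemptiness of each block will be enough to recover the backward direction as well.

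First I would record three structural lemmas about $\den[\DUAL]{\cdot}[\conns{G}]$ that make the simulation go through. (i) \emph{Compositionality with substitution}: $\den[\DUAL]{c\subs{\rho}}[\conns{G}] = \den[\DUAL]{c}[\conns{G}]\subs{\den[\DUAL]{\rho}[\conns{G}]}$, which is the last clause of \cref{fig:compilingGeneralized} promoted to a lemma by induction on $c$. (ii) \emph{Compatibility with heaps}: the translation of a heap context $H$ is again a $\DUAL$ heap context, with $\den[\DUAL]{H[c]}[\conns{G}] = \den[\DUAL]{H}[\conns{G}][\den[\DUAL]{c}[\conns{G}]]$. (iii) \emph{Preservation of disciplines}: each discipline-indexed value $V_{\strat{S}}$ translates to an $\strat{S}$-value and each $E_{\strat{S}}$ to an $\strat{S}$-covalue, so that the side conditions governing $\betamu$/$\betatmu$ and their restricted standard-reduction forms $\betatmuneed$, $\betamuconeed$ transfer correctly. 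These three facts let me push the translation through substitutions, under heaps, and across the value/covalue tests used by the operational semantics.

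With these in hand, the forward simulation is proved by cases on the source step $c \sred c'$. A $\betamu$/$\betatmu$ step (including the $\CBNeed$/$\CoNeed$ restricted variants) maps to the corresponding core step by compositionality of substitution together with (iii). A source step $\betar[p]$/$\betar[q]$ is the delicate case: the translated (co)pattern $\den[\DUAL]{p}[\conns{G}]$ is a \emph{nested} pattern---an outer $\Delay[\strat{S}]$, a chain of $\Inj{i}$, a $\pack$, and $\otimes$/$\cont$ layers for each argument---so one source match becomes a fixed sequence of core $\betar[p]$ steps. The crucial observation is that, since we work in the focused sub-syntax, the scrutinee is a fully-constructed $W$ (resp.\ $F$) whose translation is again weak-head normal, so every intermediate redex is forced and is a genuine \emph{standard} reduction: none gets stuck and none introduces a choice. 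The focusing phase is handled the same way: $(\cdot)^\lift$ inserts only administrative $\mu$/$\tmu$-bindings that are immediately consumed, so $c$ and $c^\lift$ reduce in lock-step up to transparent steps.

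The backward direction then follows from the forward one using determinism on both sides and the nonemptiness of administrative blocks: any target sequence $\den[\DUAL]{c}[\conns{G}] \sreds d$ factors uniquely into blocks, each the image of exactly one source step, and so reflects to a source reduction $c \sreds c'$ with $\den[\DUAL]{c'}[\conns{G}] = d$ once $d$ is administratively normal. The final ingredient is a \emph{needed-variable preservation} lemma: for a command admitting no standard step, $\neededvar(\den[\DUAL]{c}[\conns{G}]) = \neededvar(c)$, and $c$ is finished (resp.\ stuck) iff its translation is. This is where I expect the main obstacle to lie---ensuring the administrative layers introduced by the shifts and nested (co)patterns neither absorb a needed demand $\cut{x}[\strat{S}]{E_{\strat{S}}}$/$\cut{V_{\strat{S}}}[\strat{S}]{\alpha}$ nor manufacture a spurious one, especially across call-by-(co)need heaps where the needed set is read off the eye of $H$. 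Verifying that $\den[\DUAL]{H}[\conns{G}]$ exposes exactly the same eye, and that the outer $\FromPos[\strat{S}]$/$\FromNeg[\strat{S}]$ shift is transparent to the demand analysis, is the technical crux; non-cyclicity of $\conns{G}$ guarantees that the macro-expansion and hence every administrative block terminates, so the induction on reduction length is well-founded.
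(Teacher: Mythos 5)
Your proposal follows essentially the same route as the paper's proof: the same structural lemmas (commutation with substitution, preservation of heap contexts, of (co)values, and of needed variables) feed a forward simulation in which every source step becomes \emph{one or more} target steps, and the backward direction is then recovered from determinism of both operational semantics. The paper phrases that reverse direction via non-divergence and uniqueness of reachable finished or stuck commands rather than via your factoring of target sequences into administrative blocks, but this is the same determinism-based inference of one implication from the other, so the two arguments coincide in substance.
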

\begin{proof}
  We apply techniques used previously in \cite{DMAV2014CPS} and
  \cite{DownenAriola2014CSCC}, which allow us to infer one direction of the
  bi-implication from the other.  First, observe the following properties of the
  compilation translation.
  \begin{lemmaenum}
  \item \emph{Heaps}: For all $c$ and $H$, there is a heap context $H'$ such
    that
    \begin{math}
      \den[\DUAL]{H[c]}[\conns{G}] \eq[\alpha] H'[\den[\DUAL]{c}[\conns{G}]]
    \end{math}.
  \item \emph{(\Co)Values}: For all $V_{\strat{S}}$ and $E_{\strat{S}}$,
    $\den[\DUAL]{V_{\strat{S}}}[\conns{G}]$ is an $\strat{S}$-value and
    $\den[\DUAL]{E_{\strat{S}}}[\conns{G}]$ is an $\strat{S}$-{\co}value.
  \item \emph{Need}: $\neededvar(c) = \neededvar(\den[\DUAL]{c}[\conns{G}])$.
  \item \emph{Substitution}: For all commands $c$ and substitutions $\rho$,
    \begin{math}
      \den[\DUAL]{c\subs{\rho}}[\conns{G}] \eq[\alpha]
      \den[\DUAL]{c}[\conns{G}]\subs{\den[\DUAL]{\rho}[\conns{G}]}
    \end{math}.
  \item \emph{Reduction}: For all $c$ and $c'$, if $c \sred c'$ then
    \begin{math}
      \den[\DUAL]{c}[\conns{G}] \sreds \den[\DUAL]{c'}[\conns{G}]      
    \end{math}
    in \emph{one or more} steps.
  \end{lemmaenum}
  Part (a) follows from the facts that the translation is compositional and does
  not change the signs on commands.  Furthermore, parts (b), (c), and (d) follow
  from part (a) and the facts that the translation is both compositional and
  hygienic \cite{DownenAriola2014CSCC} (\ie it does not capture or escape free
  (\co)variables of sub-expressions).  Finally, part (e) follows from part (a)
  and the fact that each operational step is preserved by the translation.  We
  know that the rules for $\betamu[\NotCoNeed]$, $\betatmu[\NotCBNeed]$,
  $\betamuconeed$, and $\betatmuneed$ are preserved by translation due to parts
  (b) and (d).  That leaves only the $\betar[p]$ and $\betar[q]$ rules, which
  take the following form after translation:
  \begin{align*}
  \rewriterule{\den[\DUAL]{\betar[q]}[\conns{G}]}
  {
    \cut
    {\cocaseofsmall{\dots \mid \send{\den[\DUAL]{q}[\conns{G}]}{c} \mid \dots}}
    [\strat{S}]
    {\den[\DUAL]{q}[\conns{G}]\subs{\rho}}
    &\sreds
    c\subs{\rho}
  }
  \\
  \rewriterule{\den[\DUAL]{\betar[p]}[\conns{G}]}
  {
    \cut
    {\den[\DUAL]{p}[\conns{G}]\subs{\rho}}
    [\strat{S}]
    {\caseofsmall{\dots \mid \recv{\den[\DUAL]{p}[\conns{G}]}{c} \mid \dots}}
    &\sreds
    c\subs{\rho}
  }
  \end{align*}
  Both of these reductions can be verified through routine calculation of the
  smaller-step pattern-matching rules from \cref{fig:dual-core-operation} by
  expanding out the nested (\co)patterns from the translation (see
  \cref{sec:isomorphism-proofs} for more details), which confirms that they
  always take \emph{at least} one step.  It then follows that if $c \sreds c'$
  and $c'$ is finished in $\XDUAL$ with $\neededvar(c') = R$, we know by
  transitivity of part (e) that
  $\den[\DUAL]{c}[\conns{G}] \sreds \den[\DUAL]{c'}[\conns{G}]$ and by part (c)
  $\den[\DUAL]{c'}[\conns{G}]$ is finished in $\DUAL$ with
  $\neededvar(\den[\DUAL]{c'}[\conns{G}]) = \neededvar(c')$.

  The reverse direction must also hold because the operational semantics of
  $\DUAL$ and $\XDUAL$ are deterministic \cite{DMAV2014CPS}, which implies two
  more facts about the uniqueness of their results:
  \begin{enumerate}
  \item There is an infinite reduction sequence starting with $c$ if and only if
    there is no finished or stuck $c'$ such that $c \sreds c'$.
  \item For every command $c$, there is \emph{at most one} finished or stuck
    $c'$ such that $c \sreds c'$.
  \end{enumerate}
  Suppose that
  \begin{math}
    \den[\DUAL]{c}[\conns{G}] \sreds c'
  \end{math}
  and $c'$ is finished in $\DUAL$ with $\neededvar(c') = R$.  It cannot be that
  there is an infinite reduction sequence beginning from $c$, because then part
  (e) would imply there is an infinite reduction from
  $\den[\DUAL]{c}[\conns{G}]$, contradicting the first uniqueness fact.  So
  there must be some finished or stuck $c''$ in $\XDUAL$ such that
  $c \sreds c''$.  By part (e) we know
  $\den[\DUAL]{c}[\conns{G}] \sreds \den[\DUAL]{c''}[\conns{G}]$ in $\XDUAL$.
  And from the second uniqueness fact and part (c), it must be that
  $\den[\DUAL]{c''}[\conns{G}]$ is $c'$ and
  \begin{math}
    \neededvar(c'')
    =
    \neededvar(\den[\DUAL]{c''}[\conns{G}])
    =
    \neededvar(c')
    =
    R
  \end{math}.
\end{proof}

\begin{theorem}[Equational Soundness]
  \label{thm:equational-soundness}
  For non-cyclic $\conns{G}$, if
  $\typecmd[\conns{G}][\Theta]{\Gamma}{\Delta}{c \eq c'}$ then
  $\typecmd[\DUAL][\Theta]{\den[\DUAL]{\Gamma}[\conns{G}]}{\den[\DUAL]{\Delta}[\conns{G}]}{\den[\DUAL]{c}[\conns{G}]\eq\den[\DUAL]{c'}[\conns{G}]}$.
\end{theorem}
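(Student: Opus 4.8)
The plan is to proceed by induction on the derivation of the $\XDUAL$ equality $\typecmd[\conns{G}][\Theta]{\Gamma}{\Delta}{c \eq c'}$, whose rules are the reduction axioms of \cref{fig:dual-core-equality} (generalized to the extended (\co)patterns of \cref{fig:dual-ext-syntax}), closed under reflexivity, symmetry, transitivity, and compatibility, together with the focusing identification $c \eq c^\lift$. The equivalence and congruence cases are immediate: $\DUAL$ equality is itself an equivalence relation closed under all syntactic contexts, and the translation $\den[\DUAL]{\cdot}[\conns{G}]$ is compositional (it commutes with every expression former, as already exploited in the proof of \cref{thm:operational-correspondence}), so an equality derived at a subexpression is carried to an equality at the corresponding subexpression of the translation. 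Well-typedness of the conclusion is a separate type-preservation obligation for the compilation, checkable by induction on the typing derivation using the kinding of \cref{fig:dual-core-kinds} and the declared-type rules; granting it, it remains only to show, for each reduction axiom $c \red c'$, that $\den[\DUAL]{c}[\conns{G}] \eq \den[\DUAL]{c'}[\conns{G}]$ holds in $\DUAL$.

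Several base cases are dispatched by reusing the machinery assembled for \cref{thm:operational-correspondence}. For the computational axioms $\betamu$, $\betatmu$, $\betar[p]$, and $\betar[q]$, part~(b) guarantees that the translation of an $\strat{S}$-(\co)value is again an $\strat{S}$-(\co)value, so the \emph{same} $\DUAL$ equational $\beta$ rule fires on the translated redex, and part~(d) commutes the translation past the resulting substitution; hence $\den[\DUAL]{c}[\conns{G}] \reds \den[\DUAL]{c'}[\conns{G}]$ and a fortiori $\den[\DUAL]{c}[\conns{G}] \eq \den[\DUAL]{c'}[\conns{G}]$ (the $\betar[p]/\betar[q]$ expansions being exactly those verified in part~(e)). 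Note this direct $(b)+(d)$ argument covers the full equational $\beta$ rules, including the $\CoNeed$/$\CBNeed$ instances excluded from standard reduction. The extensional axioms $\etamu$ and $\etatmu$ are equally direct: hygiene of the translation preserves the side conditions $\alpha \notin \FV(v)$ and $x \notin \FV(e)$, so the same step applies after translation. Finally, the re-association rules $\swapr[\CBNeed]$ and $\swapr[\CoNeed]$ are preserved because the translation leaves command disciplines unchanged (part~(a)), so a $\CBNeed$- or $\CoNeed$-binding nest maps to a nest of identical shape on which the same $\swap$ rule is an instance.

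The real work is in the extensional axioms for user-defined (\co)data, $\etar[q]$ and $\etar[p]$. Consider $\etar[q]$, which contracts $\cocaseof{\many{\send{q}{\cut{x}[\ann{A}{\strat{S}}]{q}}}}$ to $x$ when $x \notin \FV(q)$. Under the translation this becomes a \emph{nested} $\DUAL$ copattern match: by \cref{fig:compilingGeneralized} the connective $\mk{G}$ compiles to a $\FromNeg[\strat{S}]$ wrapping an iterated $\with$ over destructors terminated by $\top$, each branch an iterated $\forall$ over private types and an iterated $\parr$ over $\involn(\ToPos[\strat{T}])$- and $\ToNeg[\strat{R}]$-wrapped arguments terminated by $\bot$, while the body $\cut{x}{\den[\DUAL]{q}[\conns{G}]}$ rebuilds $x$ against the matching nested core {\co}pattern. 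I would therefore isolate a lemma that each \emph{non-shift} core connective ($\with$, $\parr$, $\forall$, $\involn$, $\top$, $\bot$, and dually $\oplus$, $\otimes$, $\exists$, $\involp$, $0$, $1$) validates its \emph{full} $\eta$ law, derivable from $\etamu$, $\betamu$, and the variable-restricted $\etar[q]$ exactly as the product law is derived in \cref{sec:dual-core-equation}, whereas the shift connectives retain only the variable-restricted law. Because the $\XDUAL$ rule being translated is \emph{itself} restricted to a variable $x$, its restriction lines up precisely with the residual variable-restriction of the outer $\FromNeg[\strat{S}]$ shift, so no unsound instance is ever required; the contraction then telescopes from the innermost {\co}pattern outward, collapsing $\cocaseof{\many{\send{\den[\DUAL]{q}[\conns{G}]}{\cut{x}{\den[\DUAL]{q}[\conns{G}]}}}}$ back to $x$ one connective layer at a time. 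The symmetric argument treats $\etar[p]$ for data. This compositional $\eta$-telescoping, which must track the exact interplay between the variable-restricted shift $\eta$ and the full $\eta$ of the other connectives, is the step I expect to be the main obstacle, since it is precisely where the delicate soundness boundary of \cref{sec:dual-core-equation} (full $\eta$ fails for $\FromNeg[\CBV]$) has to be respected throughout the nesting.

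It remains to treat the focusing identification $c \eq c^\lift$. Since $\den[\DUAL]{\cdot}[\conns{G}]$ is defined on the focused sub-syntax and extended to arbitrary expressions by first applying $\cdot^\lift$, the two translations coincide up to idempotence of $\cdot^\lift$ on already-focused forms, so this case is immediate; alternatively, the naming $\mu$/$\tmu$-abstractions introduced by $\cdot^\lift$ are eliminated by the $\betamu$/$\betatmu$ and $\etamu$/$\etatmu$ equalities already shown sound above. Assembling the base cases with the congruence and equivalence closure completes the induction.
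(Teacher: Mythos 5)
Your proposal is correct and takes essentially the same route as the paper's proof: it reuses the compositionality, hygiene, (\co)value-preservation, and substitution-commutation facts from \cref{thm:operational-correspondence} to dispatch the $\beta$, $\etamu$/$\etatmu$, and $\swap$ axioms and the equivalence/congruence closure, and it handles the translated $\etar[p]$/$\etar[q]$ axioms by expanding the nested (\co)patterns---which is precisely the calculation the paper defers to \cref{sec:isomorphism-proofs}, where your ``$\eta$-telescoping'' (full derived $\eta$ laws for the non-shift connectives, variable-restricted $\eta$ for the outer shift, matching the variable restriction of the source rule) is carried out. The only differences are presentational: you additionally make explicit the $c \eq c^\lift$ case and the type-preservation obligation, which the paper leaves implicit.
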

\begin{proof}
  Follows similarly to \cref{thm:operational-correspondence}.  The equational
  theory introduces the $\swapr[\CBNeed]$ and $\swapr[\CoNeed]$ axioms, which
  are unaffected, and the generalized $\betamu[\strat{S}]$ and
  $\betatmu[\strat{S}]$ axioms, which are still sound due to the fact that
  translation commutes over substitution and preserves the (\co)values of every
  discipline.  It also introduces the $\etar[p]$ and $\etar[q]$ axioms which
  express extensionality of (\co)data abstractions, and take the following form
  after translation:
  \begin{align*}
  \rewriterule{\den[\DUAL]{\etar[q]}[\conns{G}]}
  {
    \cocaseof
    {
      \many
      {
        \send
        {\den[\DUAL]{q}[\conns{G}]}
        {\cutsmall{x}[\strat{S}]{\den[\DUAL]{q}[\conns{G}]}}
      }
    }
    &\eq
    x
  }
  \\
  \rewriterule{\den[\DUAL]{\etar[p]}[\conns{G}]}
  {
    \caseof
    {
      \many
      {
        \recv
        {\den[\DUAL]{p}[\conns{G}]}
        {\cutsmall{\den[\DUAL]{p}[\conns{G}]}[\strat{S}]{\alpha}}
      }
    }
    &\eq
    \alpha
  }
  \end{align*}
  As before, both of these equations can be verified from the smaller-step rules
  from \cref{fig:dual-core-equality} by expanding out the nested (\co)patterns
  from the translation (see \cref{sec:isomorphism-proofs}).  Finally, the
  inference rules of the equational theory (reflexivity, transitivity, symmetry,
  and compatibility) are sound due to the fact that the translation is
  compositional \cite{DownenAriola2014CSCC}.
\end{proof}


\section{Correct Encodings as Type Isomorphisms}
\label{sec:encoding-type-iso}

The notion of compilation correctness given above is good for a whole-program
transformation: the semantics of a $\XDUAL$ program (represented as a command)
is preserved by compilation inside of a $\DUAL$ program.  However, what if we
instead want to apply the translation of (\co)data types locally?  Suppose we
translate, not as a way to compile an entire program, but as a way to encoding
only to a particular sub-(\co)term or definition in a larger program.
Correctness of a local encoding requires more than just correctness of a global
compilation.  A global compiler can control both the implementations
((\co)terms) and clients (contexts) to respect some invariants that are
essential for correctness.  But a local encoding can only control the
implementation; it has no control over its clients.  Therefore, to be correct,
an encoding must make sure that it gives an implementation with the exact same
semantics as the source for \emph{every} possible client context that might be
written in the resulting code, not just the clients that happen to be aware of
the details of the encoding.

So what is the appropriate notion of correctness for an encoding that translates
fully-dual user-defined data and {\co}data types into the core $\DUAL$
connectives?  We want to be sure that every equality between the program that
was originally written is preserved by the encoding; otherwise, optimizations
that were valid on the original program may be broken by an unfortunate
encoding.  This kind of idea is captured by an \emph{equational correspondence}
\cite{SabryFelleisen1993RAPCPS} between the source and target calculi of the
encoding process, which states that two expressions are equal exactly when they
are equal after translation.

However, our situation is a little more delicate: by encoding arbitrarily many
types into a small set of core connectives in $\DUAL$, we might accidentally
equate two different types in the source language, which in turn would
accidentally equate their terms and {\co}terms!  So we need to use more finesse
when reasoning about the relationship between the source and target languages
which takes the different types into account.  To that end, we will apply the
idea of a \emph{type isomorphism}---an invertible mapping between two
types---which will let us relate the original declared type with its encoding.
This gives us a principled method of framing encoding correctness, even when
multiple types are collapsed into one.

The only stumbling block with using type isomorphisms to establish correctness
is our use of type abstractions: the $\forall$ and $\exists$ quantifications
over types makes it difficult to compose type isomorphisms together into a big
step encoding.  However, the quantifiers of $\DUAL$ have some extra structure:
they are \emph{parametric}, meaning that processes with abstract types must do
the same thing for every specialization.  Therefore, we generalize the usual
notion of type isomorphism to take this parametricity into account, so that the
witnesses to isomorphisms are likewise parametric, with a \emph{single} witness
that applies uniformly to any particular instantiation.  As a result, we can
compose parametric type isomorphisms freely even under quantifiers, which lets
us incrementally build up the correctness of complex encodings into the core
$\DUAL$ calculus.

\subsection{Parametric type isomorphism}

The encodings we gave make a number of choices: the orders between different
constructors or destructors with respect to one another, between the components
of constructors and destructors, \etc[\/]  Did we make the correct choice to
preserve the semantics of the original type?  Or does this difference even
matter?  It turns out any of these choices are correct: each would give types
that are \emph{isomorphic} to one another.
\begin{definition}[Parametric Type Isomorphism]
  \label{def:dual-type-isomorphism}
  \label{def:parametric-type-isomorphism}
  A \emph{parametric isomorphism} between two types
  $\typety[\conns{G}]{\Theta}{A}{k}$ and $\typety[\conns{G}]{\Theta}{B}{k}$,
  written $\semtypety[\conns{G}]{\Theta}{A \iso B}{k}$, is defined by induction
  on $k$:
  \begin{itemize}
   \item $\semtypety[\conns{G}]{\Theta}{A \iso B}{\strat{S}}$ when there are
    commands
    \begin{math}
      \typecmd[\conns{G}][\Theta]{x:A}{\beta:B}{c'}
    \end{math}
    and
    \begin{math}
      \typecmd[\conns{G}][\Theta]{y:B}{\alpha:A}{c}
    \end{math}
    such that the following equalities are derivable:
    \begin{gather*}
      \typecmd[\conns{G}][\Theta]{x:A}{\alpha:A}
      {
        \cut{\outp\beta c'}[\ann{B}{\strat{S}}]{\inp y c}
        =
        \cut{x}[\ann{A}{\strat{S}}]{\alpha}
      }
      \\
      \typecmd[\conns{G}][\Theta]{y:B}{\beta:B}
      {
        \cut{\outp\alpha c}[\ann{A}{\strat{S}}]{\inp x c'}
        =
        \cut{y}[\ann{B}{\strat{S}}]{\beta}
      }
    \end{gather*}
  \item $\semtypety[\conns{G}]{\Theta}{A \iso B}{k \to l}$ when
    $\semtypety[\conns{G}]{\Theta,X:k}{A~X \iso B~X}{l}$ (for any
    $X \notin \Theta$)
   \end{itemize}
\end{definition}

\begin{figure}

Positive algebraic laws (for $A, B, C : \CBV$):
\begin{align*}
  (A \oplus B) \oplus C &\iso A \oplus (B \oplus C)
  &
  0 \oplus A &\iso A 
  &
  A \oplus B &\iso B \oplus A
  \\
  (A \otimes B) \otimes C &\iso A \otimes (B \otimes C)
  &
  1 \otimes A &\iso A 
  &
  A \otimes B &\iso B \otimes A
  \\
  A \otimes (B \oplus C) &\iso (A \otimes B) \oplus (A \otimes C)
  &
  A \otimes 0 & \iso 0 
\end{align*}

Negative algebraic laws (for $A, B, C : \CBN$):
\begin{align*}
  (A \with B) \with C &\iso A \with (B \with C)
  &
  \top \with A &\iso A 
  &
  A \with B &\iso B \with A
  \\
  (A \parr B) \parr C &\iso A \parr (B \parr C)
  &
  \bot \parr A &\iso A 
  &
  A \parr B &\iso B \parr A
  \\
  A \parr (B \with C) &\iso (A \parr B) \with (A \parr C)
  &
  A \parr \top & \iso \top 
\end{align*}

De Morgan laws (with $A, B : \CBV$ and $C, D : \CBN$):
\begin{align*}
  \involn(A \oplus B) &\iso (\involn A) \with (\involn B)
  &\!
  \involn(A \otimes B) &\iso (\involn A) \parr (\involn B)
  &\!
  \involn 0 &\iso \top
  &\!
  \involn 1 &\iso \bot
  &\!
  \involn (\involp C) &\iso C
  \\
  \involp(C \with D) &\iso (\involp C) \oplus (\involp D)
  &\!
  \involp(C \parr D) &\iso (\involp C) \otimes (\involp D)
  &\!
  \involp \top &\iso 0
  &\!
  \involp \bot &\iso 1
  &\!
  \involp (\involn A) &\iso A
\end{align*}

Shift laws (for $A:\CBV$ and $B:\CBN$):
\begin{align*}
  \ToPos[\CBN]B &\iso \FromNeg[\CBV]B
  &
  \ToNeg[\CBV]A &\iso \FromPos[\CBN]A
  &
  \ToPos[\CBV]A &\iso A \iso \FromPos[\CBV]A
  &
  \ToNeg[\CBN]B &\iso B \iso \FromNeg[\CBN]B
\end{align*}

Quantifier laws (for $A, B : \CBV$ and $C, D : \CBN$ where
$X \notin \FV(B) \cup \FV(D)$):
\begin{align*}
  \forall X{:}\strat{S}.\forall Y{:}\strat{T}. C
  &\iso
  \forall Y{:}\strat{T}. \forall X{:}\strat{S}. C
  &
  \exists X{:}\strat{S}.\exists Y{:}\strat{T}. A
  &\iso
  \exists Y{:}\strat{T}. \exists X{:}\strat{S}. A
  \\
  \forall X{:}\strat{S}. D
  &\iso
  D
  &
  \exists X{:}\strat{S}. B
  &\iso
  B
  \\
  (\forall X{:}\strat{S}. C) \parr D
  &\iso
  \forall X{:}\strat{S}. (C \parr D)
  &
  (\exists X{:}\strat{S}. A) \otimes B
  &\iso
  \exists X{:}\strat{S}. (A \otimes B)
  \\
  \involp(\forall X{:}\strat{S}. C)
  &\iso
  \exists X{:}\strat{S}. (\involp C)
  &
  \involn(\exists X{:}\strat{S}. A)
  &\iso
  \forall X{:}\strat{S}. (\involn A)
\end{align*}

\caption{Core System $\DUAL$ type isomorphism laws.}
\label{fig:dual-core-laws}
\end{figure}

This notion of parametric type isomorphism lets us formally state how many
representations of types are actually equivalent to one another.  For example,
we have the various type isomorphism laws in \cref{fig:dual-core-laws}.  Some of
these laws take a familiar form from type theory---like the algebraic laws---but
they are stronger in $\DUAL$ than in a conventional programming language: side
effects commonly break these isomorphisms, but here they hold even in the
presence of effects.  The De Morgan laws are standard from classical logic, but
they too are much stronger properties in this setting: they assert not just
equi-provability that only guarantees well-typed mappings, but also assert that
those mappings are value-preserving inverses.  This is especially different from
intuitionistic logic and pure functional languages, which do not have all the De
Morgan laws in either sense.  Intuitively, the problem is intuitionistic logics
and languages only have the $\involn$-form of negation (which is sometimes
defined as the isomorphic $\involn A \iso A \to \bot$), but not $\involp$, which
is why negation is not involutive (\ie $\involn(\ToPos(\involn A)) \not\iso A$)
and does not distribute over products (\ie
$\involn(\ToPos(C \with D)) \not\iso (\ToPos\involn C)\oplus(\ToPos\involn C)$).
But the fully dual, classical setting is expressive enough to capture these
properties using both forms of involutive negation.

Type isomorphisms also let us state how some of the shift connectives are
redundant, as shown by the shift laws.  In particular, within the positive
($\CBV$) and negative ($\CBN$) subset, there are only two shifts of interest
since the two different shifts between $\CBN$ and $\CBV$ are isomorphic, and the
identity shifts on $\CBV$ and $\CBN$ are isomorphic to an identity on types.
But clearly the shifts involving $\CBNeed$ are not isomorphic, since they are
all different kinds to one another and the identity function on $\CBNeed$ types.
Recognizing that sometimes the generic encoding uses identity shifts which can
be elided up to isomorphism, the generic encodings of
$\densmall[\DUAL]{\to}[\conns{G}]$ and $\densmall[\DUAL]{\coto}[\conns{G}]$ are
isomorphic to the simplified ones given earlier.

Finally, due to the parametric nature of type isomorphisms in
\cref{def:parametric-type-isomorphism}, we can now state some laws of the
quantifiers, namely that adjacent pairs of the same quantifier can be reordered,
unused quantifiers can be deleted, and we can distribute some other connectives
over quantifiers.  The most interesting of these is the relationship between
quantifiers and negation.  Intuitionistic languages have an isomorphism
corresponding to
$\involn(\exists X{:}\strat{S}. A) \iso \forall X{:}\strat{S}. (\involn A)$ but
reject the reverse
$\involn(\forall X{:}\strat{S}. C) \iso \exists X{:}\strat{S}. (\involn C)$.
That's because every canonical proof of an existential in intuitionistic logic
must present a specific witness: pointing out the witness to a proof is stronger
than saying it is impossible that one cannot exist.  Thus, the constructive
notion of existentials corresponds to the invertible mapping between existential
parameters of the form $\throw{(\pack{B}{W})}$ (isomorphic to the more familiar
function call $f~(\pack{B}{V})$) and $\spec{B}{\throw{W}}$ (isomorphic to the
System F application $(f~B)~V$).

The polarized treatment of negation helps to illuminate why the
non-intuitionistic isomorphism is objectionable: it is not even well-formed!
The correct form of the second isomorphism is
$\involp(\forall X{:}\strat{S}. C) \iso \exists X{:}\strat{S}. (\involp C)$,
which uses the other negation not found in intuitionistic logic, and which is
quite a different statement.  More concretely, it says that a canonical proof of
an $\exists$ is the same as a canonical \emph{refutation} of a $\forall$.  In
other words, the refutation of a $\forall$ is a specific counter-example, which
is the same as saying that there exists a witness to prove the negative.
Exhibiting a real counter-example is much stronger than merely asserting the
impossibility that no such counter-example exists.  The equivalent strength of
existential witnesses and universal counter-examples corresponds to an
invertible mapping between captured specialization call stacks of the form
$\cont{(\spec{B}{F})}$ and existential packages containing a captured call stack
$\pack{B}{(\cont{F})}$.

\subsection{Parametric type isomorphism is an equivalence relation}

Type isomorphisms are easier to work with than global transformations because
they are compositional: small isomorphisms can be composed together to get
bigger ones.  This is due to the fact that they are \emph{equivalence
  relations}, meaning a reflexive, symmetric, and transitive relation between
types of the same kind.  However, this fact does not come for free, we must
actually show that the composite of inverse mappings are still inverses.  It
turns out that in terms of computation, this idea is captured in exactly one
place: the $\swapr$ law from the equational theory
(\cref{fig:dual-core-equality}) that reassociates variable and {\co}variable
bindings.
\begin{theorem}[Equivalence relation]
\label{thm:type-iso-equiv}
Parametric isomorphism is an equivalence relation, meaning
\begin{lemmaenum}
\item \emph{Reflexivity}: $\semtypety[\conns{G}]{\Theta}{A \iso A}{k}$ if
  $\typety[\conns{G}]{\Theta}{A}{k}$,
\item \emph{Symmetry}: $\semtypety[\conns{G}]{\Theta}{A \iso B}{k}$ if and only
  if $\semtypety[\conns{G}]{\Theta}{B \iso A}{k}$, and
\item \emph{Transitivity}: if $\semtypety[\conns{G}]{\Theta}{A \iso B}{k}$ and
  $\semtypety[\conns{G}]{\Theta}{B \iso C}{k}$ then
  $\semtypety[\conns{G}]{\Theta}{A \iso C}{k}$.
\end{lemmaenum}
\end{theorem}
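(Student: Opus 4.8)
The plan is to prove all three properties simultaneously by induction on the kind $k$. The higher-kind case $k \to l$ is immediate in every part: by \cref{def:parametric-type-isomorphism}, an isomorphism at kind $k \to l$ is just the corresponding isomorphism at kind $l$ after applying both types to a fresh variable $X:k$, so reflexivity, symmetry, or transitivity at $k \to l$ follows by applying the induction hypothesis at $l$ (instantiating the hypothesized isomorphism(s) at the same $X$). Thus all the content lives at base kinds $k = \strat{S}$, where an isomorphism is witnessed by a pair of conversion commands together with the two round-trip equalities of the definition, and where I can freely use the equational theory of \cref{fig:dual-core-equality}.

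For reflexivity at kind $\strat{S}$, I would take the identity conversions $c' = \cut{x}[\ann{A}{\strat{S}}]{\beta}$ and $c = \cut{y}[\ann{A}{\strat{S}}]{\alpha}$. Then $\outp\beta c'$ reduces by $\etamu$ to $x$ and $\inp y c$ reduces by $\etatmu$ to $\alpha$, so the first round-trip command equals $\cut{x}[\ann{A}{\strat{S}}]{\alpha}$ exactly; the second equality holds by the mirror-image calculation. Symmetry at kind $\strat{S}$ requires no reduction at all, since the definition is already arranged to be symmetric in the roles of the two types: given witnesses $c', c$ for $A \iso B$, swapping their roles and renaming the free (co)variables yields witnesses for $B \iso A$, and the first round-trip equality of $A \iso B$ is, up to this renaming, precisely the second round-trip equality required for $B \iso A$, and vice versa. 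The ``iff'' then follows because this construction is its own inverse.

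The substance of the theorem is transitivity. Suppose $A \iso B$ is witnessed by $(c'_1, c_1)$ and $B \iso C$ by $(c'_2, c_2)$, all at a common discipline $\strat{S}$ (since $A$, $B$, and $C$ all have kind $\strat{S}$). I would build the composite forward conversion $A \to C$ by cutting on the intermediate type $B$ at discipline $\strat{S}$---binding the output of $c'_1$ to the input of a suitably renamed $c'_2$---and dually build the backward conversion $C \to A$ by cutting $c_2$ into $c_1$. To verify, say, the first round-trip equality for $A \iso C$, I would expand both composites and reduce, producing a command that threads an $A$-value through the chain $A \to B \to C \to B \to A$ as a nest of cuts at types $B$ and $C$. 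The goal is then to collapse the inner segment $B \to C \to B$ using the round-trip identity supplied by $B \iso C$, and afterward collapse the surviving segment $A \to B \to A$ using $A \iso B$, arriving at $\cut{x}[\ann{A}{\strat{S}}]{\alpha}$; the second round-trip is handled by the dual calculation.

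I expect the reassociation to be the main obstacle, and this is exactly where the $\swapr$ law enters. As the composite is literally written, the cuts realizing the chain are associated so that the inner $c'_2$/$c_2$ pair does not yet form the round-trip command appearing in the $B \iso C$ hypothesis: it is separated by the enclosing $c'_1$/$c_1$ bindings. Bringing them adjacent requires re-associating a variable binding with a (co)variable binding, which is precisely what $\swapr[\strat{S}]$ permits. Because every cut in the composite is at the single discipline $\strat{S}$, the needed instance is always available---as the axiom $\swapr[\CBNeed]$ or $\swapr[\CoNeed]$ when $\strat{S}$ is $\CBNeed$ or $\CoNeed$, and as a derived equality (from $\betamu[\CBV]$ or $\betatmu[\CBN]$, as shown for the $\swapr$ rule in \cref{sec:dual-core-equation}) when $\strat{S}$ is $\CBV$ or $\CBN$. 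After using $\swapr$ together with $\betamu$, $\betatmu$, $\etamu$, and $\etatmu$ to expose the $B \iso C$ round-trip as a recognizable sub-command, I apply that hypothesis and then repeat the same manoeuvre to expose and discharge the $A \iso B$ round-trip. The delicate bookkeeping throughout is tracking the type and discipline annotations so that each re-association is a legitimate instance of $\swapr$ at discipline $\strat{S}$.
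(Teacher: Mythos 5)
Your proposal is correct and takes essentially the same route as the paper's own proof: identity commands collapsed by $\etamu$/$\etatmu$ for reflexivity, the definition's built-in symmetry, composition by cutting at the intermediate type $B$ and re-associating with $\swapr[\strat{S}]$ (axiomatic for $\CBNeed$/$\CoNeed$, derived from $\betamu$/$\betatmu$ for $\CBV$/$\CBN$) for transitivity, and reduction of higher kinds to base kinds by applying both sides to fresh type variables. The only cosmetic difference is which of the two dual round-trip compositions you verify explicitly, and hence the order in which the two isomorphism hypotheses are discharged; the paper makes the mirror-image choice and likewise leaves the other composition to duality.
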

\begin{proof}
  First, consider special case when $k = \strat{S}$.  Symmetry is immediate by
  \cref{def:dual-type-isomorphism}.  Reflexivity is witnessed by the identity
  command
  $\typecmd[\conns{G}]{x:A}{\alpha:A}{\cut{x}[\ann{A}{\strat{S}}]{\alpha}}$ in
  both directions, because
  \begin{displaymath}
    \cut
    {\outp\alpha\cut{x}[\ann{A}{\strat{S}}]{\alpha}}
    [\ann{A}{\strat{S}}]
    {\inp x \cut{x}[\ann{\strat{S}}]{\alpha}}
    \eq[\etamu\etatmu]
    \cut{x}[\ann{A}{\strat{S}}]{\alpha}
  \end{displaymath}
  Transitivity is the most difficult property to prove, and requires the use of
  $\swapr[\strat{S}]$ equality (which is an axiom for $\strat{S} = \CBNeed$ and
  $\CoNeed$, and can be derived from $\betamu$ or $\betatmu$ for
  $\strat{S} = \CBV$ and $\CBN$).  Suppose that we have the following witnesses
  to the assumed isomorphisms, respectively:
  \begin{align*}
    \typecmd[\conns{G}][\Theta]{x:A}{\beta:B}{c_B}
    &&
    \typecmd[\conns{G}][\Theta]{y:B}{\alpha:A}{c_A}
    \\
    \typecmd[\conns{G}][\Theta]{y':B}{\gamma:C}{c_C}
    &&
    \typecmd[\conns{G}][\Theta]{z:C}{\beta':B}{c_B'}
  \end{align*}
  We can then use these to form the following witnesses to
  $\semtypety[\conns{G}]{\Theta}{A \iso C}{\strat{S}}$:
  \begin{align*}
    c_C'
    &=
    \cut{\outp\beta c_B}[\ann{B}{\strat{S}}]{\inp{y'} c_C}
    :
    (x:A \entails_{\conns{G}}^\Theta \gamma:C)
    &
    c_A'
    &=
    \cut{\outp{\beta'} c_B'}[\ann{B}{\strat{S}}]{\inp y c_A}
    :
    (z:C \entails_{\conns{G}}^\Theta \alpha:A)
  \end{align*}
  Both compositions of the above are the identity command (with only one
  direction shown as the other is dual) like so:
  \begin{align*}
    \cut
    {\outp\alpha c_A'}[\ann{A}{\strat{S}}]{\inp x c_C'}
    &=
    \cut
    {\outp{\alpha} \cut{\outp{\beta'} c_B'}[\ann{B}{\strat{S}}]{\inp{y} c_A}}
    [\ann{A}{\strat{S}}]
    {\inp{x} \cut{\outp{\beta} c_B}[\ann{B}{\strat{S}}]{\inp{y'} c_C}}
    \\
    &\eq[{\swapr[\strat{S}]}]
    \cut
    {\outp{\beta'} c_B'}
    [\ann{B}{\strat{S}}]
    {
      \inp{y}
      \cut
      {\outp{\alpha} c_A}
      [\ann{A}{\strat{S}}]
      {\inp{x} \cut{\outp{\beta} c_B}[\ann{B}{\strat{S}}]{\inp{y'} c_C}}
    }
    \\
    &\eq[{\swapr[\strat{S}]}]
    \cut
    {\outp{\beta'} c_B'}
    [\ann{B}{\strat{S}}]
    {
      \inp{y}
      \cut
      {\outp{\beta}\cut{\outp{\alpha} c_A}[\ann{A}{\strat{S}}]{\inp{x} c_B}}
      [\ann{B}{\strat{S}}]
      {\inp{y'} c_C}
    }
    \\
    &\eq[\<iso>]
    \cut
    {\outp{\beta'} c_B'}
    [\ann{B}{\strat{S}}]
    {
      \inp{y}
      \cut
      {\outp{\beta}\cut{\beta}[\ann{B}{\strat{S}}]{y}}
      [\ann{B}{\strat{S}}]
      {\inp{y'} c_C}
    }
    \\
    &\eq[\etamu\etatmu]
    \cut
    {\outp{\beta'} c_B'}
    [\ann{B}{\strat{S}}]
    {\inp{y'} c_C}
    \\
    &\eq[\<iso>]
    \cut{z}[\ann{C}{\strat{S}}]{\gamma}
  \end{align*}

  We can then finish by induction on a general $k$.  Given that
  $k = l_1 \to \dots \to l_n \to \strat{S}$, then we can apply both sides of
  each type isomorphism to the fresh type variables $X_1:l_1 \dots X_n:l_n$, to
  get an equivalent isomorphism at kind $\strat{S}$.
\end{proof}

\subsection{Parametric type isomorphism is a congruence relation}

But transitivity isn't the only form of compositionality; we also want the
ability to apply type isomorphisms inside any context of a larger type.  In
other words, we want to be sure that isomorphisms are a \emph{congruence
  relation} between types, meaning that it is compatible with all the
type-forming operations.  That way, we would know that for any context $C$ that
could surround both $A$ and $B$, if $A \iso B$ then $C[A] \iso C[B]$.

There are several cases of compatibility to consider based on the grammar of
types.  The most fundamental case of compatibility is with the basic connectives
that form types.  For these, we can show individually that each of the
finitely-many dual core $\DUAL$ connectives are compatible with isomorphism;
this is enough since every other connective can be encoded into these.  This
would ordinarily be difficult to show for the quantifiers ($\forall$ and
$\exists$), because there might be different mappings to the isomorphisms for
each specialization of the abstract type.  However, the notion of parametric
type isomorphism is stronger, and stipulates that the \emph{same} mapping is
used for every such specialization, which gives a \emph{uniform} witness to
isomorphisms found inside quantifiers.
\begin{theorem}[Compatibility with $\DUAL$ Connectives]
\label{thm:core-compatibility}
Parametric type isomorphism is compatible with the core connectives of $\DUAL$,
\ie the following isomorphisms hold for any $\conns{G}$ extending $\DUAL$:
\begin{itemize}
\item For any $\semtypety[\conns{G}]{\Theta}{A_1 \iso A_2}{\CBV}$ and
  $\semtypety[\conns{G}]{\Theta}{B_1 \iso B_2}{\CBV}$ we have
  \begin{align*}
    \semtypety[\conns{G}]{\Theta}{A_1 \oplus B_1 \iso A_2 \oplus B_2}{\CBV}
    &&
    \semtypety[\conns{G}]{\Theta}{\involn A_1 \iso \involn A_2}{\CBN}
    \\
    \semtypety[\conns{G}]{\Theta}{A_1 \otimes B_1 \iso A_2 \otimes B_2}{\CBV}
    &&
    \semtypety[\conns{G}]{\Theta}
    {\FromPos[\strat{S}]A_1 \iso \FromPos[\strat{S}]A_2}{\strat{S}}
  \end{align*}
\item For any $\semtypety[\conns{G}]{\Theta}{A_1 \iso A_2}{\CBN}$ and
  $\semtypety[\conns{G}]{\Theta}{B_1 \iso B_2}{\CBN}$ we have
  \begin{align*}
    \semtypety[\conns{G}]{\Theta}{A_1 \with B_1 \iso A_2 \with B_2}{\CBN}
    &&
    \semtypety[\conns{G}]{\Theta}{\involp A_1 \iso \involp A_1}{\CBV}
    \\
    \semtypety[\conns{G}]{\Theta}{A_1 \parr B_1 \iso A_2 \parr B_2}{\CBN}
    &&
    \semtypety[\conns{G}]{\Theta}
    {\FromNeg[\strat{S}]A_1 \iso \FromNeg[\strat{S}]A_1}{\strat{S}}
  \end{align*}
\item For any $\semtypety[\conns{G}]{\Theta}{A_1 \iso A_2}{\strat{S}}$ we have
  \begin{align*}
    \semtypety[\conns{G}]{\Theta}
    {\ToPos[\strat{S}]A_1 \iso \ToPos[\strat{S}]A_2}{\CBV}
    &&
    \semtypety[\conns{G}]{\Theta}
    {\ToNeg[\strat{S}]A_1 \iso \ToNeg[\strat{S}]A_2}{\CBN}
  \end{align*}
\item For any $\semtypety[\conns{G}]{\Theta}{F_1 \iso F_2}{\strat{S} \to \CBV}$
  and $\semtypety[\conns{G}]{\Theta}{G_1 \iso G_2}{\strat{S} \to \CBN}$ we have
  \begin{align*}
    \semtypety[\conns{G}]{\Theta}
    {\exists_{\strat{S}} F_1 \iso \exists_{\strat{S}} F_2}{\CBV}
    &&
    \semtypety[\conns{G}]{\Theta}
    {\forall_{\strat{S}} G_1 \iso \forall_{\strat{S}} G_2}{\CBN}
  \end{align*}
\end{itemize}
\end{theorem}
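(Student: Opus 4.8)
The plan is to prove each family of isomorphisms by exhibiting explicit witness commands built \emph{functorially} from the witnesses of the component isomorphisms, and then checking that the two round-trip composites reduce to an identity command in the equational theory. Following the final move in the proof of \cref{thm:type-iso-equiv}, I would first reduce every higher-kinded case to a base-kind case: since each connective in the statement ultimately produces a type of some discipline $\strat{S}$, it suffices to establish the isomorphism at a base kind $\strat{S}$ and then lift to $k = l_1 \to \dots \to l_n \to \strat{S}$ by applying both sides to fresh variables $X_1{:}l_1, \dots, X_n{:}l_n$. The whole argument rests on the fact, already available, that typed equality is a congruence, so that I may rewrite witnesses underneath the surrounding (co)pattern match.

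For the \emph{data} connectives (those with value constructors: $\oplus$, $\otimes$, $\involp$, $\exists_{\strat{S}}$, $\ToPos[\strat{S}]$, $\FromPos[\strat{S}]$) I build pattern-matching witnesses. Taking $\oplus$ as representative, from witnesses $c_A{:}(x{:}A_1 \vdash \beta{:}A_2)$, $c_A'{:}(y{:}A_2 \vdash \alpha{:}A_1)$ for $A_1 \iso A_2$ and $c_B, c_B'$ for $B_1 \iso B_2$, the forward witness for $A_1 \oplus B_1 \iso A_2 \oplus B_2$ cuts $x$ against a $\caseof{\recv{\inl x_1}{\dots} \mid \recv{\inr y_1}{\dots}}$ that runs the appropriate component witness on the extracted value (threading the intermediate result with an $\outp{}/\inp{}$ binding) and re-injects into the matching side of $A_2 \oplus B_2$; the backward witness is symmetric. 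The round-trip composite unfolds by $\betar[p]$ on each branch to the component round-trip, which is the identity by hypothesis, after which a single $\etar[p]$ recombines the branches into $\cut{x}[\CBV]{\alpha}$. The cases $\otimes$, $\ToPos[\strat{S}]$, and $\FromPos[\strat{S}]$ are the same with a single constructor pattern instead of a sum, and $\involp$ is handled identically except that, being a negation, it is \emph{contravariant}: its forward witness uses the \emph{backward} component witness $c_A'$.

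The \emph{{\co}data} connectives ($\with$, $\parr$, $\involn$, $\forall_{\strat{S}}$, $\ToNeg[\strat{S}]$, $\FromNeg[\strat{S}]$) are exactly dual. Here the witnesses are built from copattern-matching objects $\cocaseof{\dots}$, the round-trip closes by $\betar[q]$ together with the corresponding $\etar[q]$, and $\involn$ is again contravariant. In each of these data and {\co}data cases the verification is a routine functoriality calculation once the witnesses are named, using only the $\beta$, $\eta$, and $\mu/\tmu$ rules already established for that connective.

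I expect the quantifier cases to be the main obstacle, and this is exactly where the parametric structure of \cref{def:parametric-type-isomorphism} does essential work. From $\semtypety[\conns{G}]{\Theta}{F_1 \iso F_2}{\strat{S} \to \CBV}$ I obtain, for a fresh $X{:}\strat{S}$, a \emph{single} pair of witnesses for $F_1 X \iso F_2 X$ that is uniform in $X$. The forward witness for $\exists_{\strat{S}} F_1 \iso \exists_{\strat{S}} F_2$ pattern-matches the package $\pack X y$ --- binding the hidden type to the abstract $X$ and its payload to $y{:}F_1 X$ --- runs the inner witness \emph{at this abstract $X$}, and repackages with the same $X$; it then closes by $\betar[p]$ on the $\exists$ pattern down to the inner round-trip, which is the identity by the parametric hypothesis. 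The difficulty is twofold: first, I must check that every command written genuinely type-checks with $X$ abstract in the environment $\Theta, X{:}\strat{S}$, so that an instantiation-dependent witness is rejected and only the uniform one is admissible; second, I must justify that the inner identity equation, proved for the abstract $X$, survives being placed under the $\exists/\forall$ binder and the surrounding (co)pattern match --- which is legitimate precisely because the witnesses are hygienic (they neither capture nor escape $X$) and equality is a congruence. The universal case $\forall_{\strat{S}}$ is the dual, copattern-matching the specialization $\spec X \alpha$.
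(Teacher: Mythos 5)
Your overall strategy is sound, and for the quantifier cases it is exactly the paper's: the paper in fact only proves $\exists_{\strat{S}}$ and $\forall_{\strat{S}}$ in detail (citing a corollary of Theorem 8.8 of \cite{Downen2017PhD} for all the other connectives), using the same unpack--convert--repack witnesses you describe and relying, as you do, on the uniformity of the parametric witness at the abstract type $X$. Proving the remaining connectives explicitly rather than by citation is a legitimate, more self-contained alternative, and your observation that the two negations are contravariant in the component witness is correct.

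However, there is one genuine gap at the heart of every round-trip verification. You claim the composite ``closes by $\betar[p]$ \dots down to the inner round-trip, which is the identity by the parametric hypothesis.'' It does not: after the $\betamu$, $\betatmu$, and $\betar[p]$ steps, the command you reach has the shape
\begin{displaymath}
  \cut
  {\outp{\alpha_1}c_1}
  [F_1~X]
  {\inp{x_1}\cut{\outp{\alpha_2}c_2}[F_2~X]{e'}}
\end{displaymath}
in which the two component witnesses are \emph{nested}, not composed. The hypothesis is an equation about the command $\cut{\outp{\alpha_1}c_1}[F_1~X]{\inp{x_1}c_2}$, which is not a subcommand of the display above, so compatibility (congruence) of equality cannot yet be invoked. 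One must first reassociate the bindings,
\begin{displaymath}
  \cut
  {\outp{\alpha_1}c_1}
  [F_1~X]
  {\inp{x_1}\cut{\outp{\alpha_2}c_2}[F_2~X]{e'}}
  \eq[{\swapr[\CBV]}]
  \cut
  {\outp{\alpha_2}\cut{\outp{\alpha_1}c_1}[F_1~X]{\inp{x_1}c_2}}
  [F_2~X]
  {e'}
\end{displaymath}
and only then rewrite by the hypothesis and finish with $\betamu$, $\betatmu$, and $\etar[p]$. This $\swapr$ step is precisely what the paper's proof performs in the $\exists$ case, and it is the same move the paper singles out as the crux of transitivity in \cref{thm:type-iso-equiv}. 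The gap is fixable in every case you need: the swap is derivable from $\betamu$ at $\CBV$ and from $\betatmu$ at $\CBN$, and it is an axiom of the equational theory at $\CBNeed$ and $\CoNeed$---which you do need, since for the shifts $\ToPos[\strat{S}]$ and $\ToNeg[\strat{S}]$ the component isomorphism, and hence the reassociation, lives at an arbitrary discipline $\strat{S}$. But as written, your verification skips the single step that is not routine, and the proof does not go through without it.
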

\begin{proof}
  The majority of these isomorphisms have been proven previously in
  \cite{Downen2017PhD} (as a corollary of Theorem 8.8) for closed type
  isomorphisms, and they continue to hold exactly as before when generalized
  with the possibility of free type variables.  Here, we show the key additional
  cases for the quantifiers $\forall$ and $\exists$, which make crucial use of
  the added parametricity of isomorphisms between open types.  Since
  parametricity has been taken into account by definition, the isomorphisms
  themselves follow the standard form of unpacking structures built by the
  constructor of $\exists$ or $\forall$ in order to apply the given underlying
  isomorphism.

  Consider the case for the existential quantifier $\exists_{\strat{S}}$ (the
  case for $\forall_{\strat{S}}$ is exactly dual), and assume that the given
  isomorphism $\semtypety[\conns{G}]{\Theta}{F_1 \iso F_1}{\strat{S} \to \CBV}$
  is witnessed by
  \begin{align*}
    \typecmd[\conns{G}][\Theta,X:\strat{S}]{x_2:F_2~X}{\alpha_1:F_1~X}{c_1}
    &&
    \typecmd[\conns{G}][\Theta,X:\strat{S}]{x_1:F_1~X}{\alpha_2:F_2~X}{c_2}
  \end{align*}
  We then have the following pair of commands $c_1'$ and $c_2'$ as witnesses to
  the isomorphism
  $\semtypety[\conns{G}]{\Theta}{\exists_{\strat{S}} F_1 \iso
    \exists_{\strat{S}} F_2}{\CBV}$ (where we omit the $\CBV$ annotation on all
  commands):
  \begin{align*}
    c_1'
    &=
    \cut
    {y_2}
    [\exists_{\strat{S}} F_2]
    {
      \sn{(\pack{X}{x_2})}
      \cut
      {\outp{\alpha_1}c_1}
      [F_1X]
      {\inp{x_1}\cut{\pack{X}{x_1}}[\exists_{\strat{S}}F_1]{\beta_1}}
    }
    :
    (y_2:\exists_{\strat{S}} F_2 \entails_{\conns{G}}^\Theta \beta_1:\exists_{\strat{S}} F_1)
    \\
    c_2'
    &=
    \cut
    {y_1}
    [\exists_{\strat{S}} F_1]
    {
      \sn{(\pack{X}{x_1})}
      \cut
      {\outp{\alpha_2}c_2}
      [F_2X]
      {\inp{x_2}\cut{\pack{X}{x_2}}[\exists_{\strat{S}}F_2]{\beta_2}}
    }
    :
    (y_1:\exists_{\strat{S}} F_1 \entails_{\conns{G}}^\Theta \beta_2:\exists_{\strat{S}} F_2)
  \end{align*}
  Notice how the private type $X$ never escapes: it is existentially quantified
  in both the input ($y_i$) and output ($\beta_j$) (\co)variables, and is only
  used by the process of the underlying isomorphism ($c_1$ or $c_2$).  The
  compositions of $c_1'$ and $c_2'$ are an identity command as follows:
  \begin{align*}
    &
    \cut
    {\outp{\beta_1}c_1'}
    [\exists_{\strat{S}}F_1]
    {\inp{y_1}c_2'}
    \\
    &\eq[\betamu]
    \cut
    {y_2}
    [\exists_{\strat{S}} F_2]
    {
      \sn{(\pack{X}{x_2})}
      \cut
      {\outp{\alpha_1}c_1}
      [F_1X]
      {\inp{x_1}\cut{\pack{X}{x_1}}[\exists_{\strat{S}}F_1]{\inp{y_1}c_2'}}
    }
    \\
    &\eq[{\betatmu\betar[p]}]
    \cut
    {y_2}
    [\exists_{\strat{S}} F_2]
    {
      \sn{(\pack{X}{x_2})}
      \cut
      {\outp{\alpha_1}c_1}
      [F_1X]
      {
        \inp{x_1}
        \cut{\outp{\alpha_2}c_2}
        [F_2X]
        {\inp{x_2}\cut{\pack{X}{x_2}}[\exists_{\strat{S}}F_2]{\beta_2}}
      }
    }
    \\
    &\eq[{\swapr[\CBV]}]
    \cut
    {y_2}
    [\exists_{\strat{S}} F_2]
    {
      \sn{(\pack{X}{x_2})}
      \cut
      {
        \outp{\alpha_2}
        \cut{\outp{\alpha_1}c_1}[F_1X]{\inp{x_1}c_2}
      }
      [F_2X]
      {\inp{x_2}\cut{\pack{X}{x_2}}[\exists_{\strat{S}}F_2]{\beta_2}}
    }
    \\
    &\eq[\<iso>]
    \cut
    {y_2}
    [\exists_{\strat{S}} F_2]
    {
      \sn{(\pack{X}{x_2})}
      \cut
      {
        \outp{\alpha_2}
        \cut{x_2}[F_2X]{\alpha_2}
      }
      [F_2X]
      {\inp{x_2}\cut{\pack{X}{x_2}}[\exists_{\strat{S}}F_2]{\beta_2}}
    }
    \\
    &\eq[\betamu\betatmu]
    \cut
    {y_2}
    [\exists_{\strat{S}} F_2]
    {
      \sn{(\pack{X}{x_2})}
      \cut{\pack{X}{x_2}}[\exists_{\strat{S}}F_2]{\beta_2}
    }
    \\
    &\eq[{\etar[p]}]
    \cut{y_2}[\exists_{\strat{S}} F_2]{\beta_2}
  \end{align*}
  where the other direction is the same, up to swapping the indexes $1$ and $2$.
\end{proof}

The goal now is to lift the compatibility of the core $\DUAL$ connectives to any
other context.  This can be challenging with higher-order type variables.  For
example, suppose that $\semtypety[\DUAL]{}{A \iso B}{\strat{S}}$.  Why, then, is
it the case that
$\semtypety[\DUAL]{X{:}\strat{S}\to\strat{R}}{X~A \iso X~B}{\strat{R}}$?  The
reason must have something to do with the higher-order variable $X$, but we know
nothing about the type it will return.  Therefore, we will rule out such cases,
by requiring that types be concrete.  This restriction is still more liberal
than requiring that all types be closed, and is also limiting enough to prove
compatibility under the binders of the quantifiers.
\begin{definition}[Concrete Type]
  A type $A$ is \emph{concrete} if $A$ contains no subformula of the form
  $X ~ A_1 \dots A_n$ where $n \geq 1$.
\end{definition}


\begin{lemma}[Isomorphism Substitution]
\label{thm:type-iso-subst}

For any concrete type
$\typety[\DUAL]{\Theta,X:k}{A}{\many{\strat{T}}\to\strat{S}}$, if
$\semtypety[\DUAL]{\Theta}{B \iso C}{k}$ then
\begin{math}
  \semtypety[\DUAL]{\Theta}
  {A\subst{X}{B} \iso A\subst{X}{C}}{\many{\strat{T}}\to\strat{S}}
  .  
\end{math}
\end{lemma}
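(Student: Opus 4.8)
The plan is to argue by induction on the structure of the concrete type $A$, reducing higher kinds to discipline kinds through \cref{def:parametric-type-isomorphism} and handling each connective head with \cref{thm:core-compatibility}. Two easy base cases set the shape of the argument. When $A = X$, the substitutions give $A\subst{X}{B} = B$ and $A\subst{X}{C} = C$, so the goal is literally the hypothesis $\semtypety[\DUAL]{\Theta}{B \iso C}{k}$. When $A = Y$ for a type variable $Y \neq X$, or when $A = \mk{F}$ is a bare connective, the substitution is the identity, so the goal collapses to reflexivity $\semtypety[\DUAL]{\Theta}{A \iso A}{k}$, which \cref{thm:type-iso-equiv} supplies at any kind.

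For a type abstraction $A = \fn{Y{:}l} A'$, where $k = l \to k'$, I would unfold the arrow-kind clause of \cref{def:parametric-type-isomorphism}: it suffices to produce the isomorphism after applying both substituted types to a fresh variable $Z : l$. After $\alpha$-renaming $Y$ out of $\FV(B) \cup \FV(C)$, these applications $\eq[\beta]$-reduce (recall types are identified up to $\alpha\beta\eta$) to $A'\subst{X}{B}\subst{Y}{Z}$ and $A'\subst{X}{C}\subst{Y}{Z}$. Since $A'\subst{Y}{Z}$ is again concrete and no larger than $A$, the induction hypothesis applied to it delivers the required isomorphism at kind $k'$.

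The decisive case is an application $A = A_1 ~ A_2$. Here I would first invoke \cref{def:parametric-type-isomorphism} to reduce to the situation where the result kind is a discipline $\strat{S}$, applying both substituted types to fresh variables; crucially, because $A$ is concrete its application head is a connective rather than a variable, so this reduction preserves both concreteness and the connective head. Thus $A = \mk{F}\,A_1 \dots A_n$ with $\mk{F}$ one of the core $\DUAL$ connectives fully applied to its arguments. Each $A_i$ is a concrete proper subterm, so the induction hypothesis yields $\semtypety[\DUAL]{\Theta}{A_i\subst{X}{B} \iso A_i\subst{X}{C}}{k_i}$ at the corresponding argument kind $k_i$. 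Feeding these into the matching clause of \cref{thm:core-compatibility} --- binary connectives consuming two argument isomorphisms, the involutive negations and the four shifts one, the nullary connectives none, and each quantifier a single higher-kinded argument isomorphism --- produces the isomorphism between $\mk{F}$ applied to the $B$-substituted arguments and $\mk{F}$ applied to the $C$-substituted arguments, which is the goal.

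I expect this application case to be the only real obstacle, and it is exactly what forces the concreteness hypothesis. Were an application allowed to be headed by a type variable, say $A = X ~ A_2$, the goal would become $\semtypety[\DUAL]{\Theta}{B\,A_2 \iso C\,A_2}{\strat{S}}$, which demands instantiating the parametric isomorphism $\semtypety[\DUAL]{\Theta}{B \iso C}{k}$ at the \emph{specific} argument $A_2$; but \cref{def:parametric-type-isomorphism} only provides the isomorphism after application to a \emph{fresh} variable, and turning that into an instantiation at an arbitrary type is essentially the content of this very lemma. Forbidding applied variables sidesteps this circularity, leaving every application connective-headed so that \cref{thm:core-compatibility} closes the induction directly.
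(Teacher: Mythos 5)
Your proposal is correct and follows essentially the same route as the paper's proof: structural induction on the concrete type $A$, with the variable and abstraction cases handled by reflexivity (\cref{thm:type-iso-equiv}) and the higher-kind clause of \cref{def:parametric-type-isomorphism} plus $\beta$-equality, the connective-headed application case closed by \cref{thm:core-compatibility}, and concreteness invoked exactly as the paper does to exclude variable-headed applications. Your closing remark about why a variable-headed application would be circular is a nice articulation of what the paper states only tersely, but it is the same argument.
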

\begin{proof}
  By induction on the derivation of
  $\typety[\DUAL]{\Theta,X:k}{A}{\many{\strat{T}}\to\strat{S}}$ (where we assume
  that $A$ is normalized without loss of generality):
  \begin{itemize}
  \item $A = X$ is immediate, by the definition of substitution.
  \item $A = Y$ (where $Y \neq X$) follows by reflexivity
    (\cref{thm:type-iso-equiv}).
  \item $A = \fn{Y{:}\strat{T}_1} A'$ follows from the inductive hypothesis on
    the sub-derivation of the body
    $\typety[\DUAL]{\Theta,Y:\strat{T}_1,X:k}{A'}{\many{\strat{T}_2}\to\strat{S}}$
    via the definition of higher-kinded type isomorphisms
    (\cref{def:parametric-type-isomorphism}) and the fact that
    $(\fn{Y{:}\strat{T}_1}A')~Y \eq[\beta] A'$.
  \item $A = \mk{F} A_1 \dots A_n$: follows from the inductive hypothesis on the
    sub-derivations of $A_1 \dots A_n$ via \cref{thm:core-compatibility}, since
    every core $\DUAL$ connective has arguments of the appropriate kind
    ($\strat{S}$ or $\strat{T} \to \strat{S}$ for some $\strat{S}$ and
    $\strat{T}$).
  \end{itemize}
  Note that it is not possible to have a case where $A = X ~ A_1 \dots A_n$ (for
  a non-zero $n$), due to the assumption that $A$ is concrete.
\end{proof}

\begin{theorem}[Compatibility with Abstraction and Application]
\label{thm:type-iso-compat}

Parametric type isomorphism is compatible with type application and first-order
type abstraction, \ie for any concrete types $A, A'$ and
$k = \many{\strat{S}} \to \strat{R}$:
\begin{lemmaenum}
\item If
  \begin{math}
    \semtypety[\DUAL]{\Theta, X:\strat{T}}
    {A \iso A'}
    {k}
  \end{math}
  then
  \begin{math}
    \semtypety[\DUAL]{\Theta}
    {\fn{X{:}\strat{T}}A \iso \fn{X{:}\strat{T}}A'}
    {\strat{T} \to k}
    .
  \end{math}
\item If
  \begin{math}
    \semtypety[\DUAL]{\Theta}
    {A \iso A'}
    {\strat{T} \to k}
  \end{math}
  and $\semtypety[\DUAL]{\Theta}{B \iso B'}{\strat{T}}$ then
  \begin{math}
    \semtypety[\DUAL]{\Theta}
    {A~B \iso A'~B'}
    {k}
    .
  \end{math}
\end{lemmaenum}
\end{theorem}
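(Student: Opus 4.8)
The plan is to prove both parts by unfolding \cref{def:parametric-type-isomorphism} and leaning on the implicit $\alpha\beta\eta$ equality of types together with the already-established \cref{thm:type-iso-subst,thm:type-iso-equiv}. The only place where genuine computation happens is at base kind $\strat{R}$, where an isomorphism is witnessed by a pair of commands and equalities; at a higher kind $k = \many{\strat{S}}\to\strat{R}$ the relation is defined by applying both sides to fresh type variables until the result has base kind, so I would push any goal down to the base-kind situation exactly as in the final paragraph of the proof of \cref{thm:type-iso-equiv} and do the real work there.

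For part (a), I would unfold the goal $\semtypety[\DUAL]{\Theta}{\fn{X{:}\strat{T}}A \iso \fn{X{:}\strat{T}}A'}{\strat{T}\to k}$ using the higher-kind clause of \cref{def:parametric-type-isomorphism}: it suffices to exhibit $\semtypety[\DUAL]{\Theta,Y{:}\strat{T}}{(\fn{X{:}\strat{T}}A)~Y \iso (\fn{X{:}\strat{T}}A')~Y}{k}$ for a fresh $Y$. Since type-level $\beta$ is implicit, $(\fn{X{:}\strat{T}}A)~Y \eq[\beta] A\subst{X}{Y}$, so the goal becomes $\semtypety[\DUAL]{\Theta,Y{:}\strat{T}}{A\subst{X}{Y} \iso A'\subst{X}{Y}}{k}$, which is just the hypothesis $\semtypety[\DUAL]{\Theta,X{:}\strat{T}}{A \iso A'}{k}$ up to $\alpha$-renaming of the bound variable. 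This part is therefore a pure definitional chase and should present no real difficulty.

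For part (b) I would route $A~B \iso A'~B'$ through the intermediate type $A'~B$ and close with transitivity (\cref{thm:type-iso-equiv}). The first leg, $A~B \iso A'~B$, comes from \emph{instantiating} the isomorphism $A \iso A'$: unfolding it to base kind produces witnessing commands and equalities typed in a context extending $\Theta$ with $Y{:}\strat{T}$ (plus further fresh variables when $k$ is itself higher), and these are uniform in $Y$, so applying the type substitution $\asub{Y}{B}$ to them — an operation that preserves typing derivations and derivable equalities — yields witnesses for $A~B \iso A'~B$. This is precisely where the \emph{parametric} strength of \cref{def:parametric-type-isomorphism}, a single witness good for every instantiation, is indispensable. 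The second leg, $A'~B \iso A'~B'$, is an instance of \cref{thm:type-iso-subst}: taking the concrete type $A'~Z$ with a fresh hole variable $Z{:}\strat{T}$ and feeding in $\semtypety[\DUAL]{\Theta}{B \iso B'}{\strat{T}}$ gives $(A'~Z)\subst{Z}{B} \iso (A'~Z)\subst{Z}{B'}$, i.e. $A'~B \iso A'~B'$.

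The main obstacle is keeping everything inside the \emph{concrete} fragment so that \cref{thm:type-iso-subst} is applicable in the second leg: I must verify that $A'~Z$ is concrete, which is exactly what the concreteness hypothesis is designed to secure, since forbidding any subformula of the shape $X~A_1\dots A_n$ rules out $A'$ being a bare higher-order variable that would make $A'~Z = X~Z$ non-concrete — the very unprovable case highlighted just before \cref{thm:type-iso-subst}. Once concreteness of the relevant subterms is confirmed, the two legs combine by transitivity, and the statement at an arbitrary kind $k = \many{\strat{S}}\to\strat{R}$ follows from the base-kind reasoning by the reduction noted at the outset.
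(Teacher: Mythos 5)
Your proposal is correct and takes essentially the same route as the paper: part (a) is the same definitional chase through type-level $\beta$-equality, and part (b) uses the paper's exact decomposition $A~B \iso A'~B \iso A'~B'$, with the first leg obtained by substituting $B$ into the parametric witnesses (via the substitution property for typing derivations and equalities), the second leg by \cref{thm:type-iso-subst}, and transitivity (\cref{thm:type-iso-equiv}) to finish. One small caveat: your claim that concreteness of $A'$ rules out $A'$ being a bare higher-order variable is not literally true under the paper's definition (a bare variable contains no application subformula, hence is concrete, yet $A'~Z$ would then fail to be concrete); the paper's own proof tacitly makes the same assumption when it applies \cref{thm:type-iso-subst} to $A'~X$, so this is a shared imprecision rather than a defect of your argument.
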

\begin{proof}
  Part (a), which applies type isomorphisms underneath a type-level
  $\lambda$-abstraction, follows from the higher-order case of
  \cref{def:parametric-type-isomorphism}.  The conclusion is defined to be the
  same as
  \begin{math}
    \semtypety[\DUAL]{\Theta,X{:}\strat{T}}
    {(\fn{X{:}\strat{T}}A)~X \iso (\fn{X{:}\strat{T}}A')~X}
    {\strat{S}}
  \end{math}
  which follows from the assumption because
  $(\fn{X{:}\strat{T}}A)~X \eq[\beta] A$ and
  $(\fn{X{:}\strat{T}}A')~X \eq[\beta] A'$.

  We will show part (b), which applies type isomorphisms onto the operation and
  operand of an application, in two parts.  First, we focus on the operand.
  Note that the assumption that
  \begin{math}
    \semtypety[\DUAL]{\Theta}
    {A \iso A'}
    {\strat{T} \to \many{\strat{S}} \to \strat{R}}
  \end{math}
  is defined to be
  \begin{math}
    \semtypety[\DUAL]{\Theta,X{:}\strat{T},\many{Y{:}\strat{S}}}
    {A~X~\many{Y} \iso A'~X~\many{Y}}
    {\strat{R}}
  \end{math}
  which comes with witnesses of the form
  \begin{align*}
    \typecmd[\DUAL][\Theta,X{:}\strat{T},\many{Y{:}\strat{S}}]
    {x':A'X\many{Y}}{\alpha:AX\many{Y}}{c}
    &&
    \typecmd[\DUAL][\Theta,X{:}\strat{T},\many{Y{:}\strat{S}}]
    {x:AX\many{Y}}{\alpha':A'X\many{Y}}{c'}
  \end{align*}
  We also have the standard substitution property for typing derivations---if
  $\typecmd[\DUAL][\Theta,X:k]{\Gamma}{\Delta}{c}$ and
  $\typety[\DUAL]{\Theta}{A}{k}$ are derivable, then so is
  \begin{math}
    \typecmd[\DUAL][\Theta]
    {\Gamma\subst{X}{A}}{\Delta\subst{X}{A}}
    {c\subst{X}{A}}
  \end{math}%
  ---which follows by induction on the typing derivation of the command.
  Therefore, by substitution on the witnesses to the isomorphism, we get
  \begin{align*}
    \typecmd[\DUAL][\Theta,\many{Y{:}\strat{S}}]
    {x':A'B\many{Y}}{\alpha:AB\many{Y}}{c\subst{X}{B}}
    &&
    \typecmd[\DUAL][\Theta,\many{Y{:}\strat{S}}]
    {x:AB\many{Y}}{\alpha':A'B\many{Y}}{c'\subst{X}{B}}
  \end{align*}
  so 
  \begin{math}
    \semtypety[\DUAL]{\Theta,\many{Y{:}\strat{S}}}
    {A~B~\many{Y} \iso A'~B~\many{Y}}
    {\strat{R}}
  \end{math}%
  , which is
  \begin{math}
    \semtypety[\DUAL]{\Theta}
    {A~B \iso A'~B}
    {k}
  \end{math}%
  , holds.

  Next, note that by the previous \cref{thm:type-iso-subst}, we can also perform
  the following substitution of
  $\semtypety[\DUAL]{\Theta}{B \iso B'}{\strat{T}}$ into
  $\typety[\DUAL]{\Theta,X{:}\strat{T}}{A'~X}{k}$ to get
  $\semtypety[\DUAL]{\Theta}{A'~B \iso A'~B'}{k}$. Therefore by transitivity
  (\cref{thm:type-iso-equiv}) we get that
  $\semtypety[\DUAL]{\Theta}{A~B \iso A'~B'}{k}$.
\end{proof}

\subsection{Type isomorphisms are correct encodings}

We are now ready to put together the full isomorphism relating every type to its
encoding.  This is done incrementally by composing each of the base cases (that
encoding a user-declared connective in terms of the core $\DUAL$ connective)
using the fact that isomorphism is a congruence relation.
\begin{theorem}[Encoding Isomorphism]
  For all non-cyclic $\conns{G}$ extending $\DUAL$ and concrete types
  $\typety[\conns{G}]{\Theta}{A}{k}$,
  $\semtypety[\conns{G}]{\Theta}{A \iso \den[\DUAL]{A}[\conns{G}]}{k}$.
\end{theorem}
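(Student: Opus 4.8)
The plan is to proceed by induction on the (normalized) concrete type $A$, following the structure of its kinding derivation $\typety[\conns{G}]{\Theta}{A}{k}$ and exploiting the fact that the encoding $\den[\DUAL]{A}[\conns{G}]$ is homomorphic on every type former except the connective names themselves, so that $\den[\DUAL]{X}[\conns{G}] = X$, $\den[\DUAL]{\fn{X{:}k}A}[\conns{G}] = \fn{X{:}k}\den[\DUAL]{A}[\conns{G}]$, and $\den[\DUAL]{A~B}[\conns{G}] = \den[\DUAL]{A}[\conns{G}]~\den[\DUAL]{B}[\conns{G}]$. Three of the four cases are then immediate from the congruence machinery already established. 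When $A = X$ I conclude by reflexivity (\cref{thm:type-iso-equiv}); when $A = \fn{X{:}\strat{T}}A'$ the inductive hypothesis on the body together with compatibility under first-order abstraction (\cref{thm:type-iso-compat}, part (a)) closes the case; and when $A = A_1~A_2$ two applications of the inductive hypothesis and compatibility under application (\cref{thm:type-iso-compat}, part (b)) suffice. Since these congruence results require concrete arguments, I would first record the routine syntactic observation that concreteness is inherited by subterms and preserved by $\den[\DUAL]{A}[\conns{G}]$: the encoding only ever replaces a connective name $\mk{F}$ by the closed combinator $\den[\DUAL]{\mk{F}}[\conns{G}]$ built from core connectives, substituting concrete arguments for its first-order bound variables, so it never creates a subterm of the form $X~A_1\dots A_n$ with $n \geq 1$, and the hypotheses of \cref{thm:type-iso-compat} are met.

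The heart of the argument is the single remaining base case $A = \mk{F}$, where $\mk{F}$ is a user-declared connective of kind $k = \many{k} \to \strat{S}$. By the higher-kinded clause of \cref{def:parametric-type-isomorphism} it suffices to apply both sides to fresh variables $\many{X}$ and exhibit $\semtypety[\conns{G}]{\Theta,\many{X{:}k}}{\mk{F}\many{X} \iso \den[\DUAL]{\mk{F}}[\conns{G}]~\many{X}}{\strat{S}}$ at the ground discipline $\strat{S}$, where the right-hand side is $\alpha\beta\eta$-equal to the body produced by the translation in \cref{fig:compilingGeneralized}. Treating a data declaration (the {\co}data case for $\mk{G}$ being exactly dual, with $\etar[q]$ in place of $\etar[p]$), I would take as the two witnessing commands: (i) the command that case-analyzes an $\mk{F}$-value $x$ against each constructor pattern $\mk{K}_i\,\many{Y_i}\,\many{\alpha_i}\,\many{x_i}$ and emits the \emph{translated} constructor $\den[\DUAL]{\mk{K}_i\,\many{Y_i}\,\many{\alpha_i}\,\many{x_i}}[\conns{G}]$ to the output $\beta$, and (ii) the translated case analysis on $y$ whose branches rebuild the original construction $\mk{K}_i\,\many{Y_i}\,\many{\alpha_i}\,\many{x_i}$ and send it to $\alpha$. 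Both are well-typed in $\conns{G}$ because $\conns{G}$ contains the declaration of $\mk{F}$ and extends $\DUAL$.

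It then remains to verify that the two composites demanded by \cref{def:dual-type-isomorphism} equal the respective identity commands, and this is where I expect the real work to lie. The first composite case-analyzes $x$, encodes the matched constructor, immediately decodes it, and rebuilds the same constructor; the encode-then-decode round trip on the constructor's components collapses by the translated pattern-matching reduction $\den[\DUAL]{\betar[p]}[\conns{G}]$ (shown to fire in the proof of \cref{thm:operational-correspondence}), leaving a bare case-and-rebuild on $x$, which equals $\cut{x}[\ann{\mk{F}\many{X}}{\strat{S}}]{\alpha}$ by the extensionality law $\etar[p]$ for $\mk{F}$. The second composite is the round trip on the encoded representation: decode $y$, then rebuild the core structure, which is precisely the translated extensionality law $\den[\DUAL]{\etar[p]}[\conns{G}]$ already established in the proof of \cref{thm:equational-soundness} (equivalently, the derived $\eta$ law for the nested $\FromPos$, $\exists$, $\otimes$, $\oplus$, $\involp$ and shift encoding), so it equals $\cut{y}[\ann{\den[\DUAL]{\mk{F}}[\conns{G}]\many{X}}{\strat{S}}]{\beta}$.

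Once the base case is discharged, the theorem follows by threading the inductive cases through transitivity and the congruence of $\iso$ (\cref{thm:type-iso-equiv,thm:core-compatibility,thm:type-iso-compat}), so that the pointwise isomorphisms at the leaves compose into the full isomorphism $\semtypety[\conns{G}]{\Theta}{A \iso \den[\DUAL]{A}[\conns{G}]}{k}$. The main obstacle is thus entirely localized to the two round-trip equations for a single declared connective; the delicate aspect there is that the encode/decode detour passes through several nested core connectives and existential or universal packages, so the calculation must be organized so that each layer's $\beta$-reduction---unwrapping a shift, selecting an $\oplus$ or $\with$ branch, splitting an $\otimes$ or $\parr$, and unpacking an $\exists$ or $\forall$---lines up with the corresponding layer of the translated pattern, letting the whole detour telescope back to the identity via the relevant $\eta$ law.
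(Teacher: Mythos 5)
Your proposal is correct and follows essentially the same route as the paper: induction on the kinding derivation with reflexivity for variables and \cref{thm:type-iso-compat} for abstraction/application, reducing everything to the per-connective base case, which the paper discharges in \cref{sec:isomorphism-proofs} (\cref{thm:data-encoding-iso,thm:codata-encoding-iso}) using exactly your two witnessing commands (case-analyze and re-emit in each direction) and exactly your equational steps via the macro-expanded rules $\den[\DUAL]{\betar[p]}[\conns{G}]$, $\etar[p]$, $\betar[p]$, and $\den[\DUAL]{\etar[p]}[\conns{G}]$ from \cref{thm:macro-beta-eta}. Your explicit remark that concreteness is inherited by subterms and preserved by the encoding is left implicit in the paper, but is a faithful elaboration rather than a divergence.
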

\begin{proof}
  By induction on the derivation of $\typety[\conns{G}]{\Theta}{A}{k}$, where we
  assume that $A$ is normalized without loss of generality.  Note that the
  encoded type has the same kind, but only using $\DUAL$ connectives, \ie
  $\typety[\DUAL]{\Theta}{\den[\DUAL]{A}[\conns{G}]}{k}$.  Therefore, we can
  apply compatibility (\cref{thm:type-iso-compat}) in the case of an abstraction
  or application or reflexivity (\cref{thm:type-iso-equiv}) in the case of a
  type variable.  The remaining cases are for the connectives $\mk{F}$ declared
  as data or {\co}data types, which are given in \cref{sec:isomorphism-proofs}.
\end{proof}

But why is it useful to have the isomorphism $A \iso \den[\DUAL]{A}[\conns{G}]$?
Because we can use it to automatically generate a local transformation between
the two types!  This is done by using the mappings of the isomorphism (which are
represented internally as commands of the $\XDUAL$ calculus) to selectively
convert terms and {\co}terms from one type to the other.  And these
transformations aren't arbitrary; they form an \emph{equational correspondence}
\cite{SabryFelleisen1993RAPCPS} which says that the exact same equalities hold
between (\co)terms of the two types.  In other words, any optimization that was
valid on the original type $A$ is still valid after encoding it as
$\den[\DUAL]{A}[\conns{G}]$ and vice versa.
\begin{theorem}[Isomorphism Correspondence]
\label{thm:iso-eq-correspondence}
  If $\semtypety[\conns{G}]{\Theta}{A \iso B}{\strat{S}}$, then (\co)terms of
  type $A$ are in equational correspondence with (\co)terms of type $B$,
  respectively.
\end{theorem}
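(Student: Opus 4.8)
The plan is to read the isomorphism as a pair of mutually inverse coercions, lift them pointwise onto terms and coterms, and then recognize the four clauses of an equational correspondence \cite{SabryFelleisen1993RAPCPS} as, on the one hand, compatibility of each coercion with equality, and, on the other, the two round-trip identities. By \cref{def:parametric-type-isomorphism}, the hypothesis $\semtypety[\conns{G}]{\Theta}{A \iso B}{\strat{S}}$ hands us witnessing commands $\typecmd[\conns{G}][\Theta]{x:A}{\beta:B}{c'}$ and $\typecmd[\conns{G}][\Theta]{y:B}{\alpha:A}{c}$ satisfying
\begin{align*}
  \cut{\outp\beta c'}[\ann{B}{\strat{S}}]{\inp y c} &= \cut{x}[\ann{A}{\strat{S}}]{\alpha}
  &
  \cut{\outp\alpha c}[\ann{A}{\strat{S}}]{\inp x c'} &= \cut{y}[\ann{B}{\strat{S}}]{\beta}
\end{align*}
I would use these to define a forward pair of maps from $A$ to $B$, namely $\outp\beta\cut{v}[\ann{A}{\strat{S}}]{\inp x c'}$ on a term $v$ and $\inp y\cut{\outp\alpha c}[\ann{A}{\strat{S}}]{e}$ on a coterm $e$, together with the dual backward pair from $B$ to $A$, namely $\outp\alpha\cut{w}[\ann{B}{\strat{S}}]{\inp y c}$ on a term $w$ and $\inp x\cut{\outp\beta c'}[\ann{B}{\strat{S}}]{f}$ on a coterm $f$ (all with $x,y,\alpha,\beta$ chosen fresh). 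A direct reading of the rules of \cref{fig:dual-core-generic-types} shows each map is type-preserving and leaves the ambient context $\Gamma,\Delta$ untouched.

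The compatibility clauses of an equational correspondence---that equal inputs are sent to equal outputs by each of the four maps---are immediate, since every map is a fixed one-hole command context and equality in \cref{fig:dual-core-equality} is a congruence. The substance is therefore the two round-trip identities. For an $A$-term $v$, composing forward then backward gives
\begin{align*}
  \outp\alpha\cut{\outp\beta\cut{v}[\ann{A}{\strat{S}}]{\inp x c'}}[\ann{B}{\strat{S}}]{\inp y c}
  &\eq[{\swapr[\strat{S}]}]
  \outp\alpha\cut{v}[\ann{A}{\strat{S}}]{\inp x\cut{\outp\beta c'}[\ann{B}{\strat{S}}]{\inp y c}}
  \\
  &\eq[\<iso>]
  \outp\alpha\cut{v}[\ann{A}{\strat{S}}]{\inp x\cut{x}[\ann{A}{\strat{S}}]{\alpha}}
  \eq[\etatmu\etamu]
  v
\end{align*}
and the $A$-coterm case is the mirror image, using the $\tmu$-outside orientation of $\swapr[\strat{S}]$ and again the first isomorphism equation. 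The two $B$-side round trips are obtained by the same calculation with the roles of $c$ and $c'$ swapped, appealing to the second isomorphism equation. This yields all four clauses, and hence an equational correspondence on terms and, separately, on coterms.

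The one delicate point---and the main obstacle---is justifying the $\swapr[\strat{S}]$ step uniformly for all four disciplines and in \emph{both} nesting orientations: the term round trips rearrange a $\mu$-binding sitting on the left of a cut, while the coterm round trips rearrange a $\tmu$-binding sitting on the right. For $\strat{S}\in\{\CBNeed,\CoNeed\}$ one orientation is literally the axiom $\swapr[\strat{S}]$ of \cref{fig:dual-core-equality}; because that axiom is a symmetric equation whose command annotations are irrelevant to reduction, reading it backwards with the two types exchanged supplies the other orientation as well. For $\strat{S}\in\{\CBV,\CBN\}$ both orientations are derivable from $\betamu[\strat{S}]$ and $\betatmu[\strat{S}]$, exactly as the $\CBV$ and $\CBN$ instances of $\swapr$ are already derived preceding \cref{thm:type-iso-equiv}, using that every coterm is a $\CBV$-covalue and every term is a $\CBN$-value. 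In effect this theorem is the statement that composing one of the isomorphism's coercions with its inverse is the identity, promoted from the level of commands to the level of (co)terms, so it reuses precisely the rearrangement machinery that already establishes transitivity in \cref{thm:type-iso-equiv}.
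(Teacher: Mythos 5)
Your proposal is correct and follows essentially the same route as the paper's proof: the same coercion contexts built from the isomorphism witnesses, equality preservation by congruence, and round-trip identities established via the $\swapr[\strat{S}]$ equality followed by the isomorphism equations and $\etamu\etatmu$. The extra care you take with the {\co}term direction and with justifying both orientations of $\swapr[\strat{S}]$ across all four disciplines only spells out details the paper compresses into ``exactly dual'' and ``similar to \cref{thm:type-iso-equiv}.''
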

\begin{proof}
  To establish an equational correspondence, we need to find maps between terms
  of the isomorphic types $A$ and $B$.  Since the same language serves as both
  the source and the target, we will use a pair of appropriate contexts based on
  the types $A$ and $B$, that are derived from the given witnesses to the
  isomorphism, which can convert terms and {\co}terms between types $A$ and $B$.
  We can just focus on the case for converting terms, as {\co}term conversion is
  exactly dual.
  
  Suppose that $\typecmd[\conns{G}][\Theta]{x:A}{\beta:B}{c'}$ and
  $\typecmd[\conns{G}][\Theta]{y:B}{\alpha:A}{c}$ witnesses the isomorphism
  $\typety[\conns{G}]{\Theta}{A \iso B}{\strat{S}}$.  The desired contexts are
  then defined as $C = \outp\beta\cut{\hole}[\ann{A}{\strat{S}}]{\inp x c'}$ for
  converting a term from type $A$ to $B$, and
  $C' = \outp\alpha\cut{\hole}[\ann{B}{\strat{S}}]{\inp y c}$ for converting
  back.  To show that these context mappings give an equational correspondence,
  we must show that
  \begin{enumerate*}[(1)]
  \item the mappings preserve equality, and
  \item both compositions of the mappings are an identity.
  \end{enumerate*}
  The first fact follows immediately from compatibility; if two terms are equal,
  then they are equal in any context.  The second fact is derived from the
  definition of type isomorphisms at base kinds, which makes crucial use of
  $\swapr[\strat{S}]$ equality (similar to \cref{thm:type-iso-equiv}).  The
  composition for any term $\typetm[\conns{G}][\Theta]{\Gamma}{\Delta}{v}{A}$
  (assuming $x,y \notin \Gamma$ and $\alpha,\beta \notin \Delta$) is:
  \begin{align*}
    C'[C[v]]
    &\eq
    \outp\alpha
    \cut
    {\outp\beta\cutsmall{v}[\ann{A}{\strat{S}}]{\inp x c'}}
    [\ann{B}{\strat{S}}]
    {\inp y c}
    \\
    &\eq[{\swapr[\strat{S}]}]
    \outp\alpha
    \cut
    {v}
    [\ann{A}{\strat{S}}]
    {\inp x \cutsmall{\outp\beta c'}[\ann{B}{\strat{S}}]{\inp y c}}
    \\
    &\eq[\<iso>]
    \outp\alpha
    \cut{v}[\ann{A}{\strat{S}}]{\inp x \cut{x}[\ann{A}{\strat{S}}]{\alpha}}
    \\
    &\eq[{\etatmu\etamu}]
    v
  \end{align*}
  The reverse composition is also equal to the identity for any
  $\typetm[\conns{G}][\Theta]{\Gamma}{\Delta}{v'}{B}$, where we have
  \begin{math}
    \typetm[\conns{G}][\Theta]{\Gamma}{\Delta}{C[C'[v]] \eq v'}{B}
  \end{math}.
\end{proof}

\begin{corollary}[$\DUAL$ Encoding Correspondence]
\label{thm:encoding-eq-correspondence}
  For all non-cyclic $\conns{G}$ extending $\DUAL$ and closed types
  $\typety[\conns{G}]{}{A}{\strat{S}}$, the (\co)terms of type $A$ are in
  equational correspondence with the (\co)terms of type
  $\den[\DUAL]{A}[\conns{G}]$, respectively.
\end{corollary}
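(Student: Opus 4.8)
The plan is to obtain the corollary as the composition of two results already established: the Encoding Isomorphism theorem, which supplies a parametric isomorphism between a type and its $\DUAL$-encoding, and \cref{thm:iso-eq-correspondence} (Isomorphism Correspondence), which converts any base-kind parametric isomorphism into an equational correspondence between the (\co)terms of the two types. Concretely, fix a non-cyclic $\conns{G}$ extending $\DUAL$ and a closed type $\typety[\conns{G}]{}{A}{\strat{S}}$. First I would instantiate the Encoding Isomorphism theorem with the empty type environment and kind $k = \strat{S}$ to obtain $\semtypety[\conns{G}]{}{A \iso \den[\DUAL]{A}[\conns{G}]}{\strat{S}}$, using along the way that the encoding preserves kinds, so that $\den[\DUAL]{A}[\conns{G}]$ is again a closed type of the same discipline $\strat{S}$, now built solely from the core $\DUAL$ connectives.

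Because $\strat{S}$ is a base kind, the isomorphism $\semtypety[\conns{G}]{}{A \iso \den[\DUAL]{A}[\conns{G}]}{\strat{S}}$ is precisely the base case of \cref{def:dual-type-isomorphism}, i.e. it is witnessed by the two commands whose round-trips are provably the identity. Feeding this directly into \cref{thm:iso-eq-correspondence} then yields the desired equational correspondence between the (\co)terms of type $A$ and the (\co)terms of type $\den[\DUAL]{A}[\conns{G}]$: the context maps built from the witnesses preserve equality in both directions and compose to identities, which is exactly what an equational correspondence demands. In particular, every equation valid on programs of type $A$ remains valid after encoding, and conversely, so no optimizations are gained or lost by the encoding.

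The substantive mathematics has already been discharged in the two cited results---the congruence machinery running from \cref{thm:type-iso-equiv} through the compatibility theorems that underlie Encoding Isomorphism, and the use of $\swapr[\strat{S}]$ equality inside Isomorphism Correspondence---so the corollary itself is essentially glue. The one precondition that must be reconciled, and which I expect to be the only real obstacle, is that the Encoding Isomorphism theorem is stated for \emph{concrete} types whereas the corollary speaks of \emph{closed} types. I would therefore check that the relevant closed types are concrete: closedness eliminates all free type variables, so concreteness can fail only through a bound variable of higher kind occurring applied; since the quantifiers of $\DUAL$ and of (\co)data declarations bind only base-kind variables (which cannot be applied), the sole remaining case to rule out is a higher-kinded parameter of a declared connective instantiated by a non-concrete type-level $\lambda$-abstraction. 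Confirming that this situation does not arise for the closed base-kind types we encode---or else explicitly restricting attention to concrete closed types---is the delicate bookkeeping step; once concreteness is secured, the two-step composition above closes the proof immediately.
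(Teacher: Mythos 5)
Your proposal coincides with the paper's own (implicit) proof of this corollary: instantiate the Encoding Isomorphism theorem at the empty type environment and base kind $\strat{S}$ to get $\semtypety[\conns{G}]{}{A \iso \den[\DUAL]{A}[\conns{G}]}{\strat{S}}$, then apply \cref{thm:iso-eq-correspondence} to that base-kind isomorphism, exactly the two-step composition you describe. Your side remark about the hypothesis mismatch is also apt---the Encoding Isomorphism theorem is stated for \emph{concrete} types while the corollary says \emph{closed}, and closedness alone does not force concreteness (a declared connective with a higher-kinded parameter could be applied to a type-level abstraction whose bound variable occurs applied), so the corollary is really being read with the tacit assumption that the closed types in question are concrete, which is precisely the restriction you propose.
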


This transformation acts as a local encoding that can be applied to just one
sub-term or sub-{\co}term of the entire program.  In the field of compilers,
this corresponds to an optimization referred to as ``worker/wrapper.''  And the
fact that it is an equational correspondence means that the rest cannot possibly
tell the difference between the original expression and the encoded one, \ie
this encoding faithfully reflects the extensional semantics in the programmer's
definition.

\section{Related Work}
\label{sec:related-work}

There have been several polarized languages
\citep{Levy2001PhD,Zeilberger2009PhD,MunchMaccagnoni2013PhD}, each with subtly
different and incompatible restrictions on which programs are allowed to be
written.  The most common such restriction corresponds to \emph{focusing} in
logic \cite{Andreoli1992LPFPLL}; in the terms used here, focusing means that the
parameters to constructors and observers \emph{must} be values.  Rather than
impose a static focusing restriction on the syntax of programs, we instead imply
a dynamic focusing behavior---which evaluates the parameters of constructors and
observers before (\co)pattern matching---during execution.  Both static and
dynamic notions of focusing are two sides of the same coin, and amount to the
same end \citep{DownenAriola2018CCLSC}.

The other restrictions vary between different frameworks, however.  First, we
might ask where computation can happen.  In Levy's call-by-push-value
\cite{Levy2001PhD}, which is based on the $\lambda$-calculus, value types
(corresponding to positive types) can only be ascribed to values and computation
can only occur at computation types (corresponding to negative types).  In
Zeilberger's calculus of unity \cite{Zeilberger2008OUD}, which is based on the
classical sequent calculus, isolates computation in a separate syntactic
category of \emph{statements} which do not have a return type, but is
essentially the same as call-by-push-value in this regard as both frameworks
only deal with \emph{substitutable} entities, to the exclusion of named
computations which may not be duplicated or deleted.  Second, we might ask what
are the allowable types for variables and, when applicable, {\co}variables.  In
call-by-push-value, variables always have positive types, but in the calculus of
unity variables have negative types or positive \emph{atomic} types (and dually
{\co}variables have positive types or negative atomic types).  These
restrictions explain these two frameworks use a different conversion between the
two different kinds of types: $\FromPos$ introduces a positive variable and
$\ToPos$ introduces a negative one, and in the setting of the sequent calculus
$\FromNeg$ introduces a negative {\co}variable and $\ToNeg$ introduces a
positive one.  They also explain the calculus of unity's pattern matching: if
there cannot be positive variables, then pattern matching \emph{must} continue
until it reaches something non-decomposable like a $\lambda$-abstraction.

In contrast, System L allows for computation to occur at \emph{any} type, and
has no restrictions on the types of variables and {\co}variables.  In both of
these ways, the System $\DUAL$ is spiritually closest to System L.  A motivation
for this choice is that call-by-need forces more generality into the system: if
there is no computation and no variables of call-by-need types, then the entire
point of sharing work is missed.  However, the call-by-value and -name
sub-language can still be reduced down to the more restrictive style of
call-by-push-value and the calculus of unity.  We showed here that the two
styles of positive and negative shifts are isomorphic, so the brunt of the work
is to reduce to the appropriate normal form.  Additionally, negative variables
$x:A:\CBN$ can be eliminated by substituting $\obs{y}{\Force}$ for $x$ where
$y : \FromNeg A : \CBV$, which satisfies the call-by-push-value restriction on
variables.  Alternatively, the calculus of unity restriction on (\co)variables
can be satisfied by type-directed $\eta$-expansion into nested (\co)patterns.

Our notion of data and {\co}data extends the ``jumbo'' connectives of Levy's
jumbo $\lambda$-calculus \cite{Levy2006JLC} to include a treatment of
call-by-need as well the move from mono-discipline to multi-discipline.  Our
notion of (\co)data is also similar to Zeilberger's \cite{Zeilberger2009PhD}
definition of types via (\co)patterns, which is fully dual, extended with
sharing.  This work also extends the work on the operational and equational
theory of call-by-push-value \cite{FSSS2019CBPVIC} to incorporate not only
``jumbo'' connectives, but also call-by-need evaluation and its dual.

There has also been other recent work on extending call-by-push value with
call-by-need evaluation \cite{MM2019ECBPV}, with the goal of studying the impact
of computational effects on the differences between evaluation strategies.  The
main result of this work is to identify some effects which do not distinguish
certain evaluation strategies.  Similar to the fact that call-by-name and
call-by-need evaluation give the same result when the only effect is
non-termination \cite{AMOFW1995CBNLC,AriolaFelleisen1997CBNLC}, McDermott and
Mycroft showed how call-by-value and call-by-need evaluation give the same
result with non-determinism as the only effect.  As a methodology,
\cite{MM2019ECBPV} uses a call-by-push-value framework extended with a memoizing
let construct similar to the one used here, but with the key difference that the
types of call-by-need computations are the same as call-by-name ones.  The
tradeoff of this design difference is that there are fewer kinds of types (only
value and computation types, as in call-by-push-value), but in turn types
provide fewer guarantees (for instance, the type system does not ensure that
call-by-need computations must be shared).  As a consequence, it is not clear
whether or not the type isomorphisms studied here---which rely heavily on the
extensional properties which come from the connection between discipline and
types---extend to this looser type system.

The challenges we faced in establishing a robust local encoding for user-defined
(\co)data types is similar to those encountered in the area of compositional
compiler correctness
\cite{PA2014VOCMLS,BSDA2014VCSMC,NHKMDV2015Pilsner,SBSA2015CCC,GuEtAl2015DSCAL,WWS2019ASVCC,PAC2019FASC,DA2019N700CCT}.
Traditional compiler correctness only states that the compiled code exhibits the
same behavior as the source program.  As such, a compiler may only use a small
fragment of the target language with a simpler semantics, and may use invariants
between client and implementation code that are essential to its correctness.
This rules out the possibility of linking with code in the target language that
does not originate from that compiler; doing so may violate the compiler's
invariants or go outside of its fragment of the target language.  In contrast,
compositional compiler correctness gives a much stronger guarantee that allows
for safely linking with more code, whether it comes from a different compiler or
is originally written in the target language.  The difference between global
compilation and local encodings illustrated in
\cref{sec:dual-compile,sec:encoding-type-iso} is related to this difference
between traditional and compositional compiler correctness, except that the
encodings are performed \emph{within} one language (here $\XDUAL$) as opposed to
\emph{across} a source and target language of a compiler.

\section{Conclusion}
\label{sec:conclusion}

We have showed here how logical duality can help with compiling programs.  On
the one hand, the idea of polarity can be extended with other calling
conventions like call-by-need and its dual, which opens up its applicability to
the implementation of practical lazy functional languages.  On the other hand, a
handful of classical connectives can robustly compile and encode a wide variety
of user-defined types, even for languages with effects, using the notion of
parametric type isomorphisms.  Parametricity lets us apply these encodings to
common forms of type abstractions in programs, specifically parametric
polymorphism and existentially-quantified abstract data types.  Because of the
duality of $\DUAL$'s type abstractions $\forall$ and $\exists$, we can even
encode some inductive \emph{and} {\co}inductive types and computations
following the style of B\"ohm-Berarducci encodings in System F.  Unfortunately,
these encodings require too much indirection to be isomorphisms in the same
sense as \cref{sec:encoding-type-iso}.  To remedy this situation, the core
$\DUAL$ could also be extended with a form of type recursion capable of robustly
encoding inductive and {\co}inductive types, which we leave to future work.

As a next step, we intend to extend GHC's intermediate language, which already
mixes call-by-need and call-by-value types, with the missing call-by-name types.
Since it already has unboxed types \cite{PeytonJonesLaunchbury1991UVFCCNSFL}
corresponding to positive types, what remains are the fully extensional negative
types.  Crucially, we believe that negative function types would lift the idea
of \emph{call arity}---the number of arguments a function takes before ``work''
is done---from the level of terms to the level of types.  Call arity is used to
optimize curried function calls, since passing multiple arguments at once is
more efficient that computing intermediate closures as each argument is passed
one at a time.  No work is done in a negative type until receiving an $\Unwrap$
request or unpacking a $\Delay$ box, so polarized types compositionally specify
multi-argument calling conventions.

For example, a binary function on integers would have the type
$\mk{Int} \to \mk{Int} \to \ToNeg \mk{Int}$, which only computes when both
arguments are given, versus the type
$\mk{Int} \to \ToNeg \FromNeg(\mk{Int} \to \ToNeg \mk{Int})$
which specifies work is done after the first argument, breaking the call into
two steps since a closure must be evaluated and followed.  This generalizes the
existing treatment of function closures in call-by-push-value to call-by-need
closures.  The advantage of lifting this information into types is so that call
arity can be taken advantage of in higher order functions.  For example, the
$\<zipWith>$ function takes a binary function to combine two lists, pointwise,
and has the type
\begin{math}
  \forall\ann{X}{\CBNeed}. \forall\ann{Y}{\CBNeed}. \forall\ann{Z}{\CBNeed}.
  (X \to Y \to Z)
  \to
  [X]
  \to
  [Y]
  \to
  [Z]
  .
\end{math}
The body of $\<zipWith>$ does not know the call arity of the function it's
given, but in the polarized type built with negative functions:
\begin{math}
  \forall\ann{X}{\CBNeed}. \forall\ann{Y}{\CBNeed}. \forall\ann{Z}{\CBNeed}.
  \FromNeg(\ToPos X \to \ToPos Y \to \ToNeg Z)
  \to
  \ToPos{[X]}
  \to
  \ToPos{[Y]}
  \to
  \ToNeg{[Z]}
\end{math}
the interface in the type spells out that the higher-order function uses the
faster two-argument calling convention.

\section*{Acknowledgments}
\noindent
This work is supported by the National Science Foundation under grants
CCF-1719158 and CCF-1423617.


\bibliographystyle{alpha}
\bibliography{sequent,downen}

\clearpage
\appendix

\allowdisplaybreaks[4]

\section{Isomorphisms for \texorpdfstring{$\DUAL$}{D} Quantifiers and Encodings}
\label{sec:isomorphism-proofs}

\subsection{Isomorphism laws for type quantifiers}

Let's consider how to prove some of the specific isomorphism laws from
\cref{fig:dual-core-laws}, namely the quantifier laws.  Each law involving one
of the quantifiers ($\forall$ or $\exists$) can be proved in terms of the notion
of parametric type isomorphism, which means that there is \emph{uniform}
evidence that the type abstractions on both sides can be inter-converted without
loss of information.  The most important first step for establishing these
isomorphisms between polarized connectives is to set up an isomorphism between
their patterns or {\co}patterns; the rest tends to follow.  For our purposes
here, we will focus on just the positive existential quantifier $\exists$, and
note that the isomorphisms involving the negative universal quantifier $\forall$
are exactly dual to the following discussion.

The first isomorphism,
\begin{math}
  \exists X{:}\strat{S}. \exists Y{:}\strat{T}. A
  \iso
  \exists Y{:}\strat{T}. \exists X{:}\strat{S}. A
\end{math}
swaps the type components between the patterns $\pack{X}{\pack{Y}{z}}$ and
$\pack{Y}{\pack{X}{z}}$, and is effectively corresponds the commutation between
the tuples $(x,y)$ and $(y,x)$.  As such, when types are ignored the witnesses
to the isomorphism is effectively the same (up to renaming bound type variables)
in both direction, as only the type-level components are being swapped as
follows:
\begin{small}
\begin{align*}
  c
  &=
  \cut
  {y}
  [\CBV]
  {
    \sn{(\pack{Y}{x})}
    \cut
    {x}
    [\CBV]
    {\sn{(\pack{X}{z})}\cut{\pack{X}{\pack{Y}{z}}}[\CBV]{\alpha}}
  }
  :
  (
    y : \exists Y{:}\strat{T}. \exists X{:}\strat{S}. A
    \entails_\DUAL
    \alpha : \exists X{:}\strat{S}. \exists Y{:}\strat{T}. A
  )
  \\
  c'
  &=
  \cut
  {x}
  [\CBV]
  {
    \sn{(\pack{X}{y})}
    \cut
    {y}
    [\CBV]
    {\sn{(\pack{Y}{z})}\cut{\pack{Y}{\pack{X}{z}}}[\CBV]{\beta}}
  }
  :
  (
    x : \exists X{:}\strat{S}. \exists Y{:}\strat{T}. A
    \entails_\DUAL
    \beta : \exists Y{:}\strat{T}. \exists X{:}\strat{S}. A
  )
\end{align*}
\end{small}%
Notice that both compositions of these two commands are equal to the identities
$\cut{x}[\CBV]{\alpha}$ and $\cut{y}[\CBV]{\beta}$, respectively, according to
the $\betamu$, $\betatmu$, $\betar[p]$, and finally $\etar[p]$ rules, in exact
same manner that the commutation $A \otimes B \iso B \otimes A$ is derived.  And
since the above witnesses are uniform for any choice of type $A$, we can
abstract away the specific $A$ in the witnesses as
\begin{align*}
  c
  &:
  (
    y : \exists Y{:}\strat{T}. \exists X{:}\strat{S}. (F~X~Y)
    \entails_\DUAL^{F:\strat{S}\to\strat{T}\to\CBV}
    \alpha : \exists X{:}\strat{S}. \exists Y{:}\strat{T}. (F~X~Y)
  )
  \\
  c'
  &:
  (
    x : \exists X{:}\strat{S}. \exists Y{:}\strat{T}. (F~X~Y)
    \entails_\DUAL^{F:\strat{S}\to\strat{T}\to\CBV}
    \beta : \exists Y{:}\strat{T}. \exists X{:}\strat{S}. (F~X~Y)
  )
\end{align*}
to get the stronger parametric isomorphism:
\begin{align*}
  \semtypety[\DUAL]{F:\strat{S}\to\strat{T}\to\CBV}
  {
  \exists X{:}\strat{S}. \exists Y{:}\strat{T}. (F ~ X ~ Y)
  \iso
  \exists Y{:}\strat{T}. \exists X{:}\strat{S}. (F ~ X ~ Y)
  }
  {\CBV}
\end{align*}

Next, we have the isomorphism
\begin{math}
  \exists X{:}\strat{S}. B \iso B
\end{math}
where $X$ is not free in $B$, just deletes the redundant abstraction which is
never used.  In the one direction we can insert an unneeded $\pack{X}{y}$ around
a value $x : B$, and in the other direction we can extract and return that $B$
component (which is well-typed, because the existentially quantified variable
$X$ never escapes because $B$ does not reference it), like so:
\begin{align*}
  c' &= \cut{x}[\CBV]{\sn{\pack{X}{y}}\cut{y}[\CBV]{\beta}}
  &&:
  (x : \exists X{:}\strat{S}. B \entails_\DUAL \beta:B)
  \\
  c &= \cut{\pack{X}{y}}[\CBV]{\alpha}
  &&:
  (y : B \entails_\DUAL \alpha : \exists X{:}\strat{S}. B)
\end{align*}
One direction, which composes on the existential type
$\exists X{:}\strat{S}. B$, is just a straightforward calculation
\begin{align*}
  \cut{\outp \alpha c}[\CBV]{\inp x c'}
  &\eq[\etatmu]
  \cut{\outp \alpha c}[\CBV]{\sn{\pack{X}{y}}\cut{y}[\CBV]{\beta}}
  \eq[\betamu]
  \cut{\pack{X}{y}}[\CBV]{\sn{\pack{X}{y}}\cut{y}[\CBV]{\beta}}
  \eq[{\betar[p]}]
  \cut{y}[\CBV]{\beta}
\end{align*}
But the other direction, which composes on the arbitrary type $B:\CBV$, is equal
to an identity due to the fact that the existential type $B$ is call-by-value,
which gives us the strongest possible $\betamu$ rule in the following
derivation:
\begin{align*}
  \cut{\outp \beta c'}[\CBV]{\inp y c}
  &=
  \cut{\outp \beta c'}[\CBV]{\inp y \cut{\pack{X}{y}}[\CBV]{\alpha}}
  \\
  &\eq[\betamu]
  \cut
  {x}
  [\CBV]
  {\sn{\pack{X}{y}}\cut{y}[\CBV]{\inp y \cut{\pack{X}{y}}[\CBV]{\alpha}}}
  \\
  &\eq[\betatmu]
  \cut
  {x}
  [\CBV]
  {\sn{\pack{X}{y}}\cut{\pack{X}{y}}[\CBV]{\alpha}}
  \\
  &\eq[{\etar[p]}]
  \cut{x}[\CBV]{\alpha}
\end{align*}
And again, since the witness to the isomorphism is the same for any type
$B:\CBV$, we can abstract it out to get
\begin{align*}
  c'
  &:
  (x : \exists X{:}\strat{S}. Y \entails_\DUAL^{Y:\CBV} \beta:Y)
  &
  c
  &:
  (y : Y \entails_\DUAL^{Y:\CBV} \alpha : \exists X{:}\strat{S}. Y)
\end{align*}
which guarantees the stronger parametric type isomorphism:
\begin{align*}
  \semtypety[\DUAL]{Y:\CBV}{\exists X{:}\strat{S}. Y \iso Y}{\CBV}
\end{align*}

Finally, we have several isomorphisms for distributing other connectives over
quantifiers.  The most interesting ones are the ones involving negation
\begin{align*}
  \involp (\forall X{:}\strat{S}. C) &\iso \exists X{:}\strat{S}. (\involp C)
  &
  \involn (\exists X{:}\strat{S}. A) &\iso \forall X{:}\strat{S}. (\involn A)
\end{align*}
since the second one holds in intuitionistic logic, but the first one does not.
These isomorphisms relate the patterns
$\cont{(\spec{X}{\beta})} \iso \pack{X}{(\cont{\beta})}$ for the first
isomorphism and the {\co}patterns
$\throw{[\pack{X}{y}]} \iso \spec{X}{[\throw{y}]}$ for the second isomorphism.
The second isomorphism can be translated to intuitionistic logic by encoding the
negation $\involn A$ as $A \to \bot$, but the first isomorphism does not
correspond to anything in intuitionistic logic.  More formally, the witnesses to
the first isomorphism are
\begin{align*}
  c
  &=
  \cut
  {y}
  [\CBV]
  {
    \sn{\cont\delta}
    \cut
    {\fn{[\spec{X}\gamma]}\cut{\pack{X}{(\cont{\gamma})}}[\CBV]{\alpha}}
    [\CBN]
    {\delta}
  }
  :
  (
    y:\involp(\forall X{:}\strat{S}.C)
    \entails_\DUAL
    \alpha:\exists X{:}\strat{S}.(\involp C)
  )
  \\
  c'
  &=
  \cut
  {x}
  [\CBV]
  {
    \sn{(\pack{X}{z})}
    \cut
    {z}
    [\CBV]
    {\sn{\cont{\gamma}}\cut{\cont{(\spec{X}{\gamma})}}[\CBV]{\beta}}
  }
  :
  (
    x:\exists X{:}\strat{S}.(\involp C)
    \entails_\DUAL
    \beta:\involp(\forall X{:}\strat{S}.C)
  )
\end{align*}
Both compositions of these mappings are equal to identity commands as follows,
where $\etar[q][\CBN]$ and $\etar[p][\CBV]$ denote the full $\eta$ laws (as
derived in \cref{sec:dual-core-equation}) that apply to any negative ($\CBN$)
term or positive ($\CBV$) {\co}term, respectively:
\begin{align*}
  &
  \cut{\outp\alpha c}[\CBV]{\outp x c'}
  \\
  &\eq[\betamu]
  \cut
  {y}
  [\CBV]
  {
    \sn{\cont\delta}
    \cut
    {
      \fn{[\spec{X}\gamma]}
      \cut
      {\pack{X}{(\cont{\gamma})}}
      [\CBV]
      {\outp x c'}
    }
    [\CBN]
    {\delta}
  }
  \\
  &\eq[\betatmu]
  \cut
  {y}
  [\CBV]
  {
    \sn{\cont\delta}
    \cut
    {
      \fn{[\spec{X}\gamma]}
      \cut
      {\pack{X}{(\cont{\gamma})}}
      [\CBV]
      {
        \sn{(\pack{X}{z})}
        \cut
        {z}
        [\CBV]
        {\sn{\cont{\gamma}}\cut{\cont{(\spec{X}{\gamma})}}[\CBV]{\beta}}
      }
    }
    [\CBN]
    {\delta}
  }
  \\
  &\eq[{\betar[p]}]
  \cut
  {y}
  [\CBV]
  {
    \sn{\cont\delta}
    \cut
    {
      \fn{[\spec{X}\gamma]}
      \cut
      {\cont{\gamma}}
      [\CBV]
      {\sn{\cont{\gamma}}\cut{\cont{(\spec{X}{\gamma})}}[\CBV]{\beta}}
    }
    [\CBN]
    {\delta}
  }
  \\
  &\eq[{\betar[p]}]
  \cut
  {y}
  [\CBV]
  {
    \sn{\cont\delta}
    \cut
    {
      \fn{[\spec{X}\gamma]}
      \cut{\cont{(\spec{X}{\gamma})}}[\CBV]{\beta}
    }
    [\CBN]
    {\delta}
  }
  \\
  &\eq[{\betatmu}]
  \cut
  {y}
  [\CBV]
  {
    \sn{\cont\delta}
    \cut
    {
      \fn{[\spec{X}\gamma]}
      \cut
      {\outp{\alpha} \cut{\cont{\alpha}}[\CBV]{\beta}}
      [\CBN]
      {\spec{X}{\gamma}}
    }
    [\CBN]
    {\delta}
  }
  \\
  &\eq[{\etar[q][\CBN]}]
  \cut
  {y}
  [\CBV]
  {
    \sn{\cont\delta}
    \cut
    {\outp{\alpha} \cut{\cont{\alpha}}[\CBV]{\beta}}
    [\CBN]
    {\delta}
  }
  \\
  &\eq[{\betamu}]
  \cut
  {y}
  [\CBV]
  {
    \sn{\cont\delta}
    \cut{\cont{\delta}}[\CBV]{\beta}
  }
  \\
  &\eq[{\etar[p]}]
  \cut{y}[\CBV]{\beta}
\end{align*}
\begin{align*}
  &
  \cut{\outp\beta c'}[\CBV]{\inp y c}
  \\
  &\eq[\betamu]
  \cut
  {x}
  [\CBV]
  {
    \sn{(\pack{X}{z})}
    \cut
    {z}
    [\CBV]
    {
      \sn{\cont{\gamma}}
      \cut
      {\cont{(\spec{X}{\gamma})}}
      [\CBV]
      {\inp y c}
    }
  }
  \\
  &\eq[\betatmu]
  \cut
  {x}
  [\CBV]
  {
    \sn{(\pack{X}{z})}
    \cut
    {z}
    [\CBV]
    {
      \sn{\cont{\gamma}}
      \cut
      {\cont{(\spec{X}{\gamma})}}
      [\CBV]
      {
        \sn{\cont\delta}
        \cut
        {\fn{[\spec{X}\gamma]}\cut{\pack{X}{(\cont{\gamma})}}[\CBV]{\alpha}}
        [\CBN]
        {\delta}
      }
    }
  }
  \\
  &\eq[{\betar[p]}]
  \cut
  {x}
  [\CBV]
  {
    \sn{(\pack{X}{z})}
    \cut
    {z}
    [\CBV]
    {
      \sn{\cont{\gamma}}
      \cut
      {\fn{[\spec{X}\gamma]}\cut{\pack{X}{(\cont{\gamma})}}[\CBV]{\alpha}}
      [\CBN]
      {\spec{X}{\gamma}}
    }
  }
  \\
  &\eq[{\betar[q]}]
  \cut
  {x}
  [\CBV]
  {
    \sn{(\pack{X}{z})}
    \cut
    {z}
    [\CBV]
    {
      \sn{\cont{\gamma}}
      \cut{\pack{X}{(\cont{\gamma})}}[\CBV]{\alpha}
    }
  }
  \\
  &\eq[{\betatmu}]
  \cut
  {x}
  [\CBV]
  {
    \sn{(\pack{X}{z})}
    \cut
    {z}
    [\CBV]
    {
      \sn{\cont{\gamma}}
      \cut
      {\cont{\gamma}}
      [\CBV]
      {\inp y \cut{\pack{X}{y}}[\CBV]{\alpha}}
    }
  }
  \\
  &\eq[{\etar[p][\CBV]}]
  \cut
  {x}
  [\CBV]
  {
    \sn{(\pack{X}{z})}
    \cut
    {z}
    [\CBV]
    {\inp y \cut{\pack{X}{y}}[\CBV]{\alpha}}
  }
  \\
  &\eq[{\betatmu}]
  \cut
  {x}
  [\CBV]
  {
    \sn{(\pack{X}{z})}
    \cut{\pack{X}{y}}[\CBV]{\alpha}
  }
  \\
  &\eq[{\etar[p]}]
  \cut{x}[\CBV]{\alpha}
\end{align*}
Notice how the interplay between both call-by-value and call-by-name semantics
as used in the two quantifiers and negation connectives, and its impact on
ensuring the full $\eta$ laws, are essential for deriving the above isomorphism.

The other isomorphisms for distributing multiplicative connectives over
quantifiers given in \cref{fig:dual-core-laws} can be witnessed in the same way
by the following relationship between (\co)patterns that reassociate the two
different forms of grouping operations:
\begin{align*}
  ((\pack{X}{x}), y) &\iso \pack{X}{(x,y)}
  &
  [[\spec{X}\alpha], \beta] &\iso \spec{X}{[\alpha, \beta]}
\end{align*}
However, notice the conspicuously missing laws for distributing the additive
connectives over their matching quantifiers:
\begin{align*}
  (\forall X{:}\strat{S}. C) \with D &\iso \forall X{:}\strat{S}. (C \with D)
  &
  (\exists X{:}\strat{S}. A) \oplus B &\iso \exists X{:}\strat{S}. (A \oplus B)
\end{align*}
That's because there is some missing information (the instantiation choice for
the quantified type $X$) that is unavailable when the other constructor or
destructor (containing $B$ or $D$, respectively) is taken.  For example, if we
were to try to set up a correspondence between the patterns of the second
isomorphism, the best we can do is as follows:
\begin{align*}
  \inl{(\pack{X}{x})} &\gives \pack{X}{(\inl{x})}
  &
  \pack{X}{(\inl{x})} &\gives \inl{(\pack{X}{x})}
  \\
  \inr{y} &\gives \pack{(\FromPos[\strat{S}]0)}{(\inr{y})}
  &
  \pack{X}{(\inr{y})} &\gives \inr{y}
\end{align*}
Notice how there is no quantified type given in the pattern $\inr{y}$, so we can
just use an arbitrary type like $\FromPos[\strat{S}]0$ since it is not
referenced by $y$, anyway.  However, the consequence of picking an arbitrary
type is that a round-trip sends the pattern $\pack{X}{(\inr{y})}$ to
$\pack{(\FromPos[\strat{S}]0)}{(\inr{y})}$.  The same thing happens if we try to
set up a correspondence between the {\co}patterns of the first isomorphism like
so
\begin{align*}
  \fst{[\spec{X}{\alpha}]} &\gives \spec{X}{[\fst{\alpha}]}
  &
  \spec{X}{[\fst{\alpha}]} &\gives \fst{[\spec{X}{\alpha}]}
  \\
  \snd{\beta} &\gives \spec{[\FromNeg[\strat{S}]\top]}{[\snd{\beta}]}
  &
  \spec{X}{[\snd{\beta}]} &\gives \snd{\beta}
\end{align*}
where a round-trip sends the {\co}pattern $\spec{X}{[\snd{\beta}]}$ to
$\spec{[\FromNeg[\strat{S}]\top]}{[\snd{\beta}]}$.

But this problem is somewhat artificial: the semantics we give in
\cref{fig:dual-core-equality,fig:dual-core-operation} ignores these type
annotations, so they play no part in the dynamic interpretation of programs.
This fact means that we could enhance the $\eta$ laws for quantifiers to
actually ignore the chosen types as follows:
\begin{align*}
  \fn{[\spec{X}{\beta}]}\cut{x}[\ann{A}{\CBN}]{\spec{B}{\beta}}
  &=
  x
  &
  \sn{(\pack{X}{y})}\cut{\pack{B}{y}}[\ann{A}{\CBV}]{\alpha}
  &=
  \alpha
\end{align*}
These equations would be sound as long as we fully commit to a parametric and
type-irrelevant semantics for the quantifiers.  This can be done formally by
means of type erasure, a parametric reducibility candidate semantics, or just
ignoring type annotations in the results of programs.  And the enhanced $\eta$
laws above are enough to prove that the additive connectives ($\oplus$ and
$\with$) distribute over the quantifiers ($\exists$ and $\forall$,
respectively).  Intuitively, this fact states that the choice of the quantified
type (chosen by the producer for $\exists$ and chosen by the consumer for
$\forall$) cannot impact other branches of computation where it is not used.
For example, a producer of type $\forall X{:}\strat{S}. (C \with D)$, where $X$
is not referenced in $D$, must produce the exact same result when asked for its
second component no matter what type the consumer chooses for $X$.

\subsection{Equational reasoning for macro (\co)patterns}

The encodings of declared data and {\co}data types involves a macro-expansion of
flat (\co)patterns (of the form $\mk{K}\many{X}\many{\alpha}\many{y}$ and
$\mk{O}\many{X}\many{y}\many{\alpha}$) into nested (\co)patterns built from the
constructors and destructors of the core $\DUAL$ types.  Thankfully, these
encodings follow a very particular form, where the \emph{additive} (\co)patterns
(those with multiple options like $\Inj{i}$ of the $\oplus$ connective and
$\Fst{i}$ of the $\with$ connective) are never found inside the
\emph{multiplicative} (\co)patterns (those with multiple sub-patterns like
$(p_1,p_2)$ of the $\otimes$ connective and $[q_1,q_2]$ of the $\parr$
connective).  This special property makes macro-expanding patterns and
{\co}patterns rather straightforward, which can be done with the following
equations:
\begin{align*}
  \cocaseof{\many{\send{\FromNeg[\strat{S}]q}{c}}}
  &=
  \cocaseof{
    \recv
    {\FromNeg[\strat{S}]\alpha}
    {\cut{\cocaseof{\many{\send{q}{c}}}}[\CBN]{\alpha}}
  }
  &  
  \caseof{\many{\recv{\FromPos[\strat{S}]p}{c}}}
  &=
  \caseof{
    \recv
    {\FromPos[\strat{S}]x}
    {\cut{x}[\CBV]{\caseof{\many{\recv{p}{c}}}}}
  }
  \\
  \cocaseof
  {
  \begin{aligned}
    &\many{\send{\fst{q_1}}{c_1}} \\
    &\many{\send{\snd{q_2}}{c_2}}
  \end{aligned}
  }
  &=
  \cocaseof
  {
  \begin{aligned}
    &\send{\fst\alpha}{\cut{\cocaseof{\many{\send{q_1}{c_1}}}}[\CBN]{\alpha}} \\
    &\send{\snd\beta}{\cut{\cocaseof{\many{\send{q_2}{c_2}}}}[\CBN]{\beta}}
  \end{aligned}
  }
  &\!\!\!
  \caseof
  {
  \begin{aligned}
    &\many{\recv{\inl{p_1}}{c_1}} \\
    &\many{\recv{\inr{p_2}}{c_2}}
  \end{aligned}
  }
  &=
  \caseof
  {
  \begin{aligned}
    &\recv{\inl x}{\cut{x}[\CBV]{\caseof{\many{\send{p_1}{c_1}}}}} \\
    &\recv{\inr y}{\cut{y}[\CBV]{\caseof{\many{\send{p_2}{c_2}}}}}
  \end{aligned}
  }
  \\
  \fn{[\spec{X}{q}]}c
  &=
  \fn{[\spec{X}{\alpha}]}\cut{\fn{q}c}[\CBN]{\alpha}
  &
  \sn{(\pack{X}{p})}c
  &=
  \sn{(\pack{X}{y})}\cut{y}[\CBV]{\sn{p}c}
  \\
  \fn{[q_1,q_2]}{c}
  &=
  \fn{[\alpha,\beta]}
    \cut
    {\fn{q_1}\cut{\fn{q_2}{c}}[\CBN]{\beta}}
    [\CBN]
    {\alpha}
  \!\!\!\!&
  \sn{(p_1,p_2)}c
  &=
  \sn{(x,y)}
    \cut
    {x}
    [\CBV]
    {\sn{p_1}\cut{y}[\CBV]{\sn{p_2}c}}
  \\
  \fn{[\throw{p}]}c
  &=
  \fn{[\throw{x}]}{\cut{x}[\CBV]{\sn{p}c}}
  &
  \sn{(\cont{q})}c
  &=
  \sn{(\cont{\alpha})}{\cut{\sn{p}c}[\CBN]{\alpha}}
  \\
  \fn{\alpha}c
  &=
  \outp\alpha c
  &
  \sn{x}c
  &=
  \inp x c
\end{align*}
Using this macro expansion, we can verify that the correspondingly expanded
pattern-matching equalities can be derived from the simpler rules of the core
$\DUAL$ equational theory.
\begin{lemma}[Macro (\co)pattern matching]
\label{thm:macro-beta-eta}

Each of the following operational steps and equations, given by the
macro-expansion $\den[\DUAL]{-}[\conns{G}]$:
\begin{align*}
  \rewriterule{\den[\DUAL]{\betar[q]}[\conns{G}]}
  {
    \cut
    {\cocaseofsmall{\dots \mid \send{\den[\DUAL]{q}[\conns{G}]}{c} \mid \dots}}
    [\strat{S}]
    {\den[\DUAL]{q}[\conns{G}]\subs{\rho}}
    &\sreds
    c\subs{\rho}
  }
  \\
  \rewriterule{\den[\DUAL]{\betar[p]}[\conns{G}]}
  {
    \cut
    {\den[\DUAL]{p}[\conns{G}]\subs{\rho}}
    [\strat{S}]
    {\caseofsmall{\dots \mid \recv{\den[\DUAL]{p}[\conns{G}]}{c} \mid \dots}}
    &\sreds
    c\subs{\rho}
  }
  \\
  \rewriterule{\den[\DUAL]{\etar[q]}[\conns{G}]}
  {
    \cocaseof
    {
      \many
      {
        \send
        {\den[\DUAL]{q}[\conns{G}]}
        {\cutsmall{x}[\strat{S}]{\den[\DUAL]{q}[\conns{G}]}}
      }
    }
    &\eq
    x
  }
  \\
  \rewriterule{\den[\DUAL]{\etar[p]}[\conns{G}]}
  {
    \caseof
    {
      \many
      {
        \recv
        {\den[\DUAL]{p}[\conns{G}]}
        {\cutsmall{\den[\DUAL]{p}[\conns{G}]}[\strat{S}]{\alpha}}
      }
    }
    &\eq
    \alpha
  }
\end{align*}
are derivable in the core $\DUAL$-calculus, where
$\den[\DUAL]{\betar[p]}[\conns{G}]$ and $\den[\DUAL]{\betar[p]}[\conns{G}]$ take
\emph{one or more} steps.
\end{lemma}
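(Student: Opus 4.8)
The plan is to induct on the structure of the (co)pattern being expanded, peeling off one layer of the macro-expansion at each step. The guiding structural fact, recorded in the expansion equations just above, is that $\den[\DUAL]{q}[\conns{G}]$ and $\den[\DUAL]{p}[\conns{G}]$ are \emph{nested} core (co)patterns whose every layer is introduced by exactly one core connective, and in which no additive (co)pattern ($\inl$, $\inr$, $\fst$, $\snd$) ever sits inside a multiplicative one. Each expansion equation therefore exhibits the outermost connective of the expanded (co)pattern together with a recursive occurrence of the expansion of a strictly smaller (co)pattern, which is exactly the shape an induction needs.

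For the two reduction statements $\den[\DUAL]{\betar[q]}[\conns{G}]$ and $\den[\DUAL]{\betar[p]}[\conns{G}]$, I would rewrite both the (co)pattern-matching object and the matched (co)value instance with the expansion equations, so that the cut decomposes according to its outermost connective. Firing the corresponding single standard-reduction rule from \cref{fig:dual-core-operation}---a core $\betar[q]$ or $\betar[p]$ at a structural layer, or a $\betamu$ or $\betatmu$ at a leaf where the (co)pattern is a bare (co)variable---leaves a cut between the expansions of the immediate sub-(co)patterns, to which the induction hypothesis applies. Because every leaf already contributes one $\betamu$ or $\betatmu$ step, the whole match consumes \emph{at least} one step, giving the ``one or more steps'' claim; the discipline and type annotations thread through automatically, since (as already exploited in \cref{thm:operational-correspondence}) the expansion never alters the sign on any command. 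This direction is essentially bookkeeping.

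The extensional statements $\den[\DUAL]{\etar[q]}[\conns{G}]$ and $\den[\DUAL]{\etar[p]}[\conns{G}]$ are where the real difficulty lies: the core $\etar[q]$ and $\etar[p]$ rules of \cref{fig:dual-core-equality} are restricted to a bare (co)variable and so cannot be applied directly to a nested (co)pattern. The approach is again induction on the (co)pattern, but at each layer I must invoke the \emph{full} extensional law for the intermediate connective rather than the restricted core rule. This is legitimate precisely because every intermediate type produced by the expansion is purely negative---admitting the full $\etar[q][\CBN]$ law, which applies to any term---or purely positive---admitting the full $\etar[p][\CBV]$ law, which applies to any {\co}term---exactly as those full laws were derived for products and sums in \cref{sec:dual-core-equation}. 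Composing these full laws up the nesting, interleaved with $\etamu$ and $\etatmu$ at the leaves, collapses the expanded match-and-rebuild back to the identity (co)variable. The main obstacle is verifying that the full $\eta$ law genuinely holds at each intermediate type the expansion generates---that is, that the positive/negative invariant of the nested (co)pattern aligns with the discipline side-conditions of the derived extensional laws---and in handling the quantifier ($\forall$, $\exists$) and negation ($\involn$, $\involp$) layers, where deriving the full $\eta$ law itself leans on the same call-by-value/call-by-name interplay used in the quantifier isomorphism derivations earlier in this appendix.
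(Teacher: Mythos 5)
Your proposal is correct and takes essentially the same route as the paper's proof: the $\beta$ cases are verified by a layer-by-layer calculation with the core $\betar[p]$/$\betar[q]$ (and $\betamu$/$\betatmu$) rules, and the $\eta$ cases by composing the full $\eta$ laws---derived from the restricted $\etar[p]$/$\etar[q]$ rules together with $\betamu\betatmu\etamu\etatmu$ as in \cref{sec:dual-core-equation}---which are available precisely because every intermediate type of the encoding is purely positive or purely negative. The paper compresses this into a citation of the analogous calculation in \cite{Downen2017PhD} (Theorem 8.1) extended with the quantifier (\co)patterns, which you instead spell out as an explicit induction on the nesting of (\co)patterns.
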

\begin{proof}
  The macro $\den[\DUAL]{\betar[p]}[\conns{G}]$ and
  $\den[\DUAL]{\betar[q]}[\conns{G}]$ reductions can be calculated from the
  small-step $\betar[p]$ and $\betar[q]$ rules given in
  \cref{fig:dual-core-equality}.  Likewise, the macro
  $\den[\DUAL]{\etar[p]}[\conns{G}]$ and $\den[\DUAL]{\etar[q]}[\conns{G}]$
  reductions can be calculated from the small-step $\etar[p]$ and $\etar[q]$
  rules with the help of $\betamu\betatmu\etamu\etatmu$ to apply to (\co)values
  (and not just (\co)variables) as was shown in \cref{sec:dual-core}.  The
  calculations follow the same structure as in \cite{Downen2017PhD} Theorem 8.1,
  but with the addition of type-quantifying patterns for $\forall$ and $\exists$
  that are analogous to simpler cases for $\otimes$ and $\parr$ (\co)patterns.
\end{proof}

\subsection{Isomorphisms between declared types and their encoding}

\begin{lemma}[Data Encoding Isomorphism]
\label{thm:data-encoding-iso}

Given any $\conns{G}$ extending $\DUAL$, and any data type declaration
\begin{align*}
  \begin{data}{\mk{F}\many{(X:k)} : \strat{S}}
    &
    \typecon[\many{Y_1:\strat{S}_1}]
    {\many{A_1:\strat{T}_1}}{\many{B_1:\strat{R}_1}}
    {\mk{K}_1}{\mk{F}\many{X}}{\strat{S}}
    \\
    &
    \typecon[\many{Y_n:\strat{S}_n}]
    {\many{A_n:\strat{T}_1}}{\many{B_n:\strat{R}_n}}
    {\mk{K}_n}{\mk{F}\many{X}}{\strat{S}}
  \end{data}
\end{align*}
in $\conns{G}$, it follows that
\begin{math}
  \semtypety
  {}
  {
    \mk{F}
    \iso
    \den[\DUAL]{\mk{F}}[\conns{G}]
  }
  {\many{k} \to \strat{S}}
\end{math}.
\end{lemma}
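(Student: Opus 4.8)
The plan is to first peel off the type parameters. Since $\mk{F} : \many{k} \to \strat{S}$, the higher-kinded case of \cref{def:parametric-type-isomorphism} reduces the goal $\semtypety{}{\mk{F} \iso \den[\DUAL]{\mk{F}}[\conns{G}]}{\many{k}\to\strat{S}}$ to the base-kind statement $\semtypety{\many{X:k}}{\mk{F}\many{X} \iso \den[\DUAL]{\mk{F}}[\conns{G}]\,\many{X}}{\strat{S}}$ for fresh variables $\many{X:k}$. Writing $A \defeq \mk{F}\many{X}$ and $B \defeq \den[\DUAL]{\mk{F}}[\conns{G}]\,\many{X}$ (so that $B = \FromPos[\strat{S}](\cdots)$ indeed has discipline $\strat{S}$), it then suffices, by \cref{def:dual-type-isomorphism}, to exhibit a command $c' : (x{:}A \entails \beta{:}B)$ and a command $c : (y{:}B \entails \alpha{:}A)$ whose two compositions are the identity commands.

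The witnesses are the two ``re-tagging'' matchers built directly from the constructor patterns $p_i \defeq \mk{K}_i\,\many[j]{Y_{ij}}\,\many[j]{\alpha_{ij}}\,\many[j]{x_{ij}}$ and their encodings $\den[\DUAL]{p_i}[\conns{G}]$ from \cref{fig:compilingGeneralized}:
\begin{align*}
  c' &= \cut{x}[\ann{A}{\strat{S}}]{\caseof{\many[i]{\recv{p_i}{\cut{\den[\DUAL]{p_i}[\conns{G}]}[\ann{B}{\strat{S}}]{\beta}}}}}
  \\
  c &= \cut{y}[\ann{B}{\strat{S}}]{\caseof{\many[i]{\recv{\den[\DUAL]{p_i}[\conns{G}]}{\cut{p_i}[\ann{A}{\strat{S}}]{\alpha}}}}}
\end{align*}
where the matcher in $c$ is understood as the macro-expanded nested case on $B$ (the $\den[\DUAL]{-}$-image of the native $\eta$-matcher). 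Well-typedness of $c'$ follows from the native elimination rule $\mk{F}L$ of \cref{fig:dual-declared-types}, which brings $\many{Y_{ij}}$, $\many{x_{ij}}$, $\many{\alpha_{ij}}$ into scope in branch $i$, together with the core typing of the encoded construction $\den[\DUAL]{p_i}[\conns{G}]$ at type $B$; $c$ is typed dually, rebuilding $p_i$ via $\mk{F}R_i$ from the same bound (\co)variables delivered by the encoded pattern. As in the $\exists$ case of \cref{thm:core-compatibility}, the crucial structural point is that each hidden $Y_{ij}$ stays bound inside its own branch and never escapes into the ambient (\co)variables.

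For the round trips I would compute each composition using only the core equational theory of \cref{fig:dual-core-equality} and the macro (\co)pattern laws of \cref{thm:macro-beta-eta}, and---pleasantly---without needing $\swapr$. In the $A$-to-$B$-to-$A$ direction I would collapse $\inp y c$ by $\etatmu$ to the encoded case (a forcing {\co}term, hence an $E_{\strat{S}}$-{\co}value for \emph{every} discipline), fire $\betamu$ to substitute it for $\beta$ throughout $c'$, then reduce each branch by the macro $\betar[p]$ law of \cref{thm:macro-beta-eta}, so that the encoded construction $\den[\DUAL]{p_i}[\conns{G}]$ meets its own encoded pattern and selects the matching branch, leaving exactly the native $\eta$-expansion $\caseof{\many[i]{\recv{p_i}{\cut{p_i}[\ann{A}{\strat{S}}]{\alpha}}}}$ of $\alpha$, which the native $\etar[p]$ rule collapses to $\alpha$. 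The $B$-to-$A$-to-$B$ direction is symmetric: $\etatmu$, then $\betamu$, then the native $\betar[p]$ in each branch, and finally the macro $\etar[p]$ law to collapse to $\beta$. Since both witnesses mention $\many{X}$ only through $A$ and $B$, they are uniform in $\many{X}$ and therefore establish the parametric isomorphism.

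The main obstacle I anticipate is not the equational calculation---which is essentially routine once \cref{thm:macro-beta-eta} is available---but the careful bookkeeping needed to (i) confirm the witnesses type-check against the nested $\oplus/\otimes/\exists$ (and $\involp$, $\ToPos$, $\ToNeg$) shape produced by the encoding of \cref{fig:compilingGeneralized}, and (ii) verify in the macro $\betar[p]$ step that $\den[\DUAL]{p_i}[\conns{G}]$ matches precisely the $i$-th encoded pattern and no other, correctly re-binding the term, {\co}term, and hidden type components. The discipline-polymorphism over $\strat{S} \in \{\CBV,\CBN,\CBNeed,\CoNeed\}$ is a potential trap but resolves cleanly, because every case abstraction is a forcing {\co}term and hence a legitimate $E_{\strat{S}}$-{\co}value, so the single $\betamu$ step is licensed uniformly in all four disciplines.
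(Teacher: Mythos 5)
Your proposal is correct and is essentially the paper's own proof: you exhibit exactly the same pair of re-tagging witnesses (match on native patterns $p_i$ and rebuild $\den[\DUAL]{p_i}[\conns{G}]$, and vice versa), and both round trips proceed by the same sequence of steps---$\etatmu$, then $\betamu$ (licensed uniformly because case abstractions are forcing {\co}terms and hence {\co}values in every discipline), then a single $\betar[p]$ step (the macro-expanded law of \cref{thm:macro-beta-eta} in one direction, the native rule in the other), and finally the corresponding $\etar[p]$---with no use of $\swapr$, just as in the paper. The only difference is presentational: you make the descent from kind $\many{k} \to \strat{S}$ to base kind explicit via the higher-kinded clause of \cref{def:parametric-type-isomorphism}, which the paper leaves implicit by typing its witnesses in the context $\many{X{:}k}$.
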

\begin{proof}
  Given the patterns $p_i=\mk{K}_i\many{Y_i}\many{\alpha_i}\many{x_i}$ matching
  the above declaration for $i$ ranging from $1$ to $n$, the witness to the
  isomorphism is given by the commands:
  \begin{align*}
    c
    &=
    \cut
    {x'}
    [\strat{S}]
    {
      \caseof
      {
        \many
        {
          \recv
          {\den[\DUAL]{p_i}[\conns{G}]}
          {\cutsmall{p_i}[\strat{S}]{\alpha}}
        }
      }
    }
    :
    (
      x':\den[\DUAL]{\mk{F}}[\conns{G}]\many{X}
      \entails_{\conns{G}}^{\many{X:k}}
      \alpha:\mk{F}\many{X}
    )
    \\
    c'
    &=
    \cut
    {x}
    [\strat{S}]
    {
      \caseof
      {
        \many
        {
          \recv
          {p_i}
          {\cutsmall{\den[\DUAL]{p_i}[\conns{G}]}[\strat{S}]{\alpha'}}
        }
      }
    }
    :
    (
      x:\mk{F}\many{X}
      \entails_{\conns{G}}^{\many{X:k}}
      \alpha':\den[\DUAL]{\mk{F}}[\conns{G}]\many{X}
    )
  \end{align*}
  Both compositions are equal to the identity mapping.  For the identity on
  $\mk{F}$, we have
  \begin{align*}
    \cut
    {\outp{\alpha'}c'}
    [\strat{S}]
    {\inp{x'}c}
    &\eq[\etatmu]
    \cut
    {\outp{\alpha'}c'}
    {
      \caseof
      {
        \many
        {
          \recv
          {p_i}
          {\cutsmall{\den[\DUAL]{p_i}[\conns{G}]}[\strat{S}]{\alpha'}}
        }
      }
    }
    \\
    &\eq[\betamu]
    \cut
    {x}
    [\strat{S}]
    {
      \caseof
      {
        \many
        {
          \recv
          {p_i}
          {
            \cut
            {\den[\DUAL]{p_i}[\conns{G}]}
            [\strat{S}]
            {
              \caseofsmall
              {
                \many
                {
                  \recv
                  {\den[\DUAL]{p_i}[\conns{G}]}
                  {\cutsmall{p_i}[\strat{S}]{\alpha}}
                }
              }
            }
          }
        }
      }
    }
    \\
    &\eq[{\den[\DUAL]{\betar[p]}[\conns{G}]}]
    \cut
    {x}
    [\strat{S}]
    {
      \caseof
      {
        \many
        {
          \recv
          {p_i}
          {\cutsmall{\den[\DUAL]{p_i}[\conns{G}]}[\strat{S}]{\alpha'}}
        }
      }
    }
    \\
    &\eq[{\etar[p]}]
    \cut{x}[\strat{S}]{\alpha}
  \end{align*}
  And for the identity on the encoding $\den[\DUAL]{\mk{F}}[\conns{G}]$, we have
  \begin{align*}
    \cut{\outp\alpha c}[\strat{S}]{\inp x c'}
    &\eq[{\etatmu}]
    \cut
    {\outp\alpha c}
    [\strat{S}]
    {
      \caseof
      {
        \many
        {
          \recv
          {p_i}
          {\cutsmall{\den[\DUAL]{p_i}[\conns{G}]}[\strat{S}]{\alpha'}}
        }
      }
    }
    \\
    &\eq[{\betamu}]
    \cut
    {x'}
    [\strat{S}]
    {
      \caseof
      {
        \many
        {
          \recv
          {\den[\DUAL]{p_i}[\conns{G}]}
          {
            \cut
            {p_i}
            [\strat{S}]
            {
              \caseofsmall
              {
                \many
                {
                  \recv
                  {p_i}
                  {\cutsmall{\den[\DUAL]{p_i}[\conns{G}]}[\strat{S}]{\alpha'}}
                }
              }
            }
          }
        }
      }
    }
    \\
    &\eq[{\betar[p]}]
    \cut
    {x'}
    [\strat{S}]
    {
      \caseof
      {
        \many
        {
          \recv
          {\den[\DUAL]{p_i}[\conns{G}]}
          {\cutsmall{\den[\DUAL]{p_i}[\conns{G}]}[\strat{S}]{\alpha'}}
        }
      }
    }
    \\
    &\eq[{\den[\DUAL]{\etar[p]}[\conns{G}]}]
    \cut{x'}[\strat{S}]{\alpha'}
    &\qedhere
  \end{align*}
\end{proof}

\begin{lemma}[{\Co}data Encoding Isomorphism]
\label{thm:codata-encoding-iso}

Given any $\conns{G}$ extending $\DUAL$, and any {\co}data type declaration
\begin{align*}
  \begin{codata}{\mk{G}\many{(X:k)} : \strat{S}}
    &
    \typeobs[\many{Y_1:\strat{S}_1}]
    {\many{A_1:\strat{T}_1}}{\many{B_1:\strat{R}_1}}
    {\mk{O}_1}{\mk{G}\many{X}}{\strat{S}}
    \\
    &
    \typeobs[\many{Y_n:\strat{S}_n}]
    {\many{A_n:\strat{T}_1}}{\many{B_n:\strat{R}_n}}
    {\mk{O}_n}{\mk{G}\many{X}}{\strat{S}}
  \end{codata}
\end{align*}
in $\conns{G}$, it follows that
\begin{math}
  \semtypety
  {}
  {
    \mk{G}
    \iso
    \den[\DUAL]{\mk{G}}[\conns{G}]
  }
  {\many{k} \to \strat{S}}
\end{math}.
\end{lemma}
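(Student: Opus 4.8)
The plan is to dualize the proof of \cref{thm:data-encoding-iso} step for step. Where the data case builds values with constructors and consumes them by case analysis on patterns, the {\co}data case builds {\co}values with destructors and produces them by {\co}case analysis on {\co}patterns. Concretely, I would systematically swap the roles of producers and consumers, replace each $\caseof{\dots}$ with a $\cocaseof{\dots}$, and exchange the pattern-matching rules $\den[\DUAL]{\betar[p]}[\conns{G}]$, $\etar[p]$, $\betamu$, and $\etatmu$ used there for their {\co}pattern-matching duals $\den[\DUAL]{\betar[q]}[\conns{G}]$, $\etar[q]$, $\betatmu$, and $\etamu$. As in the data case, the crux is choosing two witness commands that each ``round-trip'' a destruction through the two representations and collapse back to an identity.

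First I would fix the {\co}patterns $q_i = \mk{O}_i\many{Y_i}\many{x_i}\many{\alpha_i}$ matching the declaration for $i$ ranging from $1$ to $n$, together with their macro-expansions $\den[\DUAL]{q_i}[\conns{G}]$ into the nested core $\DUAL$ {\co}patterns. The witnesses to $\semtypety{}{\mk{G} \iso \den[\DUAL]{\mk{G}}[\conns{G}]}{\many{k}\to\strat{S}}$ at base kind $\strat{S}$ are then the two {\co}case objects that translate a destruction in one representation into the other by re-emitting the matching destructor:
\begin{align*}
  c
  &=
  \cut{\cocaseof{\many[i]{\send{q_i}{\cut{y'}[\strat{S}]{\den[\DUAL]{q_i}[\conns{G}]}}}}}[\strat{S}]{\alpha}
  :
  (y':\den[\DUAL]{\mk{G}}[\conns{G}]\many{X} \entails_{\conns{G}}^{\many{X:k}} \alpha:\mk{G}\many{X})
  \\
  c'
  &=
  \cut{\cocaseof{\many[i]{\send{\den[\DUAL]{q_i}[\conns{G}]}{\cut{y}[\strat{S}]{q_i}}}}}[\strat{S}]{\alpha'}
  :
  (y:\mk{G}\many{X} \entails_{\conns{G}}^{\many{X:k}} \alpha':\den[\DUAL]{\mk{G}}[\conns{G}]\many{X})
\end{align*}
For a general kind $k = \many{k}\to\strat{S}$ I would, exactly as in \cref{thm:data-encoding-iso}, apply both sides to fresh type variables to reduce to the base case handled above, invoking the higher-kinded clause of \cref{def:dual-type-isomorphism}.

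Next I would verify that the two compositions $\cut{\outp{\alpha'}c'}[\strat{S}]{\inp{y'}c}$ and $\cut{\outp{\alpha}c}[\strat{S}]{\inp{y}c'}$ each reduce to the corresponding identity command. Dualizing the data calculation line for line, this proceeds by an $\etamu$ step to expose the inner {\co}case, a $\betatmu$ step to plug the bound value into it, then the macro reduction $\den[\DUAL]{\betar[q]}[\conns{G}]$ (supplied by \cref{thm:macro-beta-eta}) to perform the nested {\co}pattern match, and finally an $\etar[q]$ step to contract the trivial {\co}case $\cocaseof{\many[i]{\send{q_i}{\cut{y}[\strat{S}]{q_i}}}}$ back to the variable $y$.

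The step I expect to require the most care is invoking \cref{thm:macro-beta-eta} correctly: the macro $\den[\DUAL]{\betar[q]}[\conns{G}]$ reduction is really a sequence of small-step core matches against the $\FromNeg[\strat{S}]$, $\forall$, $\parr$, $\with$, $\involn$, and shift {\co}patterns produced by the encoding, and I must confirm that the type parameters $\many{Y_i}$ (now \emph{universally} bound by the $\forall$ in $\den[\DUAL]{\mk{G}}[\conns{G}]$, whereas they were existential in the data case) thread through this match so that the resulting substitution $\rho$ lands exactly as in the flat {\co}pattern rule. Non-cyclicity of $\conns{G}$ guarantees that this macro-expansion terminates. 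Since every manipulation is the literal dual of the already-verified data case, I do not expect a genuinely new phenomenon to arise beyond keeping the stoup orientation and the value/{\co}value roles consistently flipped, and keeping the discipline annotation $\strat{S}$ generic throughout.
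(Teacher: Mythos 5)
Your proposal is correct and takes essentially the same approach as the paper: the paper's entire proof of this lemma is the one-line remark that it is dual to \cref{thm:data-encoding-iso}, and your witness commands together with the reduction sequence ($\etamu$, then $\betatmu$, then the macro $\betar[q]$ match from \cref{thm:macro-beta-eta}, then $\etar[q]$) are exactly that dualization written out. Your two cocase witnesses correspond, under the term/{\co}term and pattern/{\co}pattern swap, to the paper's case witnesses in the data proof, so nothing further is needed.
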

\begin{proof}
  Dual to the proof of \cref{thm:data-encoding-iso}.  
\end{proof}

\section{Type Safety}
\label{sec:type-safety}

A side effect of disciplines linking the static and dynamic semantics is that
the operational semantics is defined over \emph{typed} (open) commands.
However, there is a convenient way to generalize the operational semantics to
work with untyped terms as well.  In particular, the types in variable binding
annotations are irrelevant for computation; only the discipline symbol
$\strat{S}$ has any impact on evaluation.  An easy way to remove the restriction
of typing is to collapse types, which we can do by adding the following rule to
the kind system:
\begin{equation*}
  \infer[\mathit{Untype}]
  {\typety[\conns{G}]{\Theta}{A = B}{\strat{S}}}
  {
    \typety[\conns{G}]{\Theta}{A}{\strat{S}}
    &
    \typety[\conns{G}]{\Theta}{B}{\strat{S}}
  }
\end{equation*}
so that for any two types $A$ and $B$ of kind $\strat{S}$ and term $v$, we have
$v : A$ if and only if $v : B$, and similarly for {\co}terms.  Therefore, there
is no longer any reason to keep track of types, only disciplines, so we may as
well write commands as $\cut{v}[\strat{S}]{e}$ instead of
$\cut{v}[\ann{A}{\strat{S}}]{e}$.
\begin{definition}[Well-disciplined]
\label{def:well-disciplined}
The \emph{discipline system} of $\DUAL$ is the type system of $\DUAL$ extended
with the $\mathit{Untype}$, and we say that commands, terms, and {\co}terms
are \emph{well-disciplined} when they have some derivation in the discipline
system.
\end{definition}
The discipline system accepts strictly more expressions than the type system,
since it expresses general recursion in pure $\DUAL$ which is not well-typed.
For example, an analogue to the usual non-terminating $\lambda$-calculus term
$(\fn x x x) ~ (\fn x x x)$ can be written as the following well-disciplined
$\DUAL$ command using negation and the shift $\ToPos$ for converting the
negative value $x$ to the positive value $\wrap{x}$:
\begin{equation*}
  \cut
  {\cocaseof{\recv{\throw{\wrap{x}}}{\cut{x}[\CBN]{\throw{\wrap{x}}}}}}
  [\CBN]
  {\inp y \cut{y}[\CBN]{\throw{\wrap{y}}}}
\end{equation*}
Similarly, a fixed-point combinator corresponding to
$\fn f (\fn x f~(x~x))~(\fn x f~(x~x))$ can be expressed in the well-disciplined
calculus as follows, where we write $\app{W}{F}$ as shorthand for
$[\throw{W},F]$:
\begin{equation*}
  \fn{[\app{\wrap{f}}{\alpha}]}
  \cut
  {
    \fn{[\app{\wrap{x}}{\beta}]}
    \cut
    {f}
    [\CBN]
    {\app{\wrap{\outp\gamma\cut{x}[\CBN]{\app{\wrap{x}}{\gamma}}}}{\beta}}
  }
  [\CBN]
  {\inp g \cut{g}[\CBN]{\app{\wrap{g}}{\alpha}}}
\end{equation*}
This approach to untyped multi-discipline semantics, via relaxing a traditional
type system, follows from \cite{Zeilberger2009PhD}.  The discipline system can
also be captured directly via the grammar of terms itself in the style of
\cite{Levy2001PhD} and \cite{MunchMaccagnoni2013PhD}, which demonstrates that
disciplines can be easily inferred from the syntax of terms.  So in this sense,
the operational semantics in \cref{fig:dual-core-operation} can also be used as
a semantics for untyped, but well-disciplined, commands.

As a notational convenience, we say that $c \not\sred$ if there is no $c'$ such
that $c \not\sred c'$ (and dually $c \sred$ if there is a $c'$ such that
$c \sred c'$).  The first basic property of call-by-(\co)need evaluation is that
(\co)values do not have any standard reduction, and symmetrically, any $\mu$- or
$\tmu$-abstraction whose body has a standard reduction cannot be a (\co)value.
\begin{lemma}[(\Co)Need Normal Forms]
\label{thm:need-normal}

\begin{lemmaenum}
\item If $\inp x c$ is a $\CBNeed$ {\co}value then $c \not\sred$.
\item If $\outp \alpha c$ is a $\CoNeed$ value then $c \not\sred$.
\item If $H$ does not bind $x$ then
  $H[\cut{x}[\ann{A}\CBNeed]{E_\CBNeed}] \not\sred$.
\item If $H$ does not bind $\alpha$ then
  $H[\cut{V_\CoNeed}[\ann{A}\CoNeed]{\alpha}] \not\sred$.
\end{lemmaenum}
\end{lemma}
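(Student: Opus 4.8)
The plan is to prove the two call-by-need statements (a) and (c) and obtain (b) and (d) for free from the self-duality of $\DUAL$ (exchanging terms with {\co}terms, $\inp x c$-bindings with $\outp\alpha c$-bindings, the disciplines $\CBNeed$ and $\CoNeed$, and the rules $\betatmuneed$ with $\betamuconeed$ and $\betamu[\NotCoNeed]$ with $\betatmu[\NotCBNeed]$). Statement (a) reduces at once to (c): since the forcing {\co}terms $F$ contain no $\inp{x}{c}$ form, the only way $\inp x c$ can be a $\CBNeed$ {\co}value is through the recursive clause, so $c = H[\cut{x}[\ann{A}{\CBNeed}]{E_\CBNeed}]$ for some $H$ not binding $x$, and (c) yields $c \not\sred$. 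Thus the whole lemma rests on (c).

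Before attacking (c) I would record two structural facts about the standard reduction of \cref{fig:dual-core-operation}. First, because the only non-base rule is the heap-congruence rule and heap contexts are closed under composition, a command reduces exactly when it can be written $H[r]$ for a heap context $H$ and a command $r$ that is a top-level redex of one of the base rules. Second --- the real workhorse --- heap decompositions are \emph{spine-unique}: at every binding the side on which the hole sits is dictated by the command's shape (a $\CBNeed$-cut forces the hole into the {\co}term after the $\tmu$-binder, a $\CoNeed$-cut into the term after the $\mu$-binder), so all heap decompositions of a fixed command lie along one spine and are prefixes of a single maximal decomposition ending in a \emph{terminal eye}. Peeling $H[\cut{x}[\ann{A}{\CBNeed}]{E_\CBNeed}]$ reproduces $H$ and then strips the bindings nested inside the {\co}value $E_\CBNeed$, always terminating at an eye of the form $\cut{u}[\ann{A}{\CBNeed}]{F}$: a bare variable cut against a forcing {\co}term.

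I would then prove (c) by induction on the size of the command, peeling one outer binding of $H$ at a time and checking that no spine node is a top-level redex. The terminal eye $\cut{u}[\ann{A}{\CBNeed}]{F}$ has a variable on the left, so no base rule fires. The bindings uncovered \emph{inside} $E_\CBNeed$ bind their variable to another variable, so the side conditions $V_\CBNeed \neq y$ of $\betatmuneed$ and $E_\CoNeed \neq \beta$ of $\betamuconeed$ exclude them, while $\betamu$, $\betatmu$ and $\betar[p]/\betar[q]$ all demand non-variable shapes. The only delicate nodes are the \emph{original} bindings of $H$, whose bound term or {\co}term is arbitrary: such a node $\cut{v_0}[\CBNeed]{\inp{z_0}{c_1}}$ (or its $\CoNeed$ dual) can be a redex only if its body is a {\co}value, i.e.\ only if the bound variable $z_0$ is \emph{free-needed} in the smaller command $c_1 = H'[\cut{x}[\ann{A}{\CBNeed}]{E_\CBNeed}]$.

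The crux --- and the step I expect to be the main obstacle --- is to rule this out. I would strengthen the induction hypothesis into a \emph{unique-demand} invariant: a command that cannot reduce and whose terminal eye is of the $\CBNeed$-forcing shape has exactly one free-needed variable and no free-co-needed {\co}variable. This follows from spine-uniqueness together with a second size induction that traces the demand chain backwards from the terminal eye: every link must be a variable-valued binding, since a non-variable-valued binding on the chain would itself be a $\betatmuneed$-redex, contradicting $c_1 \not\sred$. With this invariant in hand, $x$ is the unique free-needed variable of $c_1$, and since $H$ does not bind $x$ we have $z_0 \neq x$ free in $c_1$; uniqueness then denies that $z_0$ is free-needed, so the body is not a {\co}value and the original $H$-binding is not a redex --- and dually the absence of free-co-needed {\co}variables kills the $\CoNeed$ bindings of $H$. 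The interplay here between the two ``does not bind'' side conditions, the variable-exclusion conditions of the call-by-(co)need rules, and heaps that freely mix $\CBNeed$ and $\CoNeed$ bindings is what makes this the hard part; once the unique-demand invariant is established, everything else is routine bookkeeping, and the dual statements (b) and (d) follow by symmetry.
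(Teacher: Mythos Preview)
Your plan and the paper's share the same skeleton: reduce (a) to (c), argue (b)/(d) by duality, and prove (c) by induction on the structure of the command.  The paper's proof is a two-line sketch that names only the case $\cut{x}[\CBNeed]{\inp y H[\cut{y}[\CBNeed]{E_\CBNeed}]}$ (outer heap empty, {\co}value a $\tmu$-{\co}value) and leaves everything else implicit.  You are right that the genuinely delicate point---showing that a frame $\cut{v_0}[\CBNeed]{\inp{z_0}{c_1}}$ from the outer $H$ is never a $\betatmuneed$ redex---requires knowing that $z_0$ is not the demanded variable of $c_1$, and your unique-demand invariant is exactly the missing lemma.  The paper simply does not spell this out.

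There is, however, a small slip in your three-way case split.  You claim that ``the bindings uncovered \emph{inside} $E_\CBNeed$ bind their variable to another variable,'' so that only the frames of the outer $H$ are delicate.  That is not so: the {\co}value $E_\CBNeed = \inp{w}\,H'[\cut{w}[\CBNeed]{E'}]$ itself contains an \emph{arbitrary} heap $H'$, and each $\CBNeed$ frame of $H'$ carries an arbitrary term on the left---for instance $E_\CBNeed = \inp{w}\,\cut{()}[\CBNeed]{\inp{z'}\cut{w}[\CBNeed]{\alpha}}$ has a non-variable value $()$ at the inner frame.  So the delicate nodes recur at every level of the nested heaps, not just the outermost $H$.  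The fix is immediate with the machinery you already have: once you peel past the frame $\cut{x}[\CBNeed]{\inp w \ldots}$ (whose left side \emph{is} a variable), the body $H'[\cut{w}[\CBNeed]{E'}]$ is a strictly smaller instance of (c) with $w$ in the role of $x$ and $H'$ in the role of the outer heap, so the inductive hypothesis---including your unique-demand argument for its frames---applies directly.  Equivalently, drop the distinction between ``inner'' and ``outer'' heap frames and invoke unique-demand uniformly at every heap frame along the spine.
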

\begin{proof}
  (a) and (c) both follow simultaneously by induction on the definition of
  $\CBNeed$ {\co}values and heaps.  The interesting case is when we have
  $\cut{x}[\ann{A}\CBNeed]{\inp y H[\cut{y}[\ann{B}\CBNeed]{E_\CBNeed}]}$ where
  $H$ does not bind $y$: the $\betatmu[\CBVN]$ rule doesn't apply to the
  top-level command (because $\CBNeed \notin \{\CBV, \CBN, \CoNeed\}$), neither
  does the $\betatmuneed$ rule (because $x$ is a variable), and furthermore
  $H[\cut{y}[\ann{B}\CBNeed]{E_\CBNeed}] \not\sred$ by the inductive hypothesis.
  Likewise, (b) and (d) both follow simultaneously by induction on the
  definition of $\CoNeed$ values and heaps.
\end{proof}
\begin{corollary}[(\Co)Need Non-Normal Forms]
\label{thm:need-nonnormal}

If $c \sred c'$ then
\begin{lemmaenum}
\item $\inp x c$ is not a $\CBNeed$ {\co}value, and
\item $\outp \alpha c$ is not a $\CoNeed$ value.
\end{lemmaenum}
\end{corollary}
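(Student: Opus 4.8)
The plan is to obtain this corollary directly as the contrapositive of the immediately preceding \cref{thm:need-normal}, so that essentially no new work is required beyond unwinding the logical structure. That lemma already establishes the two implications ``if $\inp x c$ is a $\CBNeed$ {\co}value then $c \not\sred$'' and ``if $\outp \alpha c$ is a $\CoNeed$ value then $c \not\sred$'', and the corollary is exactly what these say when read in the other direction. Since the lemma carries all the inductive content, the corollary is a one-step logical consequence.

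For part (a), I would argue as follows. Assume the hypothesis $c \sred c'$, which in particular witnesses that $c$ admits at least one standard reduction step, i.e.\ $c \sred$. Suppose for contradiction that $\inp x c$ were a $\CBNeed$ {\co}value. Then part (a) of \cref{thm:need-normal} would give $c \not\sred$, directly contradicting $c \sred$. Hence $\inp x c$ cannot be a $\CBNeed$ {\co}value, which is the desired conclusion.

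Part (b) follows by the same reasoning applied to the dual half of the lemma: from $c \sred c'$ we again have $c \sred$, and if $\outp \alpha c$ were a $\CoNeed$ value then part (b) of \cref{thm:need-normal} would force $c \not\sred$, a contradiction. I expect no genuine obstacle here, since the real mathematical work---the simultaneous inductions on the mutually recursive definitions of $\CBNeed$ {\co}values together with heaps, and of $\CoNeed$ values together with heaps---has already been discharged in proving \cref{thm:need-normal}. The only minor point to watch is that the corollary is phrased with a named reduct $c'$ while the lemma speaks of the existence of any reduction; but ``$c \sred c'$ for some $c'$'' is precisely ``$c \sred$'', so the two statements line up exactly and the contraposition goes through with no gap.
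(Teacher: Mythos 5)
Your proposal is correct and matches the paper exactly: the statement is presented as a corollary of \cref{thm:need-normal} with no separate proof, precisely because it is the contrapositive of parts (a) and (b) of that lemma, which is the one-step argument you give. The only care needed is the identification of ``$c \sred c'$ for some $c'$'' with ``$c \sred$'', which you handle explicitly.
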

\begin{corollary}[Needed $\mu\tmu$-abstractions]
\label{thm:needed-abstractions}

\begin{lemmaenum}
\item If $\inp x c$ is a $\CBNeed$ {\co}value, then $x \in \neededvar(c)$.
\item If $\outp\alpha c$ is a $\CoNeed$ value, then $\alpha \in \neededvar(c)$.
\end{lemmaenum}
\end{corollary}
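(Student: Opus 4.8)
The plan is to read both parts directly off the relevant definitions, since each corollary is essentially a repackaging of the heap decomposition built into the definition of a call-by-(\co)need (\co)value, combined with the normal-form result just established. I will carry out part (a) in detail and observe that part (b) is exactly dual.

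First I would observe that a {\co}term of the form $\inp x c$ cannot be a forcing {\co}term $F$, because no production in the grammar for $F$ begins with $\inp x$. Consequently, if $\inp x c$ is a $\CBNeed$-{\co}value, the only way it can be generated is through the recursive production $E_{\CBNeed} ::= \inp x H[\cut{x}[\ann{A}{\CBNeed}]{E_{\CBNeed}}]$. Unfolding this production shows that $c = H[\cut{x}[\ann{A}{\CBNeed}]{E_{\CBNeed}}]$ for some heap context $H$ that does not bind $x$ (as stipulated by the side condition on that production) and some $\CBNeed$-{\co}value $E_{\CBNeed}$.

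Next I would supply the two facts that \cref{def:neededvar}(a) demands in order to conclude $x \in \neededvar(c)$: that $c$ admits no standard reduction, and that $c$ decomposes as a heap $H$ surrounding a command of the form $\cut{x}[\ann{A}{\strat{S}}]{E_{\strat{S}}}$ at its eye with $H$ not binding $x$. The required decomposition is precisely the one obtained in the previous step, taking $\strat{S} = \CBNeed$. For the normal-form condition, I would invoke \cref{thm:need-normal}(c): since $H$ does not bind $x$, we have $H[\cut{x}[\ann{A}\CBNeed]{E_\CBNeed}] \not\sred$, that is $c \not\sred$. (Equivalently, one could cite the contrapositive \cref{thm:need-nonnormal}(a).) With both conditions verified, the right-hand side of the biconditional in \cref{def:neededvar}(a) is satisfied, so $x \in \neededvar(c)$.

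There is no serious obstacle here; the only step requiring a moment's care is justifying that $\inp x c$ must arise from the recursive $E_{\CBNeed}$ production and not from the base case $F$, since it is this observation that pins down the heap decomposition and lets \cref{thm:need-normal} finish the argument. Part (b) then follows by the mirror-image reasoning, using the $\CoNeed$-value production $V_{\CoNeed} ::= \outp\alpha H[\cut{V_{\CoNeed}}[\ann{A}{\CoNeed}]{\alpha}]$, the clause \cref{def:neededvar}(b), and \cref{thm:need-normal}(d) in place of their call-by-need counterparts.
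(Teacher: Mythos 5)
Your proposal is correct and matches the intended derivation of this corollary in the paper: since $\inp x c$ cannot be a forcing {\co}term $F$, the grammar of $E_\CBNeed$ forces the decomposition $c = H[\cut{x}[\ann{A}{\CBNeed}]{E_\CBNeed}]$ with $H$ not binding $x$, so \cref{thm:need-normal} yields $c \not\sred$ and \cref{def:neededvar} then gives $x \in \neededvar(c)$, with part (b) exactly dual. Your use of \cref{thm:need-normal}(c) rather than part (a) is an immaterial variation, as both deliver the required normal-form condition.
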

Note that there may be some $\mu$- or $\tmu$-abstractions that are not
(\co)values, but also cannot reduce.  For example, consider
$\inp x \cut{y}[\ann{A}\CBNeed]{\alpha}$: this {\co}term is not a {\co}value
because it ignores its input, but it needs to know the value of $y$ in order to
progress.  However, the above two properties are enough to guarantee that
standard reduction is always deterministic, even with call-by-(\co)need
evaluation steps.
\begin{lemma}[Determinism]
\label{thm:standard-determinism}

If $c_1 \unsred c \sred c_2$ then $c_1$ and $c_2$ are the same command.
\end{lemma}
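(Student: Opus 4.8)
The plan is to prove determinism by structural induction on the command $c = \cut{v}[\ann{A}{\strat{S}}]{e}$, analyzing where each of the two standard reductions is rooted: either at the top of $c$ (via one of the $\betamu$, $\betatmu$, $\betamuconeed$, $\betatmuneed$, $\betar[p]$, or $\betar[q]$ rules) or in the eye of a nontrivial heap context (via the congruence rule $H[c_0] \sred H[c_0']$). First I would observe that heap extension only threads through a $\CBNeed$-{\co}term binding $\inp x \hole$ or a $\CoNeed$-term binding $\outp\alpha \hole$, so a command matching $H[c_0]$ with $H \neq \hole$ necessarily has discipline $\CBNeed$ or $\CoNeed$. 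Consequently, when $\strat{S} \in \{\CBV,\CBN\}$ the command cannot be decomposed through a nontrivial heap, and only the top-level rules are in play.

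For each fixed shape of $v$, $e$, and $\strat{S}$ I would then check that at most one top-level rule applies. This is routine once one notes that the $\betamu$ family requires $v = \outp\alpha c_0$ while the $\betatmu$ family requires $e = \inp x c_1$, and that the polarity data make these mutually exclusive: a $\mu$-abstraction $\outp\alpha c_0$ is never a $\CBV$-value $W$, and a $\tmu$-abstraction $\inp x c_1$ is never a $\CBN$-{\co}value $F$, so in the critical-pair configuration the discipline selects exactly one side. The $\betar[p]$ and $\betar[q]$ rules fire only on a constructed/destructed form paired with a matching $\caseof$/$\cocaseof$, and these shapes are disjoint from the $\mu\tmu$-abstraction cases; the distinctness of constructors and destructors then pins down both the branch index and the unique $\rho$ with $v = p\subs{\rho}$ (dually for $q$), so the successor command is forced.

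The crux, and the step I expect to be the main obstacle, is coordinating a top-level need-reduction with a heap-internal reduction in the two genuinely ambiguous configurations $\cut{V_\CBNeed}[\ann{A}{\CBNeed}]{\inp x c_0}$ and its dual $\cut{\outp\alpha c_0}[\ann{A}{\CoNeed}]{e}$. Here $\betatmuneed$ can fire only when $\inp x c_0$ is a $\CBNeed$-{\co}value (and $v$ is not a bare variable), whereas a heap-internal step requires $c_0 \sred c_0'$. I would resolve this by invoking \cref{thm:need-normal} and \cref{thm:need-nonnormal}: being a $\CBNeed$-{\co}value forces $c_0 \not\sred$, while $c_0 \sred$ forbids $\inp x c_0$ from being a {\co}value, so the top-level and heap-internal alternatives are mutually exclusive. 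The side conditions $V_\CBNeed \neq y$ and $E_\CoNeed \neq \beta$ further eliminate the chained-variable commands, which by \cref{thm:need-normal}(c,d) admit no standard reduction at all, so no rule fires there and determinism holds vacuously. The $\CoNeed$ configuration is handled symmetrically using parts (b) and (d) of the same lemmas.

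Finally, in the heap-internal case the reduction is rooted at a strictly smaller sub-command $c_0$ obtained by peeling one $\CBNeed$ or $\CoNeed$ binding, and the $v$-side (respectively the $e$-side) is not itself a reduction position for that discipline; the induction hypothesis then gives uniqueness of $c_0'$, hence of $\cut{v}[\ann{A}{\CBNeed}]{\inp x c_0'}$ and its dual. Assembling the top-level and heap-internal cases, every command has at most one standard-reduction successor, which is exactly the statement of the lemma.
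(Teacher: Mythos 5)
Your proposal is correct and follows essentially the same route as the paper's proof: induction on the command/heap structure, a case check that the top-level rules are pairwise exclusive by the shapes of (\co)values, and the key use of \cref{thm:need-normal} and \cref{thm:need-nonnormal} to make the $\betatmuneed$/$\betamuconeed$ steps mutually exclusive with heap-internal reduction. The extra details you supply (disciplines of nontrivial heap decompositions, uniqueness of the matching branch and substitution in $\betar[p]$/$\betar[q]$) are points the paper glosses over, but the underlying argument is the same.
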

\begin{proof}
  First, we show the base case when a reduction is applied to the top-level of a
  command, by induction on the syntax of $c$.  Note that for commands in which
  the $\betamu[\CBVN]$, $\betatmu[\CBVN]$, $\betar[p]$, or $\betar[q]$ rules
  apply directly, determinism is immediate because no other rules apply to the
  top of the command, and the only context $H$ such that $c = H[c']$ is the
  empty context.  The only remaining cases are for the $\betatmuneed$ and
  $\betamuconeed$ rules:
  \begin{itemize}
  \item ($\betatmuneed$)
    \begin{math}
      \cut{V_\CBNeed}[\ann{A}\CBNeed]{\inp x c}
      \sred
      c\subst{\anns{x}{A}{\CBNeed}}{V_\CBNeed}
    \end{math}
    where $\inp x c$ is a $\CBNeed$ {\co}value.  Note that $c \not\sred$ due to
    \cref{thm:need-normal}, so this $\betatmuneed$ reduction is the only one
    that applies.
  \item ($\betamuconeed$) follows dually to the above case.
  \end{itemize}

  Next, we must consider when a reduction is applied inside a heap, so that
  $c = H[c']$ and $c_1' \unsred c' \sred c_2'$, which follows by induction on
  $H$.  The base case is shown above, and the two remaining cases are for
  delayed bindings:
  \begin{itemize}
  \item Given $c = \cut{v}[\ann{A}\CBNeed]{\inp x H'[c']}$, then we know that
    $\inp x H'[c']$ is not a $\CBNeed$ {\co}value (\cref{thm:need-nonnormal}),
    so $\betatmuneed$ cannot fire to the top of this command, and the result
    follows by the inductive hypothesis.
  \item The dual case where $c = \cut{\outp \alpha H'[c']}{\ann{A}\CoNeed}{e}$
    follows dually to the above.
    \qedhere
  \end{itemize}
\end{proof}

\newcommandx{\typesubst}[8][1,2,6]
{(#3 \entails_{#1}^{#2} #4) \Rightarrow #5 : (#7 \entails_{#1}^{#6} #8)}

To demonstrate the closure of standard reduction under substitution, we need a
more formal way to determine that a substitution is well-disciplined or
well-typed.  The definition of the typing rules for substitutions, of the form
$\typesubst[\conns{G}][\Theta]{\Gamma}{\Delta}{\rho}[\Theta']{\Gamma'}{\Delta'}$
where $\rho$ transforms the sequent on the left in order the one on the right,
is:
\begin{gather*}
  \axiom
  {
    \typesubst[\conns{G}]
    [\Theta]{\Gamma}{\Delta}
    {\varepsilon}
    [\Theta]{\Gamma}{\Delta}
  }
  \\[1ex]
  \infer
  {
    \typesubst[\conns{G}]
    [\Theta]{\Gamma,x:A}{\Delta}
    {\rho\subst{\anns{x}{A}{\strat{S}}}{V_{\strat{S}}}}
    [\Theta']{\Gamma'}{\Delta'}
  }
  {
    \typesubst[\conns{G}]
    [\Theta]{\Gamma}{\Delta}
    {\rho}
    [\Theta']{\Gamma'}{\Delta'}
    &
    \typety[\conns{G}]{\Theta'}{A\subs{\rho}}{\strat{S}}
    &
    \typetmfoc[\conns{G}][\Theta']{\Gamma'}{\Delta'}{V_{\strat{S}}}{A\subs{\rho}}
  }
  \\[1ex]
  \infer
  {
    \typesubst[\conns{G}]
    [\Theta]{\Gamma}{\alpha:A,\Delta}
    {\rho\subst{\anns{\alpha}{A}{\strat{S}}}{E_{\strat{S}}}}
    [\Theta']{\Gamma'}{\Delta'}
  }
  {
    \typesubst[\conns{G}]
    [\Theta]{\Gamma}{\Delta}
    {\rho}
    [\Theta']{\Gamma'}{\Delta'}
    &
    \typety[\conns{G}]{\Theta'}{A\subs{\rho}}{\strat{S}}
    &
    \typecotmfoc[\conns{G}][\Theta']{\Gamma'}{\Delta'}{E_{\strat{S}}}{A\subs{\rho}}
  }
  \\[1ex]
  \infer
  {
    \typesubst[\conns{G}]
    [\Theta,X:k]{\Gamma}{\Delta}
    {\rho\subst{\ann{X}{k}}{A}}
    [\Theta']{\Gamma'}{\Delta'}
  }
  {
    \typety[\conns{G}]{\Theta'}{A}{k}
    &
    \typesubst[\conns{G}]
    [\Theta]{\Gamma\subst{\ann{X}{k}}{A}}{\Delta\subst{\ann{X}{k}}{A}}
    {\rho}
    [\Theta']{\Gamma'}{\Delta'}
  }
\end{gather*}
As with typing expressions (commands, terms, and {\co}terms), we can weaken the
above rules to only checking that a substitution is well-disciplined.  A
substitution $\rho$ is well-typed is well-disciplined if it has a typing
derivation possibly using the $\mathit{Untype}$ rule from
\cref{def:well-disciplined}.  With this typing and discipline system for
substitutions, we can see that
\begin{enumerate*}[(1)]
\item typing derivations,
\item the syntax of (\co)values, and
\item the reductions of the operational semantics
\end{enumerate*}
are all closed under a matching substitution.
\begin{lemma}[Typing Substitution Closure]
\label{thm:typing-closed-under-subst}

Let
$\typesubst[\conns{G}][\Theta]{\Gamma}{\Delta}{\rho}[\Theta']{\Gamma'}{\Delta'}$.
\begin{lemmaenum}
\item If $\typety[\conns{G}]{\Theta}{A}{k}$ then
  $\typety[\conns{G}]{\Theta'}{A\subs{\rho}}{k}$.
\item If $\typecmd[\conns{G}][\Theta]{\Gamma}{\Delta}{c}$ then
  $\typecmd[\conns{G}][\Theta']{\Gamma'}{\Delta'}{c\subs{\rho}}$.
\item If $\typetm[\conns{G}][\Theta]{\Gamma}{\Delta}{v}{A}$ then
  $\typetm[\conns{G}][\Theta']{\Gamma'}{\Delta'}{v\subs{\rho}}{A\subs{\rho}}$.
\item If $\typetmfoc[\conns{G}][\Theta]{\Gamma}{\Delta}{v}{A}$ then
  $\typetmfoc[\conns{G}][\Theta']{\Gamma'}{\Delta'}{v\subs{\rho}}{A\subs{\rho}}$.
\item If $\typecotm[\conns{G}][\Theta]{\Gamma}{\Delta}{e}{A}$ then
  $\typecotm[\conns{G}][\Theta']{\Gamma'}{\Delta'}{e\subs{\rho}}{A\subs{\rho}}$.
\item If $\typecotmfoc[\conns{G}][\Theta]{\Gamma}{\Delta}{e}{A}$ then
  $\typecotmfoc[\conns{G}][\Theta']{\Gamma'}{\Delta'}{e\subs{\rho}}{A\subs{\rho}}$.
\end{lemmaenum}
\end{lemma}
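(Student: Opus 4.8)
The plan is to prove part (a) first, on its own, and then to establish (b)--(f) by a single simultaneous induction on the mutually recursive typing derivations. Part (a)---preservation of kinds---depends only on the type-level fragment $\many{\asub{\ann{X}{k}}{A}}$ of $\rho$, since $\DUAL$ types never mention term variables, so I would argue it by induction on the derivation of $\typety[\conns{G}]{\Theta}{A}{k}$. The only interesting case is a type variable $X$ bound by $\rho$: the type-binding rule for substitutions supplies exactly a derivation that $X\subs{\rho}$ has the required kind over $\Theta'$, while a variable left in $\Theta'$ is untouched, and the connective, abstraction, and application cases follow immediately from the induction hypotheses. With (a) in hand, it supplies the kinding side-conditions $\typety[\conns{G}]{\Theta}{A}{\strat{S}}$ that appear in $\CutRule$, $\BR$, and $\BL$ once we descend to the term level.

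For (b)--(f) I would induct simultaneously on the six judgments, since they are defined by mutually recursive rules (commands mention terms and {\co}terms, $\AR$ and $\AL$ push back into commands, and $\FR$, $\FL$, $\BR$, $\BL$ move between focused and unfocused judgments). The base cases are $\VR$ and $\VL$: a variable $x$ in the domain of $\rho$ is rewritten to $V_{\strat{S}}$, and the corresponding substitution-typing premise states precisely that $V_{\strat{S}}$ is a focused value of type $A\subs{\rho}$ over $(\Gamma' \entails_{\conns{G}}^{\Theta'} \Delta')$; this gives (d) outright and (c) by one application of $\FR$, and dually for $\VL$, (f), and (e). Variables surviving in $\Gamma'$ (resp.\ $\Delta'$) are unchanged. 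The $\CutRule$ case and the purely structural connective-introduction and elimination cases follow directly from the induction hypotheses together with (a). The focus-shifting rules $\BR$ and $\BL$ need one extra ingredient: to re-apply $\BR$ to $V_{\strat{S}}\subs{\rho}$ I must know it is still an $\strat{S}$-value, which is exactly the companion fact that the syntactic categories $V_{\strat{S}}$ and $E_{\strat{S}}$ are closed under well-disciplined substitution, as noted just above the lemma.

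The main obstacle is the binder cases: $\AR$ ($\outp\alpha c$) and $\AL$ ($\inp x c$), the {\co}pattern-matching rules, and above all the quantifier rules, which bind a fresh type variable and enlarge $\Theta$. In each of these the induction hypothesis must be applied to a body living under a new binder, so $\rho$ has to be transported under that binder. I would dispatch this with a short weakening-and-extension sublemma for typed substitutions: by the usual freshness convention the bound (co)variable or type variable is chosen outside $\operatorname{dom}(\rho)\cup\FV(\rho)$, after which $\rho$ can be weakened on its output sequent and then extended to send the fresh binder to itself---legitimate because a (co)variable is a (co)value in \emph{every} discipline (via $\VR$ and $\VL$) and a fresh type variable retains its kind. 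The extended substitution has input and output sequents matching the premise of the rule, so the induction hypothesis applies to the body and the original rule is re-applied to conclude. The quantifier cases are where parts (a) and (b)--(f) genuinely interlock, since binding $X{:}k$ simultaneously changes $\Theta$ and the types recorded in $\Gamma$ and $\Delta$; there the type-substitution extension must be threaded through both the kinding facts of (a) and the output-context weakening at once, and getting that bookkeeping right---rather than any single deep idea---is the real work of the proof.
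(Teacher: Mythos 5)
Your proposal is correct and follows essentially the same route as the paper's own proof: a mutual induction on the typing derivations in which the (co)variable base cases are discharged by extracting the focused typing of the substituted (co)value from the derivation of the substitution judgment, with part (a) available up front since types never mention (co)term variables. The paper's proof is far terser---it flags only the variable/covariable lookups as the interesting cases---so your explicit weakening-and-identity-extension handling of the binder and quantifier cases (justified by $\VR$/$\VL$ making every fresh (co)variable a (co)value in its own discipline), and your appeal to syntactic closure of $V_{\strat{S}}$ and $E_{\strat{S}}$ under disciplined substitution in the $\BR$/$\BL$ cases, simply spell out bookkeeping the paper leaves implicit.
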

\begin{proof}
  By mutual induction on the typing derivation given in each part.  The
  interesting case is the base cases where $\rho$ replaces variable with a value
  or dually replaces a {\co}variable with a {\co}value.  In both of these cases,
  we have $x\subs{\rho} = V_{\strat{S}}$ or $\alpha\subs{\rho} = E_{\strat{S}}$,
  where the necessary typing derivation for $V_{\strat{S}}$ or $E_{\strat{S}}$
  can be found by induction on the typing derivation of the substitution $\rho$.
\end{proof}

\begin{corollary}[(\Co)Value Substitution Closure]
\label{thm:value-closed-under-subst}

Let
$\typesubst[\conns{G}][\Theta]{\Gamma}{\Delta}{\rho}[\Theta']{\Gamma'}{\Delta'}$
be a well-disciplined substitution (\ie possibly using the $\mathit{Untype}$
rule) and $\typety[\conns{G}]{\Theta}{A}{\strat{S}}$.
\begin{lemmaenum}
\item If $\typetmfoc[\conns{G}][\Theta]{\Gamma}{\Delta}{V_{\strat{S}}}{A}$ is a
  well-disciplined $\strat{S}$-value then so is
  $\typetmfoc[\conns{G}][\Theta']{\Gamma'}{\Delta'}{V_{\strat{S}}\subs{\rho}}{A\subs{\rho}}$.
\item If $\typecotmfoc[\conns{G}][\Theta]{\Gamma}{\Delta}{E_{\strat{S}}}{A}$ is
  a well-disciplined $\strat{S}$-{\co}value then so is
  $\typecotmfoc[\conns{G}][\Theta']{\Gamma'}{\Delta'}{E_{\strat{S}}\subs{\rho}}{A\subs{\rho}}$.
\end{lemmaenum}
\end{corollary}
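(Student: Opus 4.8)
The plan is to obtain this corollary directly from \cref{thm:typing-closed-under-subst}, whose proof by mutual induction on derivations applies equally in the discipline system obtained by adjoining the $\mathit{Untype}$ rule: that rule contributes only a trivial case, since from part (a) of the lemma both $A\subs{\rho}$ and $B\subs{\rho}$ keep kind $\strat{S}$, so $\mathit{Untype}$ re-derives the equation after substitution. The observation that links the lemma to the corollary is that, by the convention of \cref{rm:focus} and as enforced by the $\BR$ and $\BL$ rules, a judgment $\typetmfoc[\conns{G}][\Theta]{\Gamma}{\Delta}{v}{A}$ is derivable in the discipline system (with $A$ of kind $\strat{S}$) \emph{exactly} when $v$ is a well-disciplined $\strat{S}$-value, and dually $\typecotmfoc[\conns{G}][\Theta]{\Gamma}{\Delta}{e}{A}$ holds exactly when $e$ is a well-disciplined $\strat{S}$-{\co}value. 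Thus ``being a well-disciplined (\co)value'' and ``being derivable in the focused judgment'' are two names for the same fact.

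First I would apply parts (d) and (f) of \cref{thm:typing-closed-under-subst} to the given derivations: from $\typetmfoc[\conns{G}][\Theta]{\Gamma}{\Delta}{V_{\strat{S}}}{A}$ these yield $\typetmfoc[\conns{G}][\Theta']{\Gamma'}{\Delta'}{V_{\strat{S}}\subs{\rho}}{A\subs{\rho}}$, and symmetrically for the {\co}value. Part (a) of the same lemma guarantees that $A\subs{\rho}$ is again an $\strat{S}$-type. Reading the characterization of the previous paragraph in the reverse direction then tells us that $V_{\strat{S}}\subs{\rho}$ is a well-disciplined $\strat{S}$-value and $E_{\strat{S}}\subs{\rho}$ a well-disciplined $\strat{S}$-{\co}value, which is exactly the claim.

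The real work hides inside the $\BR$ and $\BL$ cases of \cref{thm:typing-closed-under-subst}, where a focused judgment is introduced from an ordinary term or {\co}term typing; there one must check that applying $\rho$ keeps the subject \emph{syntactically} within the value or {\co}value grammar of discipline $\strat{S}$. For variables and for the ordinary constructor and destructor forms this is immediate, since $\rho$ replaces a variable of discipline $\strat{T}$ only by a $\strat{T}$-value, and well-disciplined typing forces $\strat{T}$ to match the discipline demanded at each sub-position of the grammar. The delicate forms, where I expect the main obstacle to lie, are the call-by-{\co}need value $\outp\alpha H[\cut{V_{\CoNeed}}[\CoNeed]{\alpha}]$ and the dual call-by-need {\co}value $\inp x H[\cut{x}[\CBNeed]{E_{\CBNeed}}]$: here one must verify that $H\subs{\rho}$ is again a heap context, that its side condition (the bound (\co)variable not captured by $H$ and immediately forced in its eye) survives, and that the inner (\co)value is preserved by the inductive hypothesis. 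These follow from hygiene of substitution—$\rho$ leaves bound names untouched—together with the fact that $\rho$ maps $\CBNeed$- and $\CoNeed$-disciplined variables to matching (\co)values, so the heap skeleton and the needed (\co)variable at its eye are transported intact.
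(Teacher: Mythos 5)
Your proposal is correct and follows essentially the same route as the paper: the corollary is obtained directly from \cref{thm:typing-closed-under-subst} (parts (a), (d), and (f), read in the discipline system with $\mathit{Untype}$), together with the fact that the focused judgments introduced by the $\BR$ and $\BL$ rules characterize exactly the well-disciplined $\strat{S}$-(\co)values. Your extra discussion of the heap-context cases is a sound elaboration of what the paper leaves implicit in the lemma's ``interesting case,'' not a departure from its argument.
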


\begin{lemma}[Step Substitution Closure]
\label{thm:sred-closed-under-subst}

Let $\typecmd[\conns{G}][\Theta]{\Gamma}{\Delta}{c}$ be a well-disciplined
command and
$\typesubst[\conns{G}][\Theta]{\Gamma}{\Delta}{\rho}[\Theta']{\Gamma'}{\Delta}'$
a well-disciplined substitution.  If $c \sred c'$ then
$c\subs{\rho} \sred c'\subs{\rho}$.
\end{lemma}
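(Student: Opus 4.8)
The plan is to proceed by induction on the derivation of $c \sred c'$, with one base case for each standard reduction rule in \cref{fig:dual-core-operation} and one inductive case for the congruence rule that reduces inside a heap context. Two facts will do most of the work throughout. The first is the routine substitution-composition lemma, which lets me commute the ambient substitution $\rho$ past the substitution baked into each $\beta$ rule: under the usual freshness convention for the bound (\co)variable, $(c\subst{\anns{\alpha}{A}{\strat{S}}}{E_{\strat{S}}})\subs{\rho}$ equals $(c\subs{\rho})\subst{\anns{\alpha}{A\subs{\rho}}{\strat{S}}}{E_{\strat{S}}\subs{\rho}}$, and dually for term substitutions. The second is \cref{thm:value-closed-under-subst}, which guarantees that a well-disciplined $\rho$ carries $\strat{S}$-values to $\strat{S}$-values and $\strat{S}$-{\co}values to $\strat{S}$-{\co}values.

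For the unrestricted cases the argument is routine. For $\betamu[\NotCoNeed]$ and $\betatmu[\NotCBNeed]$, \cref{thm:value-closed-under-subst} ensures the substituted side of the cut is still a (\co)value of the correct discipline, so the same rule fires on $c\subs{\rho}$ and, by substitution composition, produces exactly $c'\subs{\rho}$. For $\betar[p]$ and $\betar[q]$, applying $\rho$ preserves the pattern or {\co}pattern skeleton---it only substitutes further inside the already-matched components---so the same branch is selected and the reduct again matches by composition. The heap congruence case follows from the inductive hypothesis applied to the command in the eye of $H$, once I observe that heaps are closed under substitution: $H\subs{\rho}$ is again a heap context, because its delayed bindings $\cut{v}[\CBNeed]{\inp x H}$ and $\cut{\outp\alpha H}[\CoNeed]{e}$ impose no value restriction on $v$ or $e$, so substituting into them cannot spoil the shape of $H$.

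The delicate cases, which I expect to be the main obstacle, are the restricted rules $\betatmuneed$ and $\betamuconeed$, where firing the rule depends on side conditions that must be shown to survive substitution. Consider $\betatmuneed$: I must re-establish that $\inp x (c\subs{\rho})$ is still a $\CBNeed$-{\co}value, which is immediate from \cref{thm:value-closed-under-subst}, and that $V_\CBNeed\subs{\rho}$ is still not a bare variable. The latter is the crux. The side condition $V_\CBNeed \neq y$ forces $V_\CBNeed$ to be a non-variable weak-head normal term, whose outermost constructor or {\co}case is untouched by $\rho$ (substitution only rewrites the sub-components), so $V_\CBNeed\subs{\rho}$ remains non-variable and the guard is preserved. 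With both side conditions re-verified, the same rule fires and its reduct matches $c'\subs{\rho}$ by substitution composition. The case $\betamuconeed$ is handled dually, using the preservation of $\CoNeed$-values and the symmetric non-variable observation.

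Finally, I would remark that no appeal to determinism (\cref{thm:standard-determinism}) is needed for this direction: I only have to exhibit \emph{one} reduction of $c\subs{\rho}$ yielding $c'\subs{\rho}$, and the construction above always reproduces the very rule witnessing $c \sred c'$. The genuinely new ingredient beyond a generic $\lambda$-calculus substitution argument is thus the preservation of the call-by-(\co)need side conditions, and this is exactly where \cref{thm:value-closed-under-subst} together with the structural non-variable observation carry the weight.
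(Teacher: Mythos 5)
Your proof is correct and takes essentially the same route as the paper's: induction over the surrounding heap context (equivalently, the derivation of the step) with case analysis on the reduction rules, using \cref{thm:value-closed-under-subst} to see that every (\co)value side condition survives substitution. The paper's own argument is just a terser version of this; your explicit check that the non-variable guards of $\betatmuneed$ and $\betamuconeed$ are preserved (because substitution cannot turn a non-variable weak-head normal term into a variable) fills in a detail the paper leaves implicit.
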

\begin{proof}
  By induction on the surrounding heap context surrounding the reduction of $c$
  and cases on the possible reduction rules applied.  Note that in each
  reduction rule which assumes that certain sub-(\co)terms are (\co)values,
  those (\co)values are closed under substitution
  (\cref{thm:value-closed-under-subst}), so the side-condition of the rule
  still applies.
\end{proof}

In \cref{sec:dual-core}, we defined the status of a command in terms of the
notion of its \emph{needed (\co)variables} (written as $\neededvar(c)$).
Importantly, determining a command's status is a decidable judgment with a
unique answer for every command.
\begin{property}[Unique status]
  \label{thm:unique-status}
  Every command $c$ is exactly one of
  \begin{enumerate*}[(1)]
  \item finished,
  \item stuck, or
  \item in progress (\ie $c \sred c'$ for some $c'$).
  \end{enumerate*}
\end{property}

\thmtypesafety*
\begin{proof}
  Preservation follows from \cref{thm:typing-closed-under-subst} by cases on the
  reduction rule applied and induction on the surrounding heap context.
  Progress follows by induction on the given typing derivation of
  $c = \cut{v}[\ann{A}{\strat{S}}]{e}$, which must end with a $\CutRule$ of the
  form
  \begin{gather*}
    \infer[\CutRule]
    {\typecmd[\conns{G}][\Theta]{\Gamma}{\Delta}{\cut{v}[\strat{S}]{e}}}
    {
      \typetm[\conns{G}][\Theta]{\Gamma}{\Delta}{v}{A}
      &
      \typety[\conns{G}]{\Theta}{A}{\strat{S}}
      &
      \typecotm[\conns{G}][\Theta]{\Gamma}{\Delta}{e}{A}
    }
  \end{gather*}
  and depends crucially on cases for the discipline symbol $\strat{S}$:
  \begin{itemize}
  \item $\strat{S} = \CBV$: First consider the term $v$.
    \begin{itemize}
    \item If $v = \outp\alpha c'$, then $c \sred[\betamu]$ for any $e$.
    \item Otherwise $v = W$, and we must next consider both $W$ and the
      {\co}term $e$.
      \begin{itemize}
      \item If $e = \inp x c'$, then $c \sred[\betatmu]$ for any $W$.
      \item If $e = \alpha$, then $\alpha \in \neededvar(c)$, so $c$ is finished
        for any $W$.
      \item If $e = \mk{O}\many{C}\many{V_{\strat{T}}}\many{E_{\strat{R}}}$,
        then by inversion on the type $A$, it must be that either $W$ is a
        $\lambda$-abstraction with a matching branch for $\mk{O}$ (in which case
        $c \sred[{\betar[q]}]$) or a variable $x$ (in which case
        $x \in \neededvar(c)$ so $c$ is finished).
      \item Otherwise $e = \caseof{\many[i]{\recv{p_i}{c'_i} \mid}}$.  Similarly
        by inversion on the type $A$, it must be that either $W$ is a
        constructor application matching one of $p_i$ (in which case
        $c \sred[{\betar[p]}]$) or a variable $x$ (in which case
        $x \in \neededvar(c)$ so $c$ is finished).
      \end{itemize}
    \end{itemize}
  \item $\strat{S} = \CBNeed$: First consider the {\co}term $e$.
    \begin{itemize}
    \item If $e = \inp x c'$ and not a {\co}value, then we must apply the
      inductive hypothesis to the sub-derivation of $c'$.
      \begin{itemize}
      \item If $c' \sred$ then $c \sred$ by closure under heap contexts.
      \item Otherwise $c'$ is finished with $x \notin \neededvar(c')$ (because
        then $\inp x c'$ would be a {\co}value by
        \cref{thm:needed-abstractions}, contradicting the assumption).  $c$ must
        also finished because $c \not\sred$ (since $\inp x c'$ is not a
        {\co}value, preventing $\betamu$ and $\betatmuneed$ reduction) and
        $\neededvar(c) = \neededvar(c')$ is not empty.
      \end{itemize}
    \item Otherwise $e = E_\CBNeed$, and we must next consider both $E_\CBNeed$
      and the term $v$.
      \begin{itemize}
      \item If $v = \outp\alpha c'$ then $c \sred[\betamu]$ for any $E_\CBNeed$.
      \item If $v = x$, then $x \in \neededvar(c)$, so $c$ is finished for any
        $E_\CBNeed$ (even a $\tmu$-{\co}value).
      \item If $v = \mk{K}\many{C}\many{E_{\strat{R}}}\many{V_{\strat{T}}}$,
        then by inversion on the type $A$, it must be that either
        $E_\CBNeed$ is a $\rlam$-abstraction with a matching branch for
        $\mk{K}$ (in which case $c \sred[{\betar[p]}]$), a $\tmu$-{\co}value (in
        which case $c \sred[{\betatmuneed}]$), or a {\co}variable $\alpha$ (in
        which case $\alpha \in \neededvar(c)$ so $c$ is finished).
      \item Otherwise $v = \cocaseof{\many[i]{\send{q_i}{c_i} \mid}}$.  By
        inversion on the type $A$, it must be that either $E_\CBNeed$ is a
        destructor application matching one of $q_i$ (in which case
        $c \sred[{\betar[q]}]$), a $\tmu$-{\co}value (in which case
        $c \sred[\betatmuneed]$), or a {\co}variable $\alpha$ (in which case
        $\alpha \in \neededvar(c)$ so $c$ is finished).
      \end{itemize}
    \end{itemize}
  \item $\strat{S} = \CBN$ or $\CoNeed$: follows dually to the above two cases.
    \qedhere
  \end{itemize}
\end{proof}



\end{document}